\newtheorem{prop}{Property}
\newtheorem{coro}{Corollary}
\newtheorem{conj}{Conjecture}
\newtheorem{Def}{Definition}
\newcommand{\revdots}{\reflectbox{$\ddots$}}
\newcommand{\dfdx}[2]{ \frac{\partial #1}{\partial #2} }
\definecolor{linkcol}{rgb}{1,0,0} 
\definecolor{citecol}{rgb}{0,0,1} 
\begin{document}

\noindent \textbf{{\LARGE A new specification of generalized linear \\ models for categorical data}}

\vspace*{1cm}

\noindent \textbf{Jean Peyhardi$^{1,3}$, Catherine Trottier$^{1,2}$, Yann Gu\'edon$^{3}$}

\vspace*{0.2cm}

{\it
\noindent $^{1}$ UM2, Institut de Math\'ematiques et Mod\'elisation de Montpellier.
 
\noindent $^{2}$ UM3, Institut de Math\'ematiques et Mod\'elisation de Montpellier.

\noindent $^{3}$ CIRAD, AGAP et Inria, Virtual Plants, 34095 Montpellier.
}

\vspace*{0.2cm}

\noindent E-mail for correspondence: \textit{jean.peyhardi@univ-montp2.fr}

\paragraph*{Abstract}
Many regression models for categorical data have been introduced in various applied fields, motivated by different paradigms. But these models are difficult to compare because their specifications are not homogeneous. The first contribution of this paper is to unify the specification of regression models for categorical response variables, whether nominal or ordinal. This unification is based on a decomposition of the link function into an inverse continuous cdf and a ratio of probabilities. This allows us to define the new family of reference models for nominal data, comparable to the adjacent, cumulative and sequential families of models for ordinal data. We introduce the notion of reversible models for ordinal data that enables to distinguish adjacent and cumulative models from sequential ones. Invariances under permutations of categories are then studied for each family. The combination of the proposed specification with the definition of reference and reversible models and the various invariance properties leads to an in-depth renewal of our view of regression models for categorical data. Finally, a family of new supervised classifiers is tested on three benchmark datasets and a biological dataset is investigated with the objective of recovering the order among categories with only partial ordering information.

\paragraph*{Keywords.} invariance under permutation, link function decomposition, models equivalence, nominal variable, ordinal variable, reversibility.

\section{Introduction}
Since generalized linear models (GLMs) were first introduced by \citet{nelder72}, many regression models for categorical data have been introduced into various applied fields such as medicine, econometrics and psychology. They have been motivated by different paradigms and their specifications are not homogeneous. Several have been defined for ordinal data \citep{agresti_ordinal}, whereas only one has been defined for nominal data: the multinomial logit model introduced by \citet{luce59} (also referred to as the baseline-category logit model \citep{agresti}). The three classical models for ordinal data are the odds proportional logit model \citep{mccullagh80}, the sequential logit model \citep{tutz1990sequential_item} (also referred to as the continuation ratio logit model \citep{dobson}), and the adjacent logit model \citep{masters1982rasch,agresti_ordinal}. They have been extended by either replacing the logistic cumulative distribution function (cdf) by other cdfs (e.g. normal or Gumbel cdfs; see the grouped Cox model for instance, also referred to as the proportional hazard model), or introducing different parametrizations of the linear predictor (i.e. changing the design matrix $Z$). No such developments have been undertaken for the multinomial logit model, and one of our goals is to fill this gap.

It should be noted that the three previously mentioned models for ordinal data, and the multinomial logit model, all rely on the logistic cdf. The difference between them stems from another element in the link function. In fact, the four corresponding link functions can be decomposed into the logistic cdf and a ratio of probabilities $r$. For the odds proportional logit model, the ratio corresponds to the cumulative probabilities $P(Y\leq j)$. Here, we propose to decompose the link function of any GLM for categorical data into a cdf $F$ and a ratio $r$. Using this decomposition it can be shown that all the models for ordinal data were defined by fixing the ratio $r$ and changing the cdf $F$ and/or the design matrix $Z$. For example, all the cumulative models were obtained by fixing the cumulative probabilities $P(Y\leq j)$ as the common structure. In the same way, the two families of adjacent and sequential models were defined with probability ratios $P(Y=j|j\leq Y \leq j+1)$ and $P(Y=j|Y\geq j)$. In the same spirit the multinomial logit model can be extended by fixing only its ratio $r$ and leaving $F$ and $Z$ unrestricted. 

Our first contribution in section \ref{unification} is to unify all these models by introducing the new $(r,F,Z)$-triplet specification. In this framework, the multinomial logit model is extended, replacing the logistic cdf by other cdfs. We thus obtain a new family of models for nominal data, comparable to the three classical families of models used for ordinal data. We can now compare all the models according to the three components: ratio $r$ for structure, cdf $F$ for fit, and design matrix $Z$ for parametrization. 

Section \ref{equivalence} investigates this comparison in depth by studying the equivalences between models of different families. We revisit three equivalences between models with different ratios, shown by \citet{laara}, \citet{tutz1991sequential} and \citet{agresti_ordinal} then we generalize two of them to obtain equalities between subfamilies of models. Section \ref{invariance} investigates equivalence between models of the same family with different permutations of the categories. Some invariance properties under permutations are given for each family of models and the notion of reversible models for ordinal data that enables to distinguish adjacent and cumulative models from sequential models is introduced.

Using the extended family of models for nominal data, and their invariance property, a new family of supervised classifiers is introduced in section \ref{applications}. Theses classifiers are tested on three benchmark datasets and compared to the logistic regression. Finally focus is made on a pear tree dataset with only a partial information about ordering assumption. We recover total order among categories using sequential models and their invariance properties. 

\section{Exponential form of the categorical distribution}
The exponential family of distributions is first introduced in the multivariate case. Consider a random vector $\boldsymbol{Y}$ of $\mathbb{R}^K$ whose distribution depends on a parameter $\boldsymbol{\theta} \in \mathbb{R}^K$. The distribution belongs to the exponential family if its density can be written as
\[ f(\boldsymbol{y};\boldsymbol{\theta},\lambda) = \exp \left\lbrace \frac{\boldsymbol{y}^t \theta - b(\boldsymbol{\theta})}{\lambda} + c(\boldsymbol{y},\lambda) \right\rbrace, \label{expo} \]
where $b$, $c$ are known functions, $\lambda$ is the nuisance parameter and $\boldsymbol{\theta}$ is the natural parameter.
%

	\subsection*{Truncated multinomial distribution}
	Let $J \geq 2$ denote the number of categories for the variable of interest and $n \geq 1$ the number of trials. Let $\pi_1,\ldots,\pi_J$ denote the probabilities of each category, such that $\sum_{j=1}^J \pi_j=1$. Let the discrete vector $\boldsymbol{\tilde{Y}}$ follow the multinomial distribution
	\[ \boldsymbol{\tilde{Y}} \sim \mathcal{M}(n,(\pi_1,\ldots,\pi_J)), \]
with $\sum_{j=1}^J \tilde{y}_j=n$. It should be remarked that the multinomial distribution is not exactly a generalization of the binomial distribution, just looking at the dimension of $\boldsymbol{\tilde{Y}}$. In fact, the constraint $\sum_{j=1}^J \pi_j=1$ expresses one of the probabilities in terms of the others. By convention we choose to put the last category aside: $\pi_J=1-\sum_{j=1}^{J-1} \pi_j$. Finally, we define the truncated multinomial distribution 
	\[ \boldsymbol{Y} \sim t\mathcal{M}(n,(\pi_1,\ldots,\pi_{J-1})), \]
where $\boldsymbol{Y}$ is the truncated vector of dimension $J-1$ with the constraint $0 \leq \sum_{j=1}^{J-1} y_j \leq n$. The probabilities $\pi_j$ are strictly positive and $\sum_{j=1}^{J-1} \pi_j < 1$ to avoid degenerate cases. Let $\boldsymbol{\pi}$ denote the truncated vector $(\pi_1,\ldots,\pi_{J-1})^t$ with $E(\boldsymbol{Y})=n\boldsymbol{\pi}$. For $J=2$ the truncated multinomial distribution is the Bernoulli distribution if $n=1$ and the binomial distribution if $n > 1$. In the GLM framework, only $\pi$ is related to the explanatory variables thus we focus on the case $n=1$. One observation $y$ is an indicator vector of the observed category (the null vector corresponding to the last category). The truncated multinomial distribution can be written as follows

\begin{equation*}
f(\boldsymbol{y};\boldsymbol{\theta}) = \exp \lbrace \boldsymbol{y}^t \theta - b(\boldsymbol{\theta}) \rbrace,
\end{equation*}
where
\begin{equation*}
\theta_j = \ln \left\lbrace \frac{\pi_j}{1- \sum_{j=1}^{J-1}\pi_k } \right\rbrace
\end{equation*}
for $j=1,\ldots,J-1$, and
\begin{equation*}
b(\boldsymbol{\theta}) = \ln \left\lbrace 1+ \sum_{j=1}^{J-1} \exp (\theta_j) \right\rbrace. 
\end{equation*}
Using the nuisance parameter $\lambda=1$ and the null function $c(y,\lambda)=0$, we see that the truncated multinomial distribution $t\mathcal{M}(\pi)$ belongs to the exponential family of dimension $K=\dim(\boldsymbol{Y})=\dim(\boldsymbol{\theta})=\dim(\boldsymbol{\pi})=J-1$.


\section{\label{unification}Specification of generalized linear models for categorical data}
Consider a regression analysis, with the multivariate response variable $Y$ and the vector of $Q$ explanatory variables $\boldsymbol{X}=(X_1, \ldots, X_Q)$ in a general form (i.e. categorical variables being represented by indicator vectors). The dimension of $\boldsymbol{X}$ is thus denoted by $p$ with $p \geq Q$. We are interested in the conditional distribution of $\boldsymbol{Y}|\boldsymbol{X}$, observing the values $(\boldsymbol{y}_i,\boldsymbol{x}_i)_{i=1,\ldots,n}$ taken by the pair $(\boldsymbol{Y},\boldsymbol{X})$. All the response variables $\boldsymbol{Y}_i$ are assumed to be conditionally independent of each other, given $\{\boldsymbol{X}_i=\boldsymbol{x}_i\}$. The variables $\boldsymbol{Y}_i$ follow the conditional truncated multinomial distribution
	\[ \boldsymbol{Y}_i|\boldsymbol{X}_i=\boldsymbol{x}_i \sim t\mathcal{M}(\boldsymbol{\pi}(\boldsymbol{x}_i)), \]
with at least $J=2$. In the following we will misuse some notations for convenience. For example, we will often forget the conditioning on $\boldsymbol{X}$ and the individual subscript $i$. Moreover, the response variable will sometimes be considered as a univariate categorical variable $Y \in \{1,\ldots,J\}$ in order to use the univariate notation $\{Y=j\}$ instead of the multivariate notation $\lbrace Y_1=0, \; \ldots,Y_{j-1}=0, \; Y_j=1, \;Y_{j+1}=0, \; \ldots,\; Y_{J-1}=0 \rbrace$.

	\subsection{Decomposition of the link function}
The definition of a GLM for categorical data includes the specification of a link function $g$ which is a diffeomorphism from $ \mathcal{M}= \lbrace \boldsymbol{\pi} \in \; ]0,1[^{J-1} \vert \sum_{j=1}^{J-1} \pi_j < 1\rbrace$ to an open subset $\mathcal{S}$ of $\mathbb{R}^{J-1}$, between the expectation $\boldsymbol{\pi}=E[\boldsymbol{Y}|\boldsymbol{X}=\boldsymbol{x}]$ and the linear predictor $\boldsymbol{\eta}=(\eta_1,...,\eta_{J-1})^t$. It also includes the parametrization of the linear predictor $\boldsymbol{\eta}$ which can be written as the product of the design matrix $Z$ (as a function of $\boldsymbol{x}$) and the vector of parameters $\boldsymbol{\beta}$ \citep{tutz_book}. Given the vector of explanatory variables $\boldsymbol{x}$, a GLM for a categorical response variable is characterized by the equation $ g(\boldsymbol{\pi}) = Z \boldsymbol{\beta} $, where there are exactly $J-1$ equations $ g_j(\boldsymbol{\pi})=\eta_j$. The model can also be represented by the following diagram
\begin{center}
\begin{tabular}{ccccc}
 & $Z$ & & $g^{-1}$ & \\
 $\mathcal{X}$ & $\longrightarrow$ & $\mathcal{S}$ & $\longrightarrow$ & $\mathcal{M}$, \\
 $\boldsymbol{x}$ & $\longmapsto$ & $\boldsymbol{\eta}$ & $\longmapsto$ & $\boldsymbol{\pi}$, \\
\end{tabular}
\end{center}
where $\mathcal{X}$ is the space of explanatory variables of dimension $p$. 

All the classical link functions (see \citet{agresti,tutz2011}) have the same structure which we propose to write as
\begin{align}
g_j = F^{-1} \circ r_j , \;\; j=1,\ldots,J-1 , \label{structure}
\end{align}
where $F$ is a continuous and strictly increasing cumulative distribution function and $\boldsymbol{r}=(r_1,\ldots,r_{J-1})^t$ is a diffeomorphism from $\mathcal{M}$ to an open subset $\mathcal{P}$ of $]0,1[^{J-1}$. Finally, given $\boldsymbol{x}$, we propose to summarize a GLM for categorical response variable by
\[ \boldsymbol{r}(\boldsymbol{\pi})=\boldsymbol{\mathcal{F}}(Z \boldsymbol{\beta}), \]
where $\boldsymbol{\mathcal{F}}(\boldsymbol{\eta}) = (F(\eta_1),\ldots,F(\eta_{J-1}))^t$. The model can thus be presented by the following diagram
\begin{center}
\begin{tabular}{ccccccc}
 & $Z$ & & $\mathcal{F}$ & & $r^{-1}$  \\
 $\mathcal{X}$ & $\longrightarrow$ & $\mathcal{S}$ & $\longrightarrow$ &  $\mathcal{P}$& $\longrightarrow$ & $\mathcal{M}$, \\
 $\boldsymbol{x}$ & $\longmapsto$ & $\boldsymbol{\eta}$ & $\longmapsto$ & $\boldsymbol{p}$  & $\longmapsto$ & $\boldsymbol{\pi}$. \\
\end{tabular}
\end{center}

	\subsection{\textit{(r,F,Z)} specification of GLMs for categorical data}
	In the following, we will describe in more detail the components $Z$, $F$, $r$ and their modalities.
	
		\paragraph{Design matrix $Z$:}
		Each linear predictor has the form $\eta_j=\alpha_j + x^t\delta_j$ with $ \boldsymbol{\beta} = (\alpha_1,\ldots,\alpha_{J-1},$ $ \boldsymbol{\delta_1}^t, \ldots, \boldsymbol{\delta_{J-1}}^t) \in \mathbb{R}^{(J-1)(1+p)} $. In general, the model is defined without constraints, like for the multinomial logit model. But linear equality constraints, called contrasts, can be added between different slopes $\boldsymbol{\delta_j}$, for instance. The most common constraint is the equality of all slopes, like for the odds proportional logit model. The corresponding constrained space $\mathcal{C} = \lbrace \boldsymbol{\beta} \in \mathbb{R}^{(J-1)(1+p)} | \boldsymbol{\delta_1}=\ldots=\boldsymbol{\delta_{J-1}} \rbrace $ may be identified to $\tilde{\mathcal{C}} = \mathbb{R}^{(J-1)+p} $. Finally, the constrained space is represented by a design matrix, containing the vector of explanatory variables $\boldsymbol{x}$. For example, the complete design $(J-1) \times (J-1)(1+p)$-matrix $Z_{c}$ (without constraint) has the following form
\[ Z_{c} = \begin{pmatrix}
1 & & &\boldsymbol{x}^t& & \\
 &\ddots & & &\ddots & \\
 & & 1& & &\boldsymbol{x}^t
\end{pmatrix}.\]		
The proportional design $(J-1) \times (J-1+p)$-matrix $Z_{p}$ (common slope) has the following form
\[ Z_{p} = \begin{pmatrix}
1 & & &\boldsymbol{x}^t \\
 &\ddots & & \vdots \\
 & & 1& \boldsymbol{x}^t
\end{pmatrix}.\]
The model without slopes ($ \boldsymbol{\delta_1} = \ldots = \boldsymbol{\delta_{J-1}} = \boldsymbol{0}$), considered as the minimal response model, is defined with different intercepts $\alpha_j$ (the design matrix is the identity matrix of dimension $J-1$). In most cases, the design matrix contains the identity matrix as minimal block, such as $Z_c$ and $Z_p$. These two matrices are sufficient to define all the classical models. It should be noted that for a given constrained space $\mathcal{C}$, there are an infinity of corresponding design matrices which will be considered as equivalent. For example
\[ Z_{p}^{\prime} = \begin{pmatrix}
1 & & &-\boldsymbol{x}^t \\
\vdots &\ddots & & \vdots \\
1 &\ldots & 1& -\boldsymbol{x}^t
\end{pmatrix}\]
is equivalent to $Z_p$. In the following, the design matrices $Z_p$ and $Z_c$ are considered as the representative element of their equivalence class and the set of all possible design matrices $Z$, for a given vector of explanatory variables $\boldsymbol{x}$, will be denoted by $\mathfrak{Z}$. This set $\mathfrak{Z}$ contains all design matrices between $Z_p$ and $Z_c$, with number of columns between $J-1+p$ and $(J-1)(1+p)$.
	
		\paragraph{Cumulative distribution function $F$:} 
		The most commonly used symmetric distributions are the logistic and normal distributions, but Laplace and Student distributions may also be useful. The most commonly used asymmetric distribution is the Gumbel min distribution
\[ F(w) = 1 - \exp \left\lbrace -\exp (w) \right\rbrace.\]
Let $\tilde{F}$ denote the symmetric of $F$ (i.e. $\tilde{F}(w)=1-F(-w)$). The symmetric of the Gumbel min distribution is the Gumbel max distribution
\[ \tilde{F}(w) = \exp \left\lbrace -\exp (-w) \right\rbrace.\]
All these cdfs, being diffeomorphisms from $\mathbb{R}$ to $]0,1[$, ease the interpretation of estimated parameter $\hat{\boldsymbol{\beta}}$. The exponential distribution, which is a diffeomorphism from $\mathbb{R}_+^*$ to $]0,1[$, is also used but the positivity constraint on predictors may lead to divergence of estimates. The Pareto distribution is defined with the shape parameter $a \geq 1$ by
\[ F(w) = 1 - w^{-a}, \]
and shares with the exponential distribution constraint on the support $[1,+\infty[$. 
			As noted by \citet{tutz1991sequential}, ``distribution functions generate the same model if they are connected by a linear transformation". For instance, if the connexion is made through a location parameter $u$ and a scale parameter $s$ such that $F_{u,s}(w)=F\{(w-u)/s\}$, we have for $j=1,\ldots,J-1$
\[ F_{u,s}(\eta_j(\boldsymbol{x})) = F \left( \frac{\eta_j(\boldsymbol{x})-u}{s} \right) = F \left( \frac{\alpha_j-u}{s}+ \boldsymbol{x}^t \frac{\boldsymbol{\delta_j}}{s}\right), \]
and obtain an equivalent model using the reparametrization $\alpha^{\prime}_j= (\alpha_j-u)/s$ and $\boldsymbol{\delta_j}^{\prime}=\boldsymbol{\delta_j}/s$ for $j=1,\ldots,J-1$. This is the case for all distributions previously introduced. But Student (respectively Pareto) distributions, with different degrees of freedom (respectively parameters shape $a$), are not connected by a linear transformation. Therefore they lead to different likelihood maxima. In applications, Student distributions will be used with few degrees of freedom. Playing on the symmetrical or asymmetrical character and the more or less heavy tails of distributions may markedly improve model fit. In the following, the set of all continuous cdf $F$ (respectively continuous and symmetric cdf $F$) will be denoted by $\mathfrak{F}$ (respectively by $\tilde{\mathfrak{F}}$).		

		\paragraph{Ratio of probabilities $r$:} 
		The linear predictor $\boldsymbol{\eta}$ is not directly related to the expectation $\boldsymbol{\pi}$, through the cdf $F$, but to a particular transformation $r$ of $\boldsymbol{\pi}$ which we call the ratio. 

\vspace*{0.2cm}

\noindent In this context, the odds proportional logit model for instance relies on the cumulative ratio defined by
\[ r_j(\boldsymbol{\pi}) = \pi_1 + \ldots + \pi_j, \]
for $j=1,\ldots,J-1$. If there is a total order among categories, cumulative models can be used and interpreted by introducing a latent continuous variable $V$ having cdf $F$ \citep{mccullagh80}. The linear predictors $(\eta_j)_{j=1,\ldots,J-1}$ are then strictly ordered and we obtain for $j=1,\ldots,J-1$
\[ \{ Y \leq j \} \Leftrightarrow \{ V \leq \eta_j \}. \]

\vspace*{0.2cm}

\noindent The continuation ratio logit model relies on the sequential ratio defined by
\[ r_j(\boldsymbol{\pi}) = \frac{\pi_j}{\pi_j + \ldots + \pi_J},\]
for $j=1,\ldots,J-1$. Total order may be interpreted in a different way with sequential models. A sequential model corresponds to a sequential independent latent continuous process $(V_t)_{t=1,\ldots,J-1}$ having cdf $F$ \citep{tutz1990sequential_item}. This process is governed by
\[ \{ Y=j \} \Leftrightarrow \bigcap_{t=1}^{j-1} \{ V_t > \eta_t \} \bigcap \{ V_j \leq \eta_j \}, \]
for $j=1,\ldots,J-1$. The conditional event $\{Y=j | Y\geq j\}$ can be expressed by
\[ \{Y=j | Y\geq j\} \Leftrightarrow \{ V_j \leq \eta_j \}. \]

\vspace*{0.2cm}

\noindent The adjacent logit model is based on the adjacent ratio defined by
\[ r_j(\boldsymbol{\pi}) = \frac{\pi_j}{\pi_j + \pi_{j+1}},\]
for $j=1,\ldots,J-1$. Adjacent models are not directly interpretable using latent variables.

\vspace*{0.2cm}

\noindent Unlike these models for ordinal data, we propose to define a ratio that is independent of the category ordering assumption. Using the structure of the multinomial logit model, we define the reference ratio for each category $j=1,\ldots,J-1$ as
\[ r_j(\boldsymbol{\pi}) = \frac{\pi_j}{\pi_j + \pi_J}. \]
Each category $j$ is then related to a reference category (here $J$ by convention) and thus no category ordering is assumed. Therefore, the reference ratio allows us to define new GLMs for nominal response variables.

%

\vspace*{0.5cm}

In the following, each GLM for categorical data will be specified by a $(r,F,Z)$ triplet. Table \ref{table_rFZ_5_model} shows $(r,F,Z)$ triplet specifications for classical models. This specification enables to define an enlarged family of GLMs for nominal response variables (referred to as the reference family) using $(\textnormal{reference},F,Z)$ triplets, which includes the multinomial logit model specified by the (reference, logistic, complete) triplet. GLMs for nominal and ordinal response variables are usually defined with different design matrices $Z$; see the first two rows in table \ref{table_rFZ_5_model}. Fixing the design matrix $Z$ eases the comparison between GLMs for nominal and ordinal response variables.

\begin{table}
\caption{\label{table_rFZ_5_model} $(r,F,Z)$ specification of five classical GLMs for categorical data.}
\centering
\begin{tabular}{|c|c|}
 \hline
  & \\
\textit{Multinomial logit model} & \\
 & \\
 $ \displaystyle P(Y=j)= \frac{\exp (\alpha_j + x^T \delta_j)}{1+ \sum_{k=1}^{J-1} \exp (\alpha_k + x^T \delta_k)}$ & (reference, logistic, complete) \\
 & \\
\hline
  & \\
\textit{Odds proportional logit model} & \\
  & \\
$ \displaystyle \ln \left\lbrace \frac{P(Y\leq j)}{1-P( Y \leq j)} \right\rbrace =\alpha_j+x^T\delta$ & (cumulative, logistic, proportional) \\
   & \\
\hline
   & \\
\textit{Proportional hazard model} & \\
\textit{(Grouped Cox Model)} &  \\
&  \\
$ \displaystyle \ln \left\lbrace - \ln P(Y>j | Y\geq j) \right\rbrace =\alpha_j+x^T\delta$ & (sequential, Gumbel min, proportional) \\
  & \\
\hline
   & \\
\textit{Adjacent logit model} & \\
&  \\
$ \displaystyle \ln \left\lbrace \frac{P(Y=j)}{P(Y=j+1)} \right\rbrace =\alpha_j+x^T\delta_j$ & (adjacent, logistic, complete) \\
  & \\
\hline
   & \\
\textit{Continuation ratio logit model} & \\
&  \\
$ \displaystyle \ln \left\lbrace \frac{P(Y=j)}{P( Y > j)} \right\rbrace =\alpha_j+x^T\delta_j$ & (sequential, logistic, complete) \\
&  \\
\hline
\end{tabular}
\end{table}

	\subsection{Compatibility of the three components \textit{r}, \textit{F} and \textit{Z}}\label{compatibility}
	A GLM for categorical data is specified by an $(r,F,Z)$ triplet but is it always defined? The condition $\boldsymbol{\pi}(\boldsymbol{x}) \in \mathcal{M}$ is required for all $\boldsymbol{x} \in \mathcal{X}$. It should be noted that reference, adjacent and sequential ratios are defined with $J-1$ different conditioning. Therefore the linear predictors $\eta_j$ are not constrained one to another. Neither $\mathcal{P}$ nor $\mathcal{S}$ are constrained ($\mathcal{P}=]0,1[^{J-1}$ and $\mathcal{S}=\mathbb{R}^{J-1}$) and thus no constraint on parameter $\boldsymbol{\beta}$ is required.

	The situation is different for the cumulative ratio, because the probabilities $r_j(\boldsymbol{\pi})$ are not conditional but linked ($r_{j+1}(\boldsymbol{\pi})=r_j(\boldsymbol{\pi})+ \pi_{j+1}$). Both $\mathcal{P}$ and $\mathcal{S}$ are constrained ($\mathcal{P}= \lbrace \boldsymbol{r} \in ]0,1[^{J-1} | r_1<\ldots<r_{J-1} \rbrace$ and $\mathcal{S}=\lbrace \boldsymbol{\eta} \in \mathbb{R}^{J-1} | \eta_1<\ldots<\eta_{J-1} \rbrace$). Therefore the definition of a cumulative model entails constraints on $\boldsymbol{\beta}=(\alpha_1,\ldots,\alpha_{J-1},\boldsymbol{\delta_1}^t,\ldots,\boldsymbol{\delta_{J-1}}^t)$. Without loss of generality, we will work hereinafter with only one explanatory variable $x \in \mathcal{X}$. The constraints are different depending on the form of $\mathcal{X}$.
	
			\paragraph*{Case 1: $x$ is categorical} 
			then $\mathcal{X}= \lbrace \boldsymbol{x} \in \{0,1\}^{C-1} | \; \sum_{c=1}^{C-1} x_c \in \{0,1\} \rbrace$. In this case, the form of the linear predictors is
\[ \eta_j(x) = \alpha_j + \sum_{c=1}^{C-1} \textbf{1}_{\lbrace X=c \rbrace} \; \delta_{j,c},\]
and the constraints $\eta_j(x) < \eta_{j+1}(x) \; \forall x \in \mathcal{X}$ are equivalent to
\[ \left\{
\begin{array}{rcll}
\alpha_j   &  <   &  \alpha_{j+1},  & \\
\delta_{j,c} - \delta_{j+1,c} & < & \alpha_{j+1} -\alpha_j  , & \forall c \in \lbrace1,\ldots,C-1 \rbrace.
\end{array}
\right. \]
A sufficient condition is 
\[ \left\{
\begin{array}{llll}
\alpha_j   &  <   &  \alpha_{j+1},  & \\
\delta_{j,c} & \leq & \delta_{j+1,c}, & \forall c \in \lbrace1,\ldots,C-1 \rbrace.
\end{array}
\right. \]

			\paragraph*{Case 2: $x$ is continuous} 
			then $\mathcal{X} \subseteq \mathbb{R}$. In this case, the form of the linear predictors is
\[ \eta_j(x) = \alpha_j + \delta_j x.\]
Since the $\eta_j$ must be ordered on $\mathcal{X}$, three domains of definition $\mathcal{X}$ must be differentiated:

\noindent \underline{$\mathcal{X} = \mathbb{R}$:}
				$\eta_j$ are ordered and parallel straight lines
\[ \left\{
\begin{array}{lll}
\alpha_j   &  <   &  \alpha_{j+1},  \\
\delta_j & = & \delta_{j+1}.
\end{array}
\right. \]
This is the case of the odds proportional logit model.

\noindent \underline{$\mathcal{X} = \mathbb{R_+}$:}
				$\eta_j$ are ordered and non-intersected half-lines
\[ \left\{
\begin{array}{lll}
\alpha_j   &  <   &  \alpha_{j+1},  \\
\delta_j & \leq & \delta_{j+1}.
\end{array}
\right. \]
This is the case of a positive continuous variable $X$, such as a size or a dosage for instance. Moreover, if $X$ is strictly positive, the intercepts $\alpha_j$ can be common.

\vspace*{0.1cm}

\noindent \underline{$\mathcal{X} = [a,b]$:}
				$\eta_j$ are ordered and non-intersected segments. The constraints cannot be simply rewritten in terms of intercept and slope constraints. For the last two cases a vector of probabilities $\pi(x)$ for $x$ out of $\mathcal{X}$ cannot always be predicted (see figure \ref{predictors_constraints}).
		
		 Finally, cumulative models are not always defined and thus some problems may occur in the application of Fisher's scoring algorithm.
		 		
\begin{figure}
\includegraphics[width=1\textwidth]{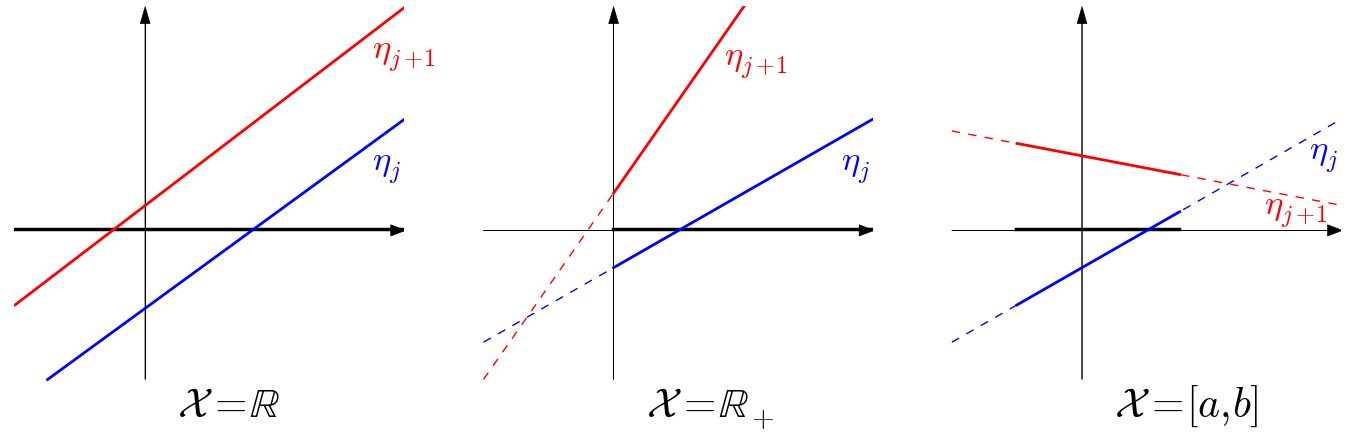}
\caption{\label{predictors_constraints} Linear predictors for different configurations of the continuous space $\mathcal{X}$.}
\end{figure}

		\subsection{Fisher's scoring algorithm}
For maximum likelihood estimation, the iteration of Fisher's scoring algorithm is given by
 \[ \boldsymbol{\beta}^{[t+1]} = \boldsymbol{\beta}^{[t]} - \left\lbrace \mbox{E} \left(\frac{\partial^2 l}{\partial \boldsymbol{\beta}^t \partial \boldsymbol{\beta}} \right)_{\boldsymbol{\beta}=\boldsymbol{\beta}^{[t]}} \right\rbrace^{-1} \left( \frac{\partial l}{\partial \boldsymbol{\beta}} \right)_{\boldsymbol{\beta}=\boldsymbol{\beta}^{[t]}}. \]
For the sake of simplicity, the algorithm is detailed for only one observation $(\boldsymbol{y},\boldsymbol{x})$ with $l=\ln P(\boldsymbol{Y}=\boldsymbol{y}|\boldsymbol{X}=\boldsymbol{x};\boldsymbol{\beta})$. Using the chain rule we obtain the score
\[ \frac{\partial l}{\partial \boldsymbol{\beta}} = \frac{\partial \boldsymbol{\eta}}{\partial \boldsymbol{\beta}} \; \frac{\partial \boldsymbol{\pi}}{\partial \boldsymbol{\eta}} \; \frac{\partial \boldsymbol{\theta}}{\partial \boldsymbol{\pi}} \; \frac{\partial l}{\partial \boldsymbol{\theta}}. \]
Since the response distribution belongs to the exponential family, the score becomes
\[ \frac{\partial l}{\partial \boldsymbol{\beta}} = Z^t \; \frac{\partial \boldsymbol{\pi}}{\partial \boldsymbol{\eta}} \; \mbox{Cov}(\boldsymbol{Y}|\boldsymbol{x})^{-1} \; (\boldsymbol{y}-\boldsymbol{\pi}). \]
Then using decomposition (\ref{structure}) of the link function we obtain
\begin{align}
 \frac{\partial l}{\partial \boldsymbol{\beta}} & = Z^t \; \frac{\partial \boldsymbol{\mathcal{F}}}{\partial \boldsymbol{\eta}} \; \frac{\partial \boldsymbol{\pi}}{\partial \boldsymbol{r}} \; \mbox{Cov}(\boldsymbol{Y}|\boldsymbol{x})^{-1} \; (\boldsymbol{y}-\boldsymbol{\pi}). \label{score}
\end{align}
Again using property of the exponential family and decomposition (\ref{structure}) of the link function, we obtain Fisher's information matrix
\begin{align}
\textnormal{E} \left(\frac{\partial^2 l}{\partial \boldsymbol{\beta}^t \partial \boldsymbol{\beta}} \right) & = - \frac{\partial \boldsymbol{\pi}}{\partial \boldsymbol{\beta}} \; \mbox{Cov}(\boldsymbol{Y}|\boldsymbol{x})^{-1} \; \frac{\partial \boldsymbol{\pi}}{\partial \boldsymbol{\beta}^t} \nonumber \\
 & = - Z^t \; \frac{\partial \boldsymbol{\pi}}{\partial \boldsymbol{\eta}} \; \mbox{Cov}(\boldsymbol{Y}|\boldsymbol{x})^{-1} \; \frac{\partial \boldsymbol{\pi}}{\partial \boldsymbol{\eta}^t} \; Z \nonumber \\
\textnormal{E} \left(\frac{\partial^2 l}{\partial \boldsymbol{\beta}^t \partial \boldsymbol{\beta}} \right) & = - Z^t \; \frac{\partial \boldsymbol{\mathcal{F}}}{\partial \boldsymbol{\eta}} \; \frac{\partial \boldsymbol{\pi}}{\partial \boldsymbol{r}} \; \mbox{Cov}(\boldsymbol{Y}|\boldsymbol{x})^{-1} \; \frac{\partial \boldsymbol{\pi}}{\partial \boldsymbol{r}^t} \; \frac{\partial \boldsymbol{\mathcal{F}}}{\partial \boldsymbol{\eta}^t} \; Z . \label{information}
\end{align}
We only need to evaluate the associated density function $\{f(\eta_j)\}_{j=1,\ldots,J-1}$ to compute the diagonal Jacobian matrix $\partial \boldsymbol{\mathcal{F}} / \partial \boldsymbol{\eta}$. For details on computation of the Jacobian matrix $\partial \boldsymbol{\pi} / \partial \boldsymbol{r}$ according to each ratio, see appendix \ref{appendix_dpi_dr}. Generic decomposition \eqref{score} of score and decomposition \eqref{information} of Fisher's information matrix according to $(r,F,Z)$ triplet eases implementation of supplementary $F$ and $Z$ in this framework.

\section{\label{equivalence} Equalities and equivalences between models}

In the previous section, we have shown that any $(r,F,Z)$ triplet define a particular GLM for categorical data. But some triplets may define the same model. This section focuses on equivalences between GLMs for categorical data. All the following properties strongly depend on the link function, especially on the ratio. It should be noted that for $J=2$, the four ratios are the same, leading to the Bernoulli case. Hence we only focus on $J>2$. The terminology ``model" is used when the three components $r$, $F$ and $Z$ are instantiated (even if the parameter $\boldsymbol{\beta}$ is unknown), whereas the terminology ``family of models" is used when only the ratio $r$ is instantiated. The four ratios, corresponding to different structures, define four families of models. The family of reference models, for instance, is defined by \{(reference, $F$, $Z$)$; \; F \in \mathfrak{F}$, $Z \in \mathfrak{Z}$\}.

The truncated multinomial distribution $t\mathcal{M}(\boldsymbol{\pi})$ being specified by the parameter $\boldsymbol{\pi}$ of dimension $J-1$, the distribution of $\boldsymbol{Y}|\boldsymbol{X}=\boldsymbol{x}$ is then fully specified by the $(r,F,Z)$ triplet for a fixed value of $\boldsymbol{\beta} \in \tilde{\mathcal{C}}$ since
\[ \boldsymbol{\pi}= r^{-1} \circ \mathcal{F} \{ Z(x) \boldsymbol{\beta} \}. \]
Equality and equivalence between two models are defined using the $(r,F,Z)$ specification. 
\begin{Def}
Two models $(r,F,Z)$ and $(r^{\prime},F^{\prime},Z^{\prime})$ are said to be equal if the corresponding distributions of $\boldsymbol{Y}|\boldsymbol{X}=\boldsymbol{x}$ are equal for all $\boldsymbol{x}$ and all $\boldsymbol{\beta}$
\[ r^{-1} \circ \mathcal{F} \{ Z(\boldsymbol{x}) \boldsymbol{\beta} \} = r^{\prime-1} \circ \mathcal{F}^{\prime} \{ Z^{\prime}(\boldsymbol{x}) \boldsymbol{\beta} \}, \; \forall \boldsymbol{x} \in \mathcal{X},\; \forall \boldsymbol{\beta} \in \tilde{\mathcal{C}}. \]
\end{Def}

\begin{Def}
Two models $(r,F,Z)$ and $(r^{\prime},F^{\prime},Z^{\prime})$ are said to be equivalent if one is a reparametrization of the other, and conversely. Hence, there exists a bijection $h$ from $\tilde{\mathcal{C}}$ to $\tilde{\mathcal{C}}^{\prime}$ such that
\[ r^{-1} \circ \mathcal{F} \{ Z(\boldsymbol{x}) \boldsymbol{\beta} \} = r^{\prime-1} \circ \mathcal{F}^{\prime} \{ Z^{\prime}(\boldsymbol{x}) h(\boldsymbol{\beta}) \}, \; \forall \boldsymbol{x} \in \mathcal{X},\; \forall \boldsymbol{\beta} \in \tilde{\mathcal{C}}. \]
\end{Def}
\noindent Let us remark that equality implies equivalence.

		\subsection{Comparison between cumulative and sequential models}		
		Two equivalences between cumulative and sequential models have already been shown in the literature: a first one for the Gumbel min cdf, and another one for the exponential cdf. We extend the second one in the context of equality of models for other design matrices.
		
			\subsubsection{Gumbel min cdf}
		 	\citet{laara} showed the equivalence
\begin{equation}
\mbox{(sequential, Gumbel min, proportional)} \Leftrightarrow \mbox{(cumulative, Gumbel min, proportional)}.\label{eq_laara}
\end{equation}
In this context, \citet{tutz1991sequential} noted that ``the equivalence holds only for the simple version of the models where the thresholds are not determined by the explanatory variables". In our framework, this means that (sequential, Gumbel min, $Z$) and (cumulative, Gumbel min, $Z$) models are equivalent only for the proportional design matrix $Z_p$. Consider the bijection $H$ between the two predictor spaces $\mathcal{S}= \mathbb{R}^{J-1}$ and $\mathcal{S}^{\prime}=\lbrace \boldsymbol{\eta}^{\prime} \in \mathbb{R}^{J-1} | \eta_1^{\prime}<\ldots<\eta_{J-1}^{\prime} \rbrace$ defined by
\begin{equation}
 \eta_j^{\prime} = \displaystyle \ln \left\lbrace \sum_{k=1}^j \exp (\eta_k) \right\rbrace, \label{reparametrization_laara}
 \end{equation}
for $j=1,\ldots,J-1$. To be a linear predictor, $\boldsymbol{\eta}^{\prime}$ must be linear with respect to $\boldsymbol{x}$. Rewriting $\eta^{\prime}_j$ as
\[ \eta_j^{\prime} = \displaystyle \ln \left\lbrace \sum_{k=1}^j \exp (\alpha_k) \exp (\boldsymbol{x}^t\boldsymbol{\delta_k}) \right\rbrace, \]
we see that linearity with respect to $\boldsymbol{x}$ holds if and only if $\boldsymbol{\delta_1}=\ldots=\boldsymbol{\delta_{J-1}}$. This corresponds to the proportional design matrix. Finally the equivalence of \citet{laara} holds with the bijection $h$ between $\tilde{\mathcal{C}} = \mathbb{R}^{J-1+p} $ and $\tilde{\mathcal{C}}^{\prime} = \lbrace \boldsymbol{\beta}\in \mathbb{R}^{J-1+p} | \alpha_1 < \ldots < \alpha_{J-1} \rbrace $ defined by 
\[ \left\lbrace
\begin{array}{lcl}
 \alpha_j^{\prime} &=& \displaystyle \ln \left\lbrace \sum_{k=1}^j \exp (\alpha_k) \right\rbrace, \; \mbox{for } j=1,\ldots,J-1, \\
\boldsymbol{\delta}^{\prime} &=& \boldsymbol{\delta}. \\
\end{array}\right.\]

			\subsubsection{Pareto cdfs}
			For all Pareto distributions we show an equivalence between cumulative and sequential models for a particular design matrix.
\begin{prop}
Let $\mathcal{P}_a$ denote the Pareto cdf with shape parameter $a\geq 1$ and $Z_0$ the $(J-1) \times (J-1+p)$-design matrix
\[ Z_0 = \begin{pmatrix}
1 &  & &  & 0 \\
  & \ddots  &  & & \vdots  \\
 & &  1& & 0\\
 & &  & 1 & \boldsymbol{x}^t
\end{pmatrix}.
\]
Then we have
\begin{equation}
 \mbox{(cumulative, }\mathcal{P}_a, Z_0) \Leftrightarrow \mbox{(sequential, }\mathcal{P}_a, Z_0). \label{eq_Pareto}
\end{equation}
\end{prop}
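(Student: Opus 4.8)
The plan is to reduce the claimed equivalence to a single scalar relation between the linear predictors of the two models, exactly as in the Gumbel min case \eqref{reparametrization_laara}, and then to verify that this relation preserves linearity in $\boldsymbol{x}$ precisely when the design is $Z_0$. First I would express both ratios through the cumulative probabilities. Writing $c_j=P(Y\le j)$ and $s_j=P(Y=j\mid Y\ge j)$, the elementary relation $1-s_k=(1-c_k)/(1-c_{k-1})$ together with $c_0=0$ telescopes into the model-free identity
\begin{equation*}
1-c_j=\prod_{k=1}^{j}(1-s_k),\qquad j=1,\ldots,J-1 .
\end{equation*}
Hence a cumulative model and a sequential model induce the same distribution $t\mathcal{M}(\boldsymbol{\pi})$ if and only if these products agree for every $j$.

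Next I would substitute the Pareto cdf. Denoting by $\eta_j$ the sequential predictors ($s_j=\mathcal{P}_a(\eta_j)$) and by $\eta_j'$ the cumulative predictors ($c_j=\mathcal{P}_a(\eta_j')$), and using $1-\mathcal{P}_a(w)=w^{-a}$, the identity above becomes $(\eta_j')^{-a}=\prod_{k=1}^{j}(\eta_k)^{-a}$. Since $a\ge 1$ and all predictors lie in the support $[1,+\infty[$, this is equivalent to the clean product relation
\begin{equation*}
\eta_j'=\prod_{k=1}^{j}\eta_k,\qquad j=1,\ldots,J-1 ,
\end{equation*}
which is the Pareto analogue of the log-sum-exp map \eqref{reparametrization_laara}.

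The heart of the proof is then to show that this product relation sends the predictors produced by $Z_0$ to predictors of the same form, and to read off the bijection $h$. With $Z_0$ only the last sequential predictor carries $\boldsymbol{x}$: $\eta_j=\alpha_j$ for $j\le J-2$ and $\eta_{J-1}=\alpha_{J-1}+\boldsymbol{x}^t\boldsymbol{\delta}$. For $j\le J-2$ the cumulative predictor $\eta_j'=\prod_{k\le j}\alpha_k$ is constant, while
\begin{equation*}
\eta_{J-1}'=\left(\prod_{k=1}^{J-2}\alpha_k\right)(\alpha_{J-1}+\boldsymbol{x}^t\boldsymbol{\delta})
\end{equation*}
is again affine in $\boldsymbol{x}$. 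Thus $\boldsymbol{\eta}'=Z_0\,h(\boldsymbol{\beta})$, where $h$ is defined by $\alpha_j'=\prod_{k=1}^{j}\alpha_k$ for $j\le J-2$, together with $\alpha_{J-1}'=(\prod_{k=1}^{J-2}\alpha_k)\,\alpha_{J-1}$ and $\boldsymbol{\delta}'=(\prod_{k=1}^{J-2}\alpha_k)\,\boldsymbol{\delta}$. On the domain where every sequential predictor exceeds $1$ (forced by the Pareto support), the partial products $\eta_j'$ are strictly increasing and bounded below by $1$, so $h$ is a smooth bijection onto the constrained cumulative parameter space $\{\alpha_1'<\cdots<\alpha_{J-1}',\ \alpha_j'>1\}$, with inverse given by the successive ratios $\alpha_k=\alpha_k'/\alpha_{k-1}'$. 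This yields the stated equivalence.

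I expect the only genuine obstacle to be the linearity-preservation step, which is exactly what singles out $Z_0$: the product $\prod_{k\le j}\eta_k$ remains affine in $\boldsymbol{x}$ only because at most one factor depends on $\boldsymbol{x}$. If two or more predictors carried $\boldsymbol{x}$ (as in the proportional or complete designs), the product would generate honest cross terms quadratic in $\boldsymbol{x}$ that no affine reparametrization could absorb, and the equivalence would break down. A secondary point to check along the way is the compatibility of the support and ordering constraints under $h$, which I settle above by noting that strict positivity above $1$ of all sequential predictors is equivalent to the strict ordering (and support constraint) of the cumulative predictors.
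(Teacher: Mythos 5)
Your proposal is correct and follows essentially the same route as the paper: both express one ratio in terms of the other (your telescoping identity $1-c_j=\prod_{k\le j}(1-s_k)$ is just the product form of the paper's relation $s_j=(c_j-c_{j-1})/(1-c_{j-1})$), substitute the Pareto form $1-\mathcal{P}_a(w)=w^{-a}$ to get a multiplicative reparametrization of the predictors, and observe that linearity in $\boldsymbol{x}$ is preserved exactly because only one predictor carries $\boldsymbol{x}$ under $Z_0$. The only difference is cosmetic --- you run the map from sequential to cumulative via partial products where the paper runs it from cumulative to sequential via successive ratios $\eta_j/\eta_{j-1}$ --- and you are in fact more explicit than the paper about the bijection $h$ and the compatibility of the support and ordering constraints.
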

\begin{proof}	
Assume that the distribution of $\boldsymbol{Y}|\boldsymbol{X}=\boldsymbol{x}$ is defined by the (cumulative, exponential, $Z$) model with an unknown design matrix $Z \in \mathfrak{Z}$.
For $j=1,\ldots,J-1$ we obtain
\begin{equation}
\pi_1+\ldots+\pi_j = 1 - \eta_j^{-a}. \label{cum_Pareto}
\end{equation}
The sequential ratio can be rewritten in terms of cumulative ratio
\begin{equation}
\frac{\pi_j}{\pi_j+\ldots+\pi_J} = \frac{(\pi_1+\ldots+\pi_j )-(\pi_1+\ldots+\pi_{j-1})}{1-(\pi_1+\ldots+\pi_{j-1})}, \label{cum_seq}
\end{equation}
for $j=2,\ldots,J-1$. Using (\ref{cum_Pareto}) it becomes
\[ \frac{\pi_j}{\pi_j+\ldots+\pi_J} = 1- \left( \frac{\eta_j}{\eta_{j-1}} \right)^{-a},\]
for $j=2,\ldots,J-1$. Consider the reparametrization
\begin{equation}
 \eta_j^{\prime} = \left\lbrace
\begin{array}{cl}
\displaystyle \eta_j & \mbox{for } j=1, \\
\displaystyle \frac{\eta_j}{\eta_{j-1}}, & \mbox{for } j=2,\ldots,J-1.
\end{array}\right. \label{reparametrization_Pareto}
\end{equation}
$\boldsymbol{\eta}^{\prime}$ is linear with respect to $\boldsymbol{x}$ if and only if $\boldsymbol{\delta_{j-1}}=0$ for $j=2,\ldots,J-1$, i.e. if and only if $Z=Z_0$. 
\end{proof}	
	
			\subsubsection{Exponential cdf}
			\citet{tutz1991sequential} showed the equivalence
\begin{equation} 
\mbox{(cumulative, exponential, complete)} \Leftrightarrow \mbox{(sequential, exponential, complete)} \label{eq_tutz}
\end{equation}
Using the reparametrization behind this equivalence, we obtain the following property:
\begin{prop}\label{equal_cum_seq}
Let $A$ be the square matrix of dimension $J-1$
\[ A = \begin{pmatrix}
1 &  & &   \\
-1  & 1  &  &   \\
 & \ddots & \ddots & \\
 & & -1 & 1
\end{pmatrix}.
\]
Then we have the equality of models
\[ \mbox{(cumulative, exponential, } Z) = \mbox{(sequential, exponential, } AZ),\]
for any design matrix $Z \in \mathfrak{Z}$.
\end{prop}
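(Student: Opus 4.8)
The plan is to prove genuine equality (not merely equivalence) by exhibiting, for every fixed $\boldsymbol{\beta}$ and every $\boldsymbol{x}$, that the probability vector $\boldsymbol{\pi}$ produced by the (cumulative, exponential, $Z$) model with linear predictor $\boldsymbol{\eta}=Z(\boldsymbol{x})\boldsymbol{\beta}$ coincides exactly with the one produced by the (sequential, exponential, $AZ$) model with linear predictor $A\boldsymbol{\eta}$. The reason this works for an arbitrary $Z\in\mathfrak{Z}$, in contrast to the Pareto case, is that the underlying reparametrization of the predictor space is \emph{linear}: it is the fixed map $\boldsymbol{\eta}\mapsto A\boldsymbol{\eta}$, which can therefore be absorbed into the design matrix with no restriction on $Z$, since $AZ\boldsymbol{\beta}=A(Z\boldsymbol{\beta})$.

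First I would fix $\boldsymbol{\beta}$ and $\boldsymbol{x}$ and write the cumulative exponential model explicitly. With $F(w)=1-\exp(-w)$ the relation $r_j(\boldsymbol{\pi})=F(\eta_j)$ reads
\[ \pi_1+\ldots+\pi_j = 1-\exp(-\eta_j), \qquad j=1,\ldots,J-1. \]
Then I would compute the sequential ratio of this same $\boldsymbol{\pi}$ by reusing identity (\ref{cum_seq}), which expresses the sequential ratio purely in terms of the cumulative probabilities. Substituting the displayed relation into (\ref{cum_seq}) gives, for $j=2,\ldots,J-1$,
\[ \frac{\pi_j}{\pi_j+\ldots+\pi_J} = \frac{\exp(-\eta_{j-1})-\exp(-\eta_j)}{\exp(-\eta_{j-1})} = 1-\exp\{-(\eta_j-\eta_{j-1})\}, \]
while for $j=1$ the denominator equals $1$ and one gets $\pi_1=1-\exp(-\eta_1)$. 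In both cases the right-hand side is $F(\eta'_j)$ with $\eta'_1=\eta_1$ and $\eta'_j=\eta_j-\eta_{j-1}$, that is $\boldsymbol{\eta}'=A\boldsymbol{\eta}=AZ\boldsymbol{\beta}$. Hence the very same $\boldsymbol{\pi}$ solves the sequential exponential equations with predictor $AZ\boldsymbol{\beta}$, which is the claimed equality.

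The step requiring care is the matching of the admissible parameter domains, so that the two models are compared over the same set $\tilde{\mathcal{C}}$. The matrix $A$ is lower triangular with unit diagonal, hence invertible, and realizes a bijection between the cumulative predictor space $\mathcal{S}=\{\boldsymbol{\eta}\mid \eta_1<\ldots<\eta_{J-1}\}$ intersected with the exponential support condition $\eta_1>0$, and the sequential admissible space $\{\boldsymbol{\eta}'\mid \eta'_j>0,\ j=1,\ldots,J-1\}$: indeed $\eta'_1=\eta_1>0$ and $\eta'_j=\eta_j-\eta_{j-1}>0$ is precisely $\eta_{j-1}<\eta_j$. Thus $\boldsymbol{\beta}$ is admissible for (cumulative, exponential, $Z$) if and only if it is admissible for (sequential, exponential, $AZ$), and on this common domain the two models induce identical distributions of $\boldsymbol{Y}\mid\boldsymbol{X}=\boldsymbol{x}$. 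The main obstacle is conceptual rather than computational: one must recognise that the linearity of the exponential reparametrization is exactly what upgrades the Pareto-style equivalence (valid only for $Z_0$) into an equality valid for every $Z$. The algebra itself is routine once identity (\ref{cum_seq}) is invoked, and the only bookkeeping is the separate treatment of the boundary index $j=1$.
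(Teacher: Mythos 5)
Your proposal is correct and follows essentially the same route as the paper: both write the cumulative exponential relation $\pi_1+\ldots+\pi_j=1-\exp(-\eta_j)$, substitute it into identity (\ref{cum_seq}) to obtain $1-\exp\{-(\eta_j-\eta_{j-1})\}$, and recognise the linear reparametrization $\boldsymbol{\eta}'=A\boldsymbol{\eta}=AZ\boldsymbol{\beta}$ with the same parameter $\boldsymbol{\beta}$. Your explicit check that $A$ maps the admissible cumulative predictor space bijectively onto $(\mathbb{R}_+^*)^{J-1}$ is a slightly more careful spelling-out of what the paper states in one line, but it is the same argument.
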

\begin{proof}
Assume that the distribution of $\boldsymbol{Y}|\boldsymbol{X}=\boldsymbol{x}$ is defined by the (cumulative, exponential, $Z$) model with an unknown design matrix $Z \in \mathfrak{Z}$.
For $j=1,\ldots,J-1$ we obtain
\begin{align}
\pi_1+\ldots+\pi_j & = 1 - \exp(-\eta_j). \label{cum_exp}
\end{align}
Using (\ref{cum_seq}) it becomes
\[ \frac{\pi_j}{\pi_j+\ldots+\pi_J} = 1- \exp\lbrace -(\eta_j-\eta_{j-1})\rbrace, \]
for $j=2,\ldots,J-1$. Therefore, we consider the reparametrization 
$ \boldsymbol{\eta}^{\prime} = A \boldsymbol{\eta}$
between the two predictor spaces $\mathcal{S}=\lbrace \boldsymbol{\eta} \in \mathbb{R}^{J-1} | 0 \leq \eta_1<\ldots<\eta_{J-1} \rbrace$ and $\mathcal{S}^{\prime}= \mathbb{R}^{*J-1}_{+}$. Hence $ \boldsymbol{\eta}^{\prime} = AZ \boldsymbol{\beta}$ and $\boldsymbol{Y}|\boldsymbol{X}=\boldsymbol{x}$  follows the (sequential, exponential, $AZ$) model with the same parameter $\boldsymbol{\beta}$.
\end{proof}
\begin{coro}
The two subfamilies of models \{(cumulative, exponential, $Z$)$; \; Z \in \mathfrak{Z}$\} and \{(sequential, exponential, $Z$)$; \; Z \in \mathfrak{Z}$\} are equal.
\end{coro}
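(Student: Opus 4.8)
The plan is to deduce the corollary directly from Proposition~\ref{equal_cum_seq} by exploiting the invertibility of $A$. The proposition already gives, for every $Z \in \mathfrak{Z}$, the equality of models $(\mbox{cumulative, exponential, } Z) = (\mbox{sequential, exponential, } AZ)$, so the whole content of the corollary reduces to checking that the reindexing $Z \mapsto AZ$ maps $\mathfrak{Z}$ bijectively onto itself; the set equality of the two subfamilies then follows by reading this identification in both directions.

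First I would record that $A$ is invertible: it is lower triangular with unit diagonal, so $\det A = 1$, and its inverse is the lower-triangular matrix of ones, so that $(A^{-1}\boldsymbol{\eta})_j = \eta_1 + \ldots + \eta_j$. Consequently $Z \mapsto AZ$ is a linear bijection of the set of $(J-1)\times m$ matrices onto itself for each fixed column number $m$; in particular it preserves the number of columns and the column rank, hence the dimension of the slope space and the identifiability of $\boldsymbol{\beta}$.

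Next I would verify that this bijection preserves membership in $\mathfrak{Z}$. Writing $\boldsymbol{\eta} = Z\boldsymbol{\beta}$ with $\eta_j = \alpha_j + \boldsymbol{x}^t\boldsymbol{\delta_j}$, left multiplication by $A$ replaces the intercepts $(\alpha_1,\ldots,\alpha_{J-1})$ by $(\alpha_1, \alpha_2-\alpha_1, \ldots, \alpha_{J-1}-\alpha_{J-2})$ and the slopes by $(\boldsymbol{\delta_1}, \boldsymbol{\delta_2}-\boldsymbol{\delta_1}, \ldots)$. Since the intercept transformation is itself a bijection of $\mathbb{R}^{J-1}$, the free-intercept (identity minimal block) structure is retained up to the equivalence of design matrices introduced in Section~\ref{unification}, and since the slope transformation is a linear bijection it leaves the number of slope columns, between $p$ and $(J-1)p$, unchanged. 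Hence $AZ$ again lies in $\mathfrak{Z}$ up to equivalence, and the same argument applied to $A^{-1}$ shows $A^{-1}Z' \in \mathfrak{Z}$ for every $Z' \in \mathfrak{Z}$; therefore $A\mathfrak{Z} = \mathfrak{Z}$.

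Finally I would assemble the two inclusions. For any $Z \in \mathfrak{Z}$, Proposition~\ref{equal_cum_seq} identifies $(\mbox{cumulative, exponential, } Z)$ with $(\mbox{sequential, exponential, } AZ)$, and $AZ \in \mathfrak{Z}$, so the cumulative subfamily is contained in the sequential one. Conversely, given $Z' \in \mathfrak{Z}$, applying the proposition to $Z = A^{-1}Z' \in \mathfrak{Z}$ identifies $(\mbox{sequential, exponential, } Z')$ with $(\mbox{cumulative, exponential, } A^{-1}Z')$, giving the reverse inclusion, so the two subfamilies coincide. I expect the only point genuinely requiring care to be the closure claim $A\mathfrak{Z} = \mathfrak{Z}$: one must make sure that the non-identity intercept block produced by $A$ is absorbed by the design-matrix equivalence and that no element of $\mathfrak{Z}$ is lost or created, everything else being an immediate consequence of the invertibility of $A$ together with Proposition~\ref{equal_cum_seq}.
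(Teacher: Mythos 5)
Your proof is correct and takes essentially the same route as the paper: a double inclusion obtained by applying Property~\ref{equal_cum_seq} to $Z$ for one direction and to $A^{-1}Z$ (using the invertibility of $A$) for the other. The only difference is that you explicitly check the closure $A\mathfrak{Z}=\mathfrak{Z}$, a point the paper leaves implicit.
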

\begin{proof}
We use the double-inclusion method. First inclusion 
\begin{center}
\{(cumulative, exponential, $Z); \; Z \in \mathfrak{Z}$\} $ \subseteq $ \{(sequential, exponential, $Z); \; Z \in \mathfrak{Z}$\}
\end{center}
is directly obtained using property \ref{equal_cum_seq}. Noting that $A$ is invertible we obtain
\[ (\mbox{cumulative, exponential, } A^{-1}Z) = (\mbox{sequential, exponential, } Z), \]
for any design matrix $Z \in \mathfrak{Z}$ and thus the second inclusion. 
\end{proof}
It should be remarked that in general $A$ changes the constraints on space $ \mathcal{C}$ and thus the maximum likelihood. For example, the design matrix $Z_p$ corresponds to the constrained space $\mathcal{C} = \lbrace \boldsymbol{\beta}\in \mathbb{R}^{(J-1)(1+p)} | \boldsymbol{\delta_1}=\ldots=\boldsymbol{\delta_{J-1}} \rbrace $, whereas the design matrix
\[ AZ_p = \begin{pmatrix}
1 & & & & \boldsymbol{x}^t  \\
-1 & 1 & & & 0  \\
 & \ddots & \ddots  & & \vdots  \\
 & & -1 & 1 & 0
\end{pmatrix}
\]
corresponds to the constrained space $\mathcal{C} = \lbrace \boldsymbol{\beta}\in \mathbb{R}^{(J-1)(1+p)} | \boldsymbol{\delta_2}=\ldots=\boldsymbol{\delta_{J-1}}=\boldsymbol{0} \rbrace $. The design matrices $Z_p$ and $AZ_p$ are not equivalent and thus the (cumulative, exponential, $Z_p$) and (sequential, exponential, $Z_p$) models are not equivalent, whereas the (cumulative, exponential, $Z_p$) and (sequential, exponential, $AZ_p$) models are equal. In the same way, the (cumulative, exponential, $A^{-1}Z_p$) and (sequential, exponential, $Z_p$) models are equal with
\[A^{-1} Z_p = \begin{pmatrix}
1 & & & & \boldsymbol{x}^t \\
1 & 1 & & & 2\boldsymbol{x}^t  \\
\vdots & & \ddots & & \vdots  \\
1 &1 & \ldots& 1 & (J-1)\boldsymbol{x}^t
\end{pmatrix}.
\]

For the particular complete design however, there is no constraint on $\mathcal{C}$, thus $A$ cannot change it. We must simply check that $A$ does not reduce the dimension of $\mathcal{C}$. Since $A$ has full rank, the matrices $Z_c$ and $AZ_c$ are equivalent. Therefore the equality between (cumulative, exponential, $Z_c$) and (sequential, exponential, $AZ_c$) models becomes an equivalence between $(\mbox{cumulative, exponential, }Z_c)$ and $(\mbox{sequential, exponential, }Z_c)$. This equivalence described by \citet{tutz1991sequential} can thus be seen as a corollary of property \ref{equal_cum_seq}.

		\subsection{Comparison between reference and adjacent models}
		Here we follow the same approach as in the previous subsection but providing fewer details. We start from the equivalence 
\begin{equation}
\mbox{(reference, logistic, complete)} \Leftrightarrow \mbox{(adjacent, logistic, complete)}, \label{eq_agresti}
\end{equation}
shown by \citet{agresti_ordinal} and we then extend this equivalence, as in Property \ref{equal_cum_seq}, for any design matrix.
\begin{prop}
We have the equality of models
\[ (\mbox{reference, logistic}, Z) = (\mbox{adjacent, logistic}, A^tZ), \]
for any design matrix $Z \in \mathfrak{Z}$.
\end{prop}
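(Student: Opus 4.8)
The plan is to mirror the proof of Property \ref{equal_cum_seq}: start from a reference-logistic model with an arbitrary design $Z$, write out both link functions explicitly for the logistic cdf, and exhibit the linear reparametrization of the predictors that carries the reference model onto the adjacent one. Since equality (not merely equivalence) is claimed, the reparametrization should preserve the parameter $\boldsymbol{\beta}$ exactly, so the only work is to identify the matrix acting on the predictors and recognize it as $A^t$.

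First I would assume that $\boldsymbol{Y}|\boldsymbol{X}=\boldsymbol{x}$ follows the (reference, logistic, $Z$) model for an arbitrary $Z \in \mathfrak{Z}$. Because the logistic cdf satisfies $F^{-1}(p) = \ln\{p/(1-p)\}$, applying $F^{-1}$ to the reference ratio $r_j(\boldsymbol{\pi}) = \pi_j/(\pi_j+\pi_J)$ collapses the denominator and gives $\eta_j = \ln(\pi_j/\pi_J)$ for $j=1,\ldots,J-1$. The identical computation applied to the adjacent ratio $\pi_j/(\pi_j+\pi_{j+1})$ produces the adjacent predictor $\eta_j^{\prime} = \ln(\pi_j/\pi_{j+1})$.

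The key step is the telescoping identity $\ln(\pi_j/\pi_J) = \sum_{k=j}^{J-1}\ln(\pi_k/\pi_{k+1})$, which expresses each reference predictor as $\eta_j = \sum_{k=j}^{J-1}\eta_k^{\prime}$, equivalently $\eta_j^{\prime} = \eta_j - \eta_{j+1}$ for $j=1,\ldots,J-2$ together with $\eta_{J-1}^{\prime} = \eta_{J-1}$. In matrix form this reads $\boldsymbol{\eta}^{\prime} = A^t\boldsymbol{\eta}$, where $A^t$ is the upper-bidiagonal matrix carrying $1$ on the diagonal and $-1$ on the superdiagonal. Substituting $\boldsymbol{\eta} = Z\boldsymbol{\beta}$ then gives $\boldsymbol{\eta}^{\prime} = A^t Z\boldsymbol{\beta}$, so the very same vector $\boldsymbol{\pi}$ is produced by the (adjacent, logistic, $A^t Z$) model with the same parameter $\boldsymbol{\beta}$; as this holds for every $\boldsymbol{x}$ and every $\boldsymbol{\beta}$, the two models are equal in the sense of the definition.

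I expect essentially no obstacle here, and this is worth stating explicitly as the contrast with Property \ref{equal_cum_seq}. There, the cumulative predictor space was constrained ($\eta_1<\ldots<\eta_{J-1}$) and $A$ had to be checked against that domain; for the reference and adjacent ratios both predictor spaces are the full $\mathcal{S}=\mathbb{R}^{J-1}$, so the reparametrization $A^t$ raises no domain-of-definition issue. The only thing to watch is the bookkeeping of the telescoping sum and the orientation of the bidiagonal matrix: the transpose $A^t$, which sums the adjacent log-odds forward from $j$ toward the reference category $J$, appears rather than $A$ precisely because the reference ratio points at category $J$, exactly mirroring the way the backward cumulative sums produced $A$ in Property \ref{equal_cum_seq}.
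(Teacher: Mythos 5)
Your proof is correct and is exactly the argument the paper intends: the paper omits the details, saying only that it "follows the same approach" as the cumulative/sequential case, and your telescoping of the log-odds $\ln(\pi_j/\pi_J)=\sum_{k=j}^{J-1}\ln(\pi_k/\pi_{k+1})$, yielding $\boldsymbol{\eta}^{\prime}=A^t\boldsymbol{\eta}$ with the same $\boldsymbol{\beta}$, is precisely that argument. Your added observation that no domain constraint arises (both predictor spaces being all of $\mathbb{R}^{J-1}$) is a correct and worthwhile point of contrast with Property \ref{equal_cum_seq}.
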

The matrix $A^t$ turns out to be the transpose of the previously defined matrix $A$. It should be noted that canonical link function corresponds exactly to logistic cdf and reference ratio. Therefore, the subfamily of canonical models is \{(reference, logistic, $Z); \; Z \in \mathfrak{Z}\}$.
\begin{coro}\label{ref=adj}
The subfamily of canonical models \{(reference, logistic, $Z); \; Z \in \mathfrak{Z}\}$ is equal to the subfamily of models \{(adjacent, logistic, $Z); \; Z \in \mathfrak{Z}\}$.
\end{coro}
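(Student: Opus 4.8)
The plan is to prove this equality of subfamilies by the same double-inclusion argument used for the cumulative/sequential exponential corollary, invoking the immediately preceding Property together with the invertibility of $A^t$. The Property asserts that for every $Z \in \mathfrak{Z}$ the models (reference, logistic, $Z$) and (adjacent, logistic, $A^t Z$) are equal, and since $A$ is the unit lower-bidiagonal matrix with determinant $1$, its transpose $A^t$ is invertible; these two facts are exactly what drive both inclusions.

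First I would establish the inclusion $\{(\text{reference, logistic, } Z);\ Z \in \mathfrak{Z}\} \subseteq \{(\text{adjacent, logistic, } Z);\ Z \in \mathfrak{Z}\}$. Fixing $Z \in \mathfrak{Z}$, the Property gives that the reference model with design $Z$ coincides with the adjacent model with design $A^t Z$, so this reference model belongs to the adjacent subfamily as soon as $A^t Z$ is itself an admissible design matrix. For the reverse inclusion I would use invertibility: replacing $Z$ by $(A^t)^{-1} Z = (A^{-1})^t Z$ in the Property yields the equality (reference, logistic, $(A^t)^{-1}Z$) $=$ (adjacent, logistic, $Z$), so that every adjacent model is realized within the reference subfamily. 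Combining the two inclusions gives the claimed equality, exactly as in the corollary following Property \ref{equal_cum_seq}.

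The main obstacle --- the step deserving care rather than the routine algebra --- is verifying that left-multiplication by $A^t$ (and by $(A^t)^{-1}$) maps $\mathfrak{Z}$ into itself, i.e. that $A^t Z$ and $(A^t)^{-1} Z$ remain design matrices ``between $Z_p$ and $Z_c$''. This is the analogue of the remark following the cumulative/sequential corollary, where $A$ was seen to alter the constrained space $\mathcal{C}$ in general while preserving it for the complete design $Z_c$ (since $A$ has full rank, $Z_c$ and $AZ_c$ are equivalent). The same reasoning applies here: because $A^t$ is unit upper-triangular of full rank, it does not reduce the column span, so for the complete design the equality upgrades to the honest equivalence (reference, logistic, $Z_c$) $\Leftrightarrow$ (adjacent, logistic, $Z_c$) of \citet{agresti_ordinal}, while at the level of subfamilies the stability of the admissible designs under $A^t$ and $(A^t)^{-1}$ is what legitimizes both inclusions.

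Once this stability is in hand, the corollary is immediate and requires no further computation; the whole content is the transpose analogue of the $A$-based argument already carried out for cumulative and sequential exponential models.
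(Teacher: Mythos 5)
Your proof is correct and follows essentially the same route as the paper: the paper derives this corollary by the identical double-inclusion argument used for the cumulative/sequential exponential case, invoking the preceding Property together with the invertibility of $A^t$ (and, for the complete design, the equivalence of $Z_c$ and $A^tZ_c$ recovering the result of \citet{agresti_ordinal}). Your added remark about checking that left-multiplication by $A^t$ and $(A^t)^{-1}$ keeps design matrices in $\mathfrak{Z}$ is a point the paper leaves implicit, but it does not change the argument.
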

As previously, $A^t$ generally changes the constraints on space $\mathcal{C}$. For example, the design matrices $Z_p$ and $A^tZ_p$ are not equivalent. The particular equality
\[ (\mbox{reference, logistic, } (A^{t})^{-1}Z_p) = (\mbox{adjacent, logistic, } Z_p), \]
where
\[(A^{t})^{-1} Z_p = (A^{-1})^{t} Z_p = \begin{pmatrix}
1 & 1& \ldots &1 & (J-1)\boldsymbol{x}^t \\
 & \ddots& &\vdots & \vdots \\
 & & 1 & 1& 2\boldsymbol{x}^t  \\
 & & & 1 & \boldsymbol{x}^t
\end{pmatrix},
\]
corresponds to a particular reparametrization described by \citet{agresti_ordinal}. Noting that the design matrices $Z_c$ and $A^tZ_c$ are equivalent because $A^t$ has full rank, we recover the equivalence \eqref{eq_agresti} of \citet{agresti_ordinal}.

Equivalences \eqref{eq_agresti} of \citet{agresti_ordinal} and \eqref{eq_tutz} of \citep{tutz1991sequential} can be extended to equalities of subfamilies because the reparametrization is linear with respect to $\boldsymbol{\eta}$, whereas equivalences \eqref{eq_laara} of \citet{laara} and \eqref{eq_Pareto} cannot (the reparametrizations \eqref{reparametrization_laara} and \eqref{reparametrization_Pareto} are not linear with respect to $\boldsymbol{\eta}$). Finally, the four families of reference, adjacent, cumulative and sequential models can be viewed as non-disjoint sets; see figure \ref{patates}, where $\mathcal{L}$ (respectively $\mathcal{E}$, $\mathcal{G}_{\mbox{-}}$ and $\mathcal{P}_a$) denote the logistic (respectively exponential, Gumbel min and Pareto) cdf.

\begin{figure}[h!]
\centering
\caption{Families intersections and models equivalences \eqref{eq_laara} of \citet{laara} (stars), \eqref{eq_tutz} of \citet{tutz1991sequential} (dots), \eqref{eq_agresti} of \citet{agresti_ordinal} (squares) and \eqref{eq_Pareto} (crosses).}\label{patates}
\includegraphics[width=1\textwidth]{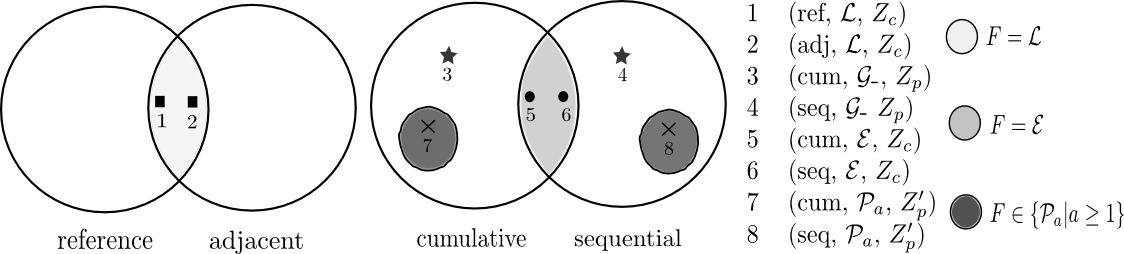}
\end{figure}


	\section{\label{invariance} Permutation invariance and stability}
	Corollary \ref{ref=adj} shows an equality between two families, one dedicated to nominal data and the other one to ordinal data, since the adjacent ratio uses the category ordering assumption whereas the reference ratio does not. These two families of reference and adjacent models overlap for the particular case of the logistic cdf (see figure \ref{patates}). Thus, we can wonder whether this subfamily of canonical models is more appropriate for nominal or ordinal data. More generally, we want to classify each $(r,F,Z)$ triplet as a nominal or an ordinal model. \citet{mccullagh78} proposed that models for nominal categories should be invariant under arbitrary permutations and that models for ordinal data should be invariant only under the special reverse permutation. We therefore investigate permutation invariances for each family of models.

\vspace*{0.2cm}

Let us first introduce the notion of permutation invariance. Each index $j \in \lbrace 1,\ldots,J \rbrace$ is associated with a particular category. Modifying this association potentially changes the model. Such a modification is characterized by a permutation $\sigma$ of $\lbrace 1,\ldots,J \rbrace$. The permuted vector parameter $\boldsymbol{\pi}_{\sigma}=(\pi_{\sigma(1)}, \ldots, \pi_{\sigma(J-1)}) $ is thus introduced and the permuted model is summarized by
\[ r(\boldsymbol{\pi}_{\sigma})=\mathcal{F}(Z \boldsymbol{\beta}), \]
and is denoted by the indexed triplet $(r,F,Z)_{\sigma}$. 
\begin{Def}
Let $\sigma$ be a permutation of $\lbrace 1,\ldots,J \rbrace$. A model $(r,F,Z)$ is said to be invariant under $\sigma$ if 
the models $(r,F,Z)$ and $(r,F,Z)_{\sigma}$ are equivalent. A family of models $\mathfrak{M}$ is said to be stable under $\sigma$ if $\sigma(\mathfrak{M}) = \mathfrak{M}$, where $ \sigma(\mathfrak{M}) = \{ (r,F,Z)_{\sigma} |\; (r,F,Z) \in \mathfrak{M} \}$.
\end{Def}

		\subsection{Invariances of reference models}
		For the reference ratio the probability of each category is only connected with the probability of reference category $J$. Thus, we focus on permutations that fix the reference category.
\begin{prop}\label{prop_ref}
Let $\sigma_J$ be a permutation of $\{1,\ldots,J\}$ such that $\sigma_J(J)=J$ and let $P_{\sigma_J}$ be the restricted permutation matrix of dimension $J-1$
\[ (P_{\sigma_J})_{i,j} = \left\{
\begin{array}{ll}
1 & \mbox{if } \; i=\sigma_J(j), \\
0 & \mbox{otherwise},
\end{array}
\right.
\]
for $i, j \in \{1,\ldots,J-1\}$. Then we have
\[ (\mbox{reference}, F, Z)_{\sigma_J} = (\mbox{reference}, F, P_{\sigma_J}Z), \]
for any $F \in \mathfrak{F}$ and any $Z \in \mathfrak{Z}$.
\end{prop}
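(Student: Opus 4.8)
The plan is to verify the definition of model equality directly, i.e.\ to show that for every $\boldsymbol{x} \in \mathcal{X}$ and every $\boldsymbol{\beta}$ the two triplets produce the same conditional distribution $\boldsymbol{Y}|\boldsymbol{X}=\boldsymbol{x}$. Since both models use the reference ratio and the same cdf $F$, and since $\boldsymbol{\pi}= r^{-1}\circ\mathcal{F}\{Z\boldsymbol{\beta}\}$, it suffices to show that the two systems of $J-1$ scalar equations relating $\boldsymbol{\pi}$ to $\boldsymbol{\eta}=Z\boldsymbol{\beta}$ coincide after a reindexing.

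First I would evaluate the reference ratio at the permuted vector $\boldsymbol{\pi}_{\sigma_J}=(\pi_{\sigma_J(1)},\ldots,\pi_{\sigma_J(J-1)})$. The crucial point is that because $\sigma_J$ fixes $J$, it restricts to a permutation of $\{1,\ldots,J-1\}$, so the probability of the reference category, namely $1-\sum_{k=1}^{J-1}\pi_{\sigma_J(k)}=1-\sum_{k=1}^{J-1}\pi_k=\pi_J$, is left unchanged. This yields $r_j(\boldsymbol{\pi}_{\sigma_J})=\pi_{\sigma_J(j)}/(\pi_{\sigma_J(j)}+\pi_J)$, so the permuted model $(\mbox{reference},F,Z)_{\sigma_J}$ is defined by the equations $\pi_{\sigma_J(j)}/(\pi_{\sigma_J(j)}+\pi_J)=F(\eta_j)$ for $j=1,\ldots,J-1$.

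Next I would reindex these equations by setting $k=\sigma_J(j)$, equivalently $j=\sigma_J^{-1}(k)$; as $j$ runs over $\{1,\ldots,J-1\}$ so does $k$, and the system becomes $\pi_k/(\pi_k+\pi_J)=F(\eta_{\sigma_J^{-1}(k)})$. The remaining step is to recognize the right-hand side as the $k$-th entry of $\mathcal{F}(P_{\sigma_J}\boldsymbol{\eta})$: from the definition $(P_{\sigma_J})_{i,j}=1 \Leftrightarrow i=\sigma_J(j)$ one gets $(P_{\sigma_J}\boldsymbol{\eta})_k=\eta_{\sigma_J^{-1}(k)}$, so the reindexed system is exactly $r_k(\boldsymbol{\pi})=F((P_{\sigma_J}Z\boldsymbol{\beta})_k)$, which is the defining system of $(\mbox{reference},F,P_{\sigma_J}Z)$. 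Since this holds for all $\boldsymbol{\beta}$ and all $\boldsymbol{x}$, the two models are equal.

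The main obstacle I anticipate is bookkeeping rather than anything conceptual: one must be careful to use $P_{\sigma_J}$ and not its inverse $P_{\sigma_J^{-1}}$, and the identity $(P_{\sigma_J}\boldsymbol{\eta})_k=\eta_{\sigma_J^{-1}(k)}$ should be pinned down against a small example (e.g.\ $J=3$ with the transposition of $1$ and $2$, where $P_{\sigma_J}$ is the antidiagonal $2\times 2$ matrix) to fix the convention. A secondary point worth a sentence is that $P_{\sigma_J}Z$ genuinely lies in $\mathfrak{Z}$: left multiplication by $P_{\sigma_J}$ only permutes the rows of the minimal identity block of $Z$ and the associated slope rows, so it sends proportional (resp.\ complete) designs to equivalent designs, and since the reference ratio imposes no ordering constraint on $\mathcal{S}$ (here $\mathcal{S}=\mathbb{R}^{J-1}$), no feasibility issue arises and the parameter dimension is preserved.
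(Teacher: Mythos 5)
Your proof is correct and follows essentially the same route as the paper: the key identity $r_j(\boldsymbol{\pi}_{\sigma_J})=r_{\sigma_J(j)}(\boldsymbol{\pi})$ (valid because $\sigma_J$ fixes the reference category, so $\pi_J$ is unchanged), followed by reindexing the $J-1$ equations via $(P_{\sigma_J}\boldsymbol{\eta})_k=\eta_{\sigma_J^{-1}(k)}$. The paper's proof is just a terser version of yours; your added remarks on the matrix convention and on $P_{\sigma_J}Z$ remaining a valid design are harmless extras.
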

\begin{proof}
For the reference ratio we have
\[ r_j(\boldsymbol{\pi}_{\sigma_J}) = r_{\sigma_J(j)}(\boldsymbol{\pi}), \]
for $j \in \{1,\ldots,J-1\}$. Thus we simply need to permute the linear predictors using $P_{\sigma_J}$ and we obtain $\boldsymbol{\eta}^{\prime} = P_{\sigma_J}\boldsymbol{\eta} =P_{\sigma_J}Z \boldsymbol{\beta} $.
\end{proof}
\noindent Noting that $P_{\sigma_J}$ is invertible with $P_{\sigma_J}^{-1}=P_{\sigma_J^{-1}}$, we get:
\begin{coro}
The family of reference models is stable under the $(J-1)!$ permutations that fix the reference category.
\end{coro}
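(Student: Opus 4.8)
The plan is to prove the set equality $\sigma_J(\mathfrak{M}) = \mathfrak{M}$ for every permutation $\sigma_J$ of $\{1,\ldots,J\}$ fixing the reference category $J$, and then to count such permutations, where $\mathfrak{M} = \{(\text{reference}, F, Z) \mid F \in \mathfrak{F},\, Z \in \mathfrak{Z}\}$ is the family of reference models. I would establish this by double inclusion, relying entirely on Property~\ref{prop_ref} together with the fact recorded just above, namely that $P_{\sigma_J}$ is invertible with $P_{\sigma_J}^{-1} = P_{\sigma_J^{-1}}$.

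For the inclusion $\sigma_J(\mathfrak{M}) \subseteq \mathfrak{M}$, I would take an arbitrary element $(\text{reference}, F, Z)_{\sigma_J}$ of $\sigma_J(\mathfrak{M})$, with $F \in \mathfrak{F}$ and $Z \in \mathfrak{Z}$. Property~\ref{prop_ref} rewrites it as the genuine, non-permuted model $(\text{reference}, F, P_{\sigma_J} Z)$. It then suffices to check that $P_{\sigma_J} Z \in \mathfrak{Z}$, since this makes the triplet specify a bona fide reference model and hence a member of $\mathfrak{M}$.

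For the reverse inclusion $\mathfrak{M} \subseteq \sigma_J(\mathfrak{M})$, I would use that $\sigma_J^{-1}$ also fixes $J$. Given $(\text{reference}, F, Z) \in \mathfrak{M}$, I would set $Z' = P_{\sigma_J}^{-1} Z = P_{\sigma_J^{-1}} Z$ and apply Property~\ref{prop_ref} to $\sigma_J$, obtaining $(\text{reference}, F, Z')_{\sigma_J} = (\text{reference}, F, P_{\sigma_J} Z') = (\text{reference}, F, Z)$; as $Z' \in \mathfrak{Z}$, the original model is exhibited as the image under $\sigma_J$ of a member of $\mathfrak{M}$. Finally, the permutations of $\{1,\ldots,J\}$ fixing $J$ are in bijection with the permutations of $\{1,\ldots,J-1\}$, of which there are $(J-1)!$, which yields the stated count.

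The main obstacle is the closure claim $P_{\sigma_J}\mathfrak{Z} = \mathfrak{Z}$, i.e. that left-multiplication by the restricted permutation matrix keeps a design matrix inside $\mathfrak{Z}$. Since $P_{\sigma_J}$ merely permutes the $J-1$ rows, equivalently the linear predictors, this amounts to verifying that relabelling the non-reference categories sends the equality-of-slopes constraint space $\mathcal{C}$ underlying $Z$ to another such constraint space, and in particular that $P_{\sigma_J} Z_p$ and $P_{\sigma_J} Z_c$ are equivalent to $Z_p$ and $Z_c$ respectively. This symmetry of $\mathfrak{Z}$ under permutation of the non-reference category labels is the point I would treat carefully, since it is exactly what legitimizes the bijectivity of $P_{\sigma_J}$ on $\mathfrak{Z}$ used in both inclusions.
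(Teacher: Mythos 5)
Your proof is correct and follows essentially the same route as the paper, which derives the corollary directly from Property~\ref{prop_ref} together with the observation that $P_{\sigma_J}$ is invertible with $P_{\sigma_J}^{-1}=P_{\sigma_J^{-1}}$. Your explicit attention to the closure claim $P_{\sigma_J}\mathfrak{Z}=\mathfrak{Z}$ is a point the paper leaves implicit, and you are right that it is the only step requiring any care.
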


\begin{coro}\label{ref_invariance}
Let $F \in \mathfrak{F}$. The particular (reference, $F$, complete) and (reference, $F$, proportional) models are invariant under the $(J-1)!$ permutations that fix the reference category.
\end{coro}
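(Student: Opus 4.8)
The plan is to reduce the statement to Property~\ref{prop_ref} and then to check, for each of the two special designs $Z_c$ and $Z_p$, that the row permutation produced by $P_{\sigma_J}$ can be absorbed into a bijective reparametrization of $\boldsymbol{\beta}$. First I would invoke Property~\ref{prop_ref}, which gives the \emph{equality} $(\mbox{reference},F,Z)_{\sigma_J} = (\mbox{reference},F,P_{\sigma_J}Z)$ for every $F \in \mathfrak{F}$ and every $Z \in \mathfrak{Z}$. By the definition of invariance, it then suffices to prove that $(\mbox{reference},F,Z)$ and $(\mbox{reference},F,P_{\sigma_J}Z)$ are \emph{equivalent} for $Z \in \{Z_c, Z_p\}$, i.e. to exhibit a bijection $h$ of the unconstrained parameter space with $P_{\sigma_J}Z\,\boldsymbol{\beta} = Z\,h(\boldsymbol{\beta})$ for all $\boldsymbol{\beta}$ and all $\boldsymbol{x}$. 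Writing $\sigma$ for the restriction of $\sigma_J$ to $\{1,\ldots,J-1\}$ (a genuine permutation since $\sigma_J(J)=J$), the matrix acts by $(P_{\sigma_J}\boldsymbol{\eta})_i = \eta_{\sigma^{-1}(i)}$.

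For the complete design, $Z_c\boldsymbol{\beta}$ has $i$-th entry $\alpha_i + \boldsymbol{x}^t\boldsymbol{\delta_i}$, so that $(P_{\sigma_J}Z_c\boldsymbol{\beta})_i = \alpha_{\sigma^{-1}(i)} + \boldsymbol{x}^t\boldsymbol{\delta_{\sigma^{-1}(i)}}$. Taking $h$ to be the block permutation $(\alpha_i,\boldsymbol{\delta_i}) \mapsto (\alpha_{\sigma^{-1}(i)},\boldsymbol{\delta_{\sigma^{-1}(i)}})$ yields $P_{\sigma_J}Z_c\,\boldsymbol{\beta} = Z_c\,h(\boldsymbol{\beta})$, and $h$ is a bijection of $\tilde{\mathcal{C}} = \mathbb{R}^{(J-1)(1+p)}$ because the complete design imposes no constraint. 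For the proportional design, $Z_p\boldsymbol{\beta}$ has $i$-th entry $\alpha_i + \boldsymbol{x}^t\boldsymbol{\delta}$ with a common slope $\boldsymbol{\delta}$; permuting the rows leaves $\boldsymbol{\delta}$ untouched and merely reorders the intercepts, so that $h:(\alpha_1,\ldots,\alpha_{J-1},\boldsymbol{\delta}) \mapsto (\alpha_{\sigma^{-1}(1)},\ldots,\alpha_{\sigma^{-1}(J-1)},\boldsymbol{\delta})$ satisfies $P_{\sigma_J}Z_p\,\boldsymbol{\beta} = Z_p\,h(\boldsymbol{\beta})$ and is a bijection of $\tilde{\mathcal{C}} = \mathbb{R}^{(J-1)+p}$. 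In both cases the permuted model is a reparametrization of the original, hence equivalent, which proves invariance; letting $\sigma_J$ range over the $(J-1)!$ permutations fixing $J$ then gives the stated conclusion.

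The only delicate point, which I regard as the crux, is verifying that $h$ is genuinely a bijection of the \emph{full} parameter space rather than just a linear map between predictor spaces. This rests on the fact that $Z_c$ and $Z_p$ impose no constraint among the parameters permuted by $P_{\sigma_J}$ (all intercept–slope blocks free for $Z_c$, all intercepts free and the slope common for $Z_p$), so that reordering them stays inside $\tilde{\mathcal{C}}$. For a general $Z \in \mathfrak{Z}$ the argument can break down, since $P_{\sigma_J}$ need not preserve the associated constrained space; this is precisely why the corollary is confined to the complete and proportional designs.
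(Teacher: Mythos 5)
Your proof is correct and follows the same route as the paper: reduce to Property~\ref{prop_ref} and then show that $Z_c$ and $P_{\sigma_J}Z_c$ (respectively $Z_p$ and $P_{\sigma_J}Z_p$) are equivalent design matrices, i.e. define the same constrained space. You merely make explicit the reparametrization $h$ (permuting the intercept--slope blocks, or the intercepts alone) that the paper leaves implicit in its appeal to full rank and to preservation of the common-slope contrast.
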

\begin{proof}
The design matrices $Z_c$ and $P_{\sigma_J}Z_c$ are equivalent because $P_{\sigma_J}$ has full rank. Therefore, (reference, $F$, $Z_c)_{\sigma_J}$ and (reference, $F$, $Z_c$) models are equivalent, which means that (reference, $F$, $Z_c$) model is invariant under $\sigma_J$. Moreover, the design matrices $Z_p$ and $P_{\sigma_J}Z_p$ are also equivalent. In fact, permutation $\sigma_J$ does not change the contrast of common slope
\[ \boldsymbol{\delta_1}=\ldots=\boldsymbol{\delta_{J-1}} \Leftrightarrow \boldsymbol{\delta_{\sigma_J(1)}}=\ldots=\boldsymbol{\delta_{\sigma_J(J-1)}}. \]
\end{proof}
\noindent According to property \ref{prop_ref}, reference link functions are invariant under the $(J-1)!$ permutations that fix the reference category. But now what happens if we transpose the reference category? Canonical link functions have specific behaviour in this case.

\subsubsection*{Invariances of canonical models} The logistic cdf gives a particular property to reference models.
\begin{prop}\label{inv_canonic}
Let $\tau$ be a non identical transposition of the reference category $J$ and $B_{\tau}$ be the $(J-1)$-identity matrix, whose $\tau(J)^{\mbox{{\footnotesize th}}}$ column is replaced by a column of $-1$. Then we have
\[ (\mbox{reference, logistic, }Z)_{\tau} = (\mbox{reference, logistic, }B_{\tau}Z), \]
for any design matrix $Z \in \mathfrak{Z}$.
\end{prop}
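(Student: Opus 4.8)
The plan is to follow the strategy of Property~\ref{prop_ref} while exploiting the defining feature of the logistic cdf, namely that $F^{-1}$ is the logit $\ln\{p/(1-p)\}$, so that the canonical (reference, logistic) link satisfies $F^{-1}(r_j(\boldsymbol{\pi}))=\ln(\pi_j/\pi_J)$. Write $\tau$ as the transposition swapping the reference category $J$ with $k:=\tau(J)\in\{1,\ldots,J-1\}$, so $\tau(k)=J$ and $\tau(i)=i$ otherwise. The one genuinely new ingredient, compared with Property~\ref{prop_ref}, is that $\tau$ moves the reference category, so I first pin down the reference probability implied by the permuted truncated vector $\boldsymbol{\pi}_\tau=(\pi_{\tau(1)},\ldots,\pi_{\tau(J-1)})$: since $\{\tau(1),\ldots,\tau(J-1)\}=\{1,\ldots,J\}\setminus\{k\}$, the entries of $\boldsymbol{\pi}_\tau$ sum to $1-\pi_k$, hence the implied reference probability is $\pi_k$ rather than $\pi_J$.

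Next I would evaluate the reference ratios on $\boldsymbol{\pi}_\tau$ in two cases. For $j=k$ we have $\tau(k)=J$, giving $r_k(\boldsymbol{\pi}_\tau)=\pi_J/(\pi_J+\pi_k)$; for $j\neq k$ we have $\tau(j)=j$, giving $r_j(\boldsymbol{\pi}_\tau)=\pi_j/(\pi_j+\pi_k)$. Applying the logit and writing $\xi_i:=\ln(\pi_i/\pi_J)$ linearises these quantities: $F^{-1}(r_k(\boldsymbol{\pi}_\tau))=-\xi_k$ and $F^{-1}(r_j(\boldsymbol{\pi}_\tau))=\xi_j-\xi_k$ for $j\neq k$. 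Imposing the permuted-model equation $F^{-1}(r_j(\boldsymbol{\pi}_\tau))=\eta_j=(Z\boldsymbol{\beta})_j$ and solving for $\xi$ then gives $\xi_k=-\eta_k$ and $\xi_j=\eta_j-\eta_k$ for $j\neq k$.

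Finally I would recognise this system as $\boldsymbol{\xi}=B_\tau\boldsymbol{\eta}$: multiplying $\boldsymbol{\eta}$ by the identity matrix whose $k$-th ($=\tau(J)$-th) column is replaced by a column of $-1$ produces exactly $(B_\tau\boldsymbol{\eta})_j=\eta_j-\eta_k$ for $j\neq k$ and $(B_\tau\boldsymbol{\eta})_k=-\eta_k$. Thus the distribution generated by $(\mbox{reference, logistic}, Z)_\tau$ satisfies $\ln(\pi_j/\pi_J)=(B_\tau Z\boldsymbol{\beta})_j$, which is precisely the defining relation of $(\mbox{reference, logistic}, B_\tau Z)$; as this identity holds for every $\boldsymbol{\beta}$ and the map $\boldsymbol{\xi}\mapsto\boldsymbol{\pi}$ is a bijection, the two models are equal. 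The main point to control is the bookkeeping of the moving reference category, and the crucial collapse of $F^{-1}(r_j(\boldsymbol{\pi}_\tau))$ into a difference of the $\xi_i$ rests entirely on the logistic form $F^{-1}=\ln\{p/(1-p)\}$; this is why the statement is confined to canonical models and would break for a general $F$.
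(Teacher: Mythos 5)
Your proof is correct and follows essentially the same route as the paper's: you identify $\pi_{\tau(J)}$ as the new implied reference, compute the permuted reference ratios in the two cases $j=\tau(J)$ and $j\neq\tau(J)$, and use the logistic link to turn the change of reference category into the linear map $B_\tau$ on the predictors. The paper carries out the identical computation in exponential form ($\pi_j/\pi_J=\exp(\eta_j-\eta_{\tau(J)})$, $\pi_{\tau(J)}/\pi_J=\exp(-\eta_{\tau(J)})$) rather than your log form, so the two arguments coincide.
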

\begin{proof}
Assume that the distribution of $\boldsymbol{Y}|\boldsymbol{X}=\boldsymbol{x}$ is defined by the transposed canonical (reference, logistic, $Z)_{\tau}$ model. Thus we obtain
\[ \left\{
\begin{array}{ll}
\displaystyle \frac{\pi_j}{\pi_{\tau(J)}} = \exp(\eta_j) & \mbox{for}\; j \neq J \; \mbox{and} \; j \neq \tau(J), \\
\displaystyle \frac{\pi_J}{\pi_{\tau(J)}} = \exp(\eta_{\tau(J)}), &
\end{array}
\right.
\]
or equivalently
\[ \left\{
\begin{array}{ll}
\displaystyle \frac{\pi_j}{\pi_{J}} = \frac{\pi_j}{\pi_{\tau(J)}} \frac{\pi_{\tau(J)}}{\pi_{J}}= \exp(\eta_j-\eta_{\tau(J)}) & \mbox{for} j \neq J \; \mbox{and} \; j \neq \tau(J), \\
\displaystyle \frac{\pi_{\tau(J)}}{\pi_J} = \exp(-\eta_{\tau(J)}). &
\end{array}
\right.
\]
Hence $\boldsymbol{Y}|\boldsymbol{X}=\boldsymbol{x}$ follows the canonical (reference, logistic, $B_{\tau}Z$) model.
\end{proof}
\noindent Noting that $B_{\tau}$ is invertible with $B_{\tau}^{-1}=B_{\tau}$ we get:
\begin{coro}\label{canonic}
The subfamily of canonical models is stable under all permutations.
\end{coro}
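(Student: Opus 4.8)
The plan is to prove the two inclusions making up $\sigma(\mathfrak{M}) = \mathfrak{M}$ for the canonical family $\mathfrak{M} = \{(\text{reference, logistic}, Z) : Z \in \mathfrak{Z}\}$ at once, using the same device already exploited in the preceding corollaries. I would show $\sigma(\mathfrak{M}) \subseteq \mathfrak{M}$ for \emph{every} permutation $\sigma$ of $\{1,\ldots,J\}$ and then deduce equality by applying this inclusion to $\sigma^{-1}$, since $\mathfrak{M} = \sigma(\sigma^{-1}(\mathfrak{M})) \subseteq \sigma(\mathfrak{M}) \subseteq \mathfrak{M}$. Concretely, it suffices to check that for each $Z \in \mathfrak{Z}$ the permuted model $(\text{reference, logistic}, Z)_\sigma$ is again of the form $(\text{reference, logistic}, Z')$ with $Z' \in \mathfrak{Z}$.

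First I would record how the permutation operation composes. From $(\boldsymbol{\pi}_\sigma)_{\sigma'} = \boldsymbol{\pi}_{\sigma \circ \sigma'}$ one obtains, for any model $M$, the rule $(M_\sigma)_{\sigma'} = M_{\sigma' \circ \sigma}$; that is, re-permuting amounts to composing the permutations in reverse order. This lets me split an arbitrary $\sigma$ into pieces already controlled by Property~\ref{prop_ref} and Property~\ref{inv_canonic}. Setting $m = \sigma(J)$ and letting $\tau = (m\; J)$ be the transposition of $J$ with $m$ (the identity when $m = J$), the permutation $\sigma_J := \tau \circ \sigma$ fixes $J$ and satisfies $\sigma = \tau \circ \sigma_J$. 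The composition rule then gives $(\text{reference, logistic}, Z)_\sigma = \big((\text{reference, logistic}, Z)_{\sigma_J}\big)_\tau$.

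The two building blocks now apply in turn. Property~\ref{prop_ref}, with $F$ the logistic cdf, converts the inner permutation into a design-matrix change, $(\text{reference, logistic}, Z)_{\sigma_J} = (\text{reference, logistic}, P_{\sigma_J}Z)$, which is canonical since $P_{\sigma_J}Z \in \mathfrak{Z}$. If $m = J$ we are already finished. Otherwise $\tau$ is a genuine transposition of the reference, so Property~\ref{inv_canonic} applies to the design matrix $Z' := P_{\sigma_J}Z$ and yields $(\text{reference, logistic}, Z')_\tau = (\text{reference, logistic}, B_\tau Z')$, again canonical. Composing, $(\text{reference, logistic}, Z)_\sigma = (\text{reference, logistic}, B_\tau P_{\sigma_J} Z) \in \mathfrak{M}$, which is the desired inclusion. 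That each intermediate matrix stays in $\mathfrak{Z}$ rests, exactly as in the earlier corollaries, on $P_{\sigma_J}$ and $B_\tau$ being invertible (with $P_{\sigma_J}^{-1} = P_{\sigma_J^{-1}}$ and $B_\tau^{-1} = B_\tau$), so the associated constrained spaces are preserved.

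I expect the main obstacle to be bookkeeping rather than substance: getting the composition order right in $(M_\sigma)_{\sigma'} = M_{\sigma' \circ \sigma}$ (the operation is an anti-action, so applying Property~\ref{prop_ref} before Property~\ref{inv_canonic} matches the factorization $\sigma = \tau \circ \sigma_J$ and not $\sigma_J \circ \tau$), and isolating the degenerate case $\sigma(J) = J$ in which $\tau$ is trivial and only Property~\ref{prop_ref} is needed. Everything else is a direct chaining of the two propositions together with the invertibility remark that upgrades the one-sided inclusion to full stability.
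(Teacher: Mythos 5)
Your argument is correct and is essentially the paper's own (implicit) proof: the corollary is stated there as an immediate consequence of Properties~\ref{prop_ref} and~\ref{inv_canonic} together with the invertibility of $P_{\sigma_J}$ and $B_{\tau}$, which is exactly the chain you make explicit via the factorization $\sigma=\tau\circ\sigma_J$. Your careful treatment of the anti-action composition rule and of the degenerate case $\sigma(J)=J$ fills in details the paper leaves to the reader, but the route is the same.
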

\noindent Noting that matrices $P_{\sigma_J}$ and $B_{\tau}$ have full rank, we get:
\begin{coro}\label{canonical_model}
The canonical model (reference, logistic, complete) is invariant under all permutations.
\end{coro}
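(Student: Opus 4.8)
The plan is to reduce an arbitrary permutation to the two situations that have already been settled, namely permutations fixing the reference category (Property~\ref{prop_ref}, with the equivalence-to-$Z_c$ consequence recorded in Corollary~\ref{ref_invariance}) and transpositions of the reference category (Property~\ref{inv_canonic}), and then to splice these two elementary descriptions together. The key preliminary observation is a factorisation inside the symmetric group: given any permutation $\sigma$ of $\{1,\ldots,J\}$, put $k=\sigma(J)$. If $k=J$ then $\sigma$ already fixes the reference category and we are in the first case. If $k\neq J$, let $\tau=(k\ J)$ be the transposition exchanging $k$ and $J$; then $\sigma_J:=\tau\circ\sigma$ satisfies $\sigma_J(J)=J$, so that $\sigma=\tau\circ\sigma_J$. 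Hence every permutation is a permutation fixing $J$ followed by at most one transposition of the reference category, which are exactly the two moves for which Property~\ref{prop_ref} and Property~\ref{inv_canonic} supply explicit design-matrix rewritings.

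Next I would feed this factorisation into the two Properties, writing $M=(\mbox{reference, logistic, complete})$. Because permuting the probability vector satisfies $\boldsymbol{\pi}_{\tau\circ\sigma_J}=(\boldsymbol{\pi}_{\tau})_{\sigma_J}$, I can peel the permutations off one at a time. Applying Property~\ref{prop_ref} to the outer $\sigma_J$ turns the defining relation $r(\boldsymbol{\pi}_\sigma)=\boldsymbol{\mathcal{F}}(Z_c\boldsymbol{\beta})$ into $r(\boldsymbol{\pi}_{\tau})=\boldsymbol{\mathcal{F}}(P_{\sigma_J}Z_c\boldsymbol{\beta})$, and then applying Property~\ref{inv_canonic} to the remaining transposition $\tau$ (now acting on the canonical model with design $P_{\sigma_J}Z_c$) yields
\[ M_{\sigma}=(\mbox{reference, logistic, }B_\tau P_{\sigma_J}Z_c). \]
Thus the permuted canonical model is again a canonical model whose design matrix differs from $Z_c$ only by left multiplication by the product $B_\tau P_{\sigma_J}$. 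Since both $P_{\sigma_J}$ and $B_\tau$ are invertible $(J-1)\times(J-1)$ matrices (as noted just before the statement), their product has full rank.

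To finish, I would invoke the same equivalence-preservation principle already used in Corollary~\ref{ref_invariance} and in the discussion of $Z_c$ versus $AZ_c$: left multiplication of the complete design $Z_c$ by a full-rank square matrix produces an equivalent design, because $Z_c$ carries no constraint on $\mathcal{C}$ and an invertible transformation neither lowers its dimension nor alters its equivalence class. Therefore $B_\tau P_{\sigma_J}Z_c$ is equivalent to $Z_c$, so $M_\sigma$ is equivalent to $M$, which is precisely invariance of the canonical model under $\sigma$. As $\sigma$ was arbitrary, the canonical model is invariant under all permutations.

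The step I expect to be the main obstacle is the splicing in the middle paragraph: justifying cleanly that Property~\ref{prop_ref} and Property~\ref{inv_canonic} may be composed along $\sigma=\tau\circ\sigma_J$ to produce the single matrix $B_\tau P_{\sigma_J}$ in the correct order. This rests on the composition law $\boldsymbol{\pi}_{\tau\circ\sigma_J}=(\boldsymbol{\pi}_{\tau})_{\sigma_J}$ for the permutation operation on models, together with the fact that this operation respects model equivalence so that the two full-rank reparametrisations can be chained. Everything else is routine bookkeeping with invertible matrices; the only genuine care lies in handling this composition of permuted models and the order of the resulting matrix product.
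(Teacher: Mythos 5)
Your proof is correct and follows essentially the same route as the paper, which derives the corollary in one line from Properties~\ref{prop_ref} and~\ref{inv_canonic} by ``noting that matrices $P_{\sigma_J}$ and $B_{\tau}$ have full rank''; the factorisation $\sigma=\tau\circ\sigma_J$ and the resulting design matrix $B_{\tau}P_{\sigma_J}Z_c$ are exactly the argument the paper leaves implicit. You have merely spelled out the composition step that the paper omits.
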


			\subsubsection*{Non-invariance}
			
\begin{conj}
Let $F \in \mathfrak{F} \setminus \mathcal{L}$ (respectively $F \in \mathfrak{F}$), where $\mathcal{L}$ denotes the logistic cdf. The particular (reference, $F$, complete) (respectively (reference, $F$, proportional)) model is invariant only under the $(J-1)!$ permutations that fix the reference category.
\end{conj}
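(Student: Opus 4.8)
The plan is to prove only the ``only'' part, since invariance under the $(J-1)!$ permutations fixing $J$ is exactly Corollary \ref{ref_invariance}. First I would make a structural reduction: for a fixed model, the set $H=\{\sigma\in S_J : (r,F,Z)\sim(r,F,Z)_\sigma\}$ is a subgroup of the symmetric group $S_J$. Closure and inverses follow from the fact that the permutation action satisfies $((r,F,Z)_\sigma)_\rho=(r,F,Z)_{\sigma\rho}$ and that model equivalence is preserved when both members of an equivalent pair are relabelled by the same $\rho$. Because the natural action of $S_J$ on $\{1,\ldots,J\}$ is $2$-transitive, hence primitive, the point stabiliser $\mathrm{Stab}(J)\cong S_{J-1}$ is a maximal subgroup; since $H\supseteq\mathrm{Stab}(J)$ by Corollary \ref{ref_invariance}, maximality forces $H=S_{J-1}$ or $H=S_J$. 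Thus it suffices to exhibit a single transposition $\tau$ exchanging $J$ with some $k\in\{1,\ldots,J-1\}$ under which the model is not invariant.

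Write $\gamma=\ln\{F/(1-F)\}$, a continuous increasing bijection of $\mathbb{R}$, and $G=\exp\gamma$. Mimicking the computation in Proposition \ref{inv_canonic} but for general $F$, the permuted model $(\text{reference},F,Z)_\tau$ has reference category $k$, whence $\pi_j=\pi_k\,G(\eta_j)$ for $j\neq k$ and $\pi_J=\pi_k\,G(\eta_k)$; expressed through the original reference $J$ this reads $\ln(\pi_j/\pi_J)=\gamma(\eta_j)-\gamma(\eta_k)$ for $j\neq k$ and $\ln(\pi_k/\pi_J)=-\gamma(\eta_k)$. Invariance therefore forces, for every admissible parameter value, the existence of affine predictors with $\eta'_j=\gamma^{-1}(\gamma(\eta_j)-\gamma(\eta_k))$ and $\eta'_k=\gamma^{-1}(-\gamma(\eta_k))$, and uniqueness of $\gamma^{-1}$ makes these formulas mandatory.

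For the proportional design the obstruction is immediate and uniform in $F$. Pick two indices $j\neq k$ in $\{1,\ldots,J-1\}$ (possible as $J>2$) and take the parameter with $\alpha_j=\alpha_k$ and common slope $\delta\neq0$. Then the permuted model produces $\pi_j\equiv\pi_J$ (since $\gamma(\eta_j)-\gamma(\eta_k)=0$) while $\ln(\pi_k/\pi_J)=-\gamma(\alpha_k+\delta x)$ is non-constant in $x$. A proportional reference model reproducing $\pi_j\equiv\pi_J$ would need $\gamma(\alpha'_j+\delta' x)\equiv0$, forcing the common slope $\delta'=0$, hence every ratio $\ln(\pi_i/\pi_J)=\gamma(\alpha'_i)$ constant in $x$, contradicting the non-constant $k$-th ratio. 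So no proportional model yields this distribution and $(\text{reference},F,Z_p)$ fails to be invariant under $\tau$, for every $F$. For the complete design the slopes $\delta_i$ are free, this obstruction disappears, and one must genuinely use $F\notin\mathcal{L}$: now $(\eta_j,\eta_k)$ sweeps out every line of the plane, so requiring $\Psi(s,t):=\gamma^{-1}(\gamma(s)-\gamma(t))$ to be affine along all of them, with $\gamma$ continuous, forces $\Psi(s,t)=As+Bt+C_0$ globally. Setting $s=t$ gives $A+B=0$ and $\gamma(C_0)=0$, injectivity of $\gamma$ gives $A=1$, and with $\tilde\gamma(u)=\gamma(u+C_0)$ the identity becomes Cauchy's equation $\tilde\gamma(p)-\tilde\gamma(q)=\tilde\gamma(p-q)$, whose continuous solutions are linear. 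Hence $\gamma$ is affine, i.e. $F$ is logistic, contradicting $F\in\mathfrak{F}\setminus\mathcal{L}$; so the complete model is not invariant under $\tau$ either, and both assertions follow from the maximality step.

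The delicate point is the complete-design argument: passing from ``$\eta'_j$ is affine for all parameter values'' to ``$\gamma$ is affine'' relies on the lemma that a continuous function affine on every line is affine, and on solving the ensuing Cauchy equation, both of which require the predictors to range over enough of $\mathbb{R}$. This is transparent for $\mathcal{X}=\mathbb{R}$; for a half-line or bounded $\mathcal{X}$ one first gets affinity of $\gamma$ only on the attainable interval and must then argue that a cdf whose log-odds transform is affine on an interval cannot be a non-logistic member of $\mathfrak{F}$, which is precisely where the regularity assumptions on $F$ and the exact meaning of $\mathfrak{F}\setminus\mathcal{L}$ enter. By contrast the subgroup-and-maximality reduction and the proportional case are routine.
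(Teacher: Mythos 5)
The paper does not actually prove this statement: it is posed as a conjecture, and the only support offered is empirical (the ordered log-likelihood plots of the permuted models on the boys'-dreams dataset, with the remark that analytic non-invariance arguments are ``more complex for Cauchy distribution for instance''). Your proposal is therefore doing something genuinely different --- it is an actual proof strategy, and in my view an essentially sound one. The two ingredients are both absent from the paper and both valuable: (i) the observation that the invariance permutations form a subgroup of $S_J$ containing $\mathrm{Stab}(J)$, which by maximality of $S_{J-1}$ in $S_J$ reduces the whole conjecture to exhibiting a single transposition of the reference category that breaks invariance; and (ii) the explicit computation, paralleling Property \ref{inv_canonic}, showing that invariance under such a transposition forces $\gamma=\ln\{F/(1-F)\}$ to satisfy a functional equation whose only continuous solutions are affine, i.e.\ $F$ logistic up to location--scale (which the paper identifies with $\mathcal{L}$), while the proportional case fails for every $F$ by a direct slope-counting contradiction. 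Your proportional argument is airtight for any nondegenerate $\mathcal{X}$, and it correctly explains why the logistic cdf is \emph{not} excepted in the proportional half of the conjecture. What remains to be nailed down is exactly what you flag: the complete-design argument needs $(\eta_j,\eta_k)$ to sweep out enough line segments, which is clear for $\mathcal{X}=\mathbb{R}$ but requires an extra local-to-global step for bounded or half-line $\mathcal{X}$, and for members of $\mathfrak{F}$ with restricted support (exponential, Pareto) the function $\gamma$ is only defined on part of $\mathbb{R}$, so the Cauchy-equation step must be run on the attainable region; one should also record the routine verification that the permutation action on models is indeed a group action compatible with equivalence, so that the stabilizer really is a subgroup. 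With those details supplied, your argument would upgrade the paper's conjecture to a theorem, which the numerical evidence in the paper cannot do.
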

\noindent The (reference, $F$, complete) and (reference, $F$, proportional) models are invariant under the $(J-1)!$ permutations that fix the reference category (Corollary \ref{ref_invariance}). But are they still invariant under other permutations? Non-invariance of models may be shown when $F$ is analytically defined. This is more complex for Cauchy distribution for instance. Figure \ref{ref_perm} therefore investigates non-invariance of reference models on a benchmark dataset. We use, here and in the following, the boy’s disturbed dreams benchmark dataset drawn from a study that cross-classified boys by their age $x$ and the severity of their disturbed dreams $y$ ($4$ ordered categories) ; see \citet{maxwell1961analysing}. The canonical (reference, logistic, complete) model, which is invariant under all permutations, is a special case (represented in blue in figure \ref{ref_perm}.a). For other cases we obtain $J!/(J-1)!=J=4$ plateaus as expected. Each plateau corresponds to a particular reference category with the $(J-1)!=6$ permutations that fix this category.

	\begin{figure}[!h]
	\centering
	\caption{Ordered log-likelihood of a. (reference, $F$, complete$)_{\sigma}$ and b. (reference, $F$, proportional$)_{\sigma}$ models for all permutations $\sigma$ and for logistic (blue), Cauchy (green) and Gumbel min (red) cdfs.}\label{ref_perm}
    	\includegraphics[width=0.48\textwidth]{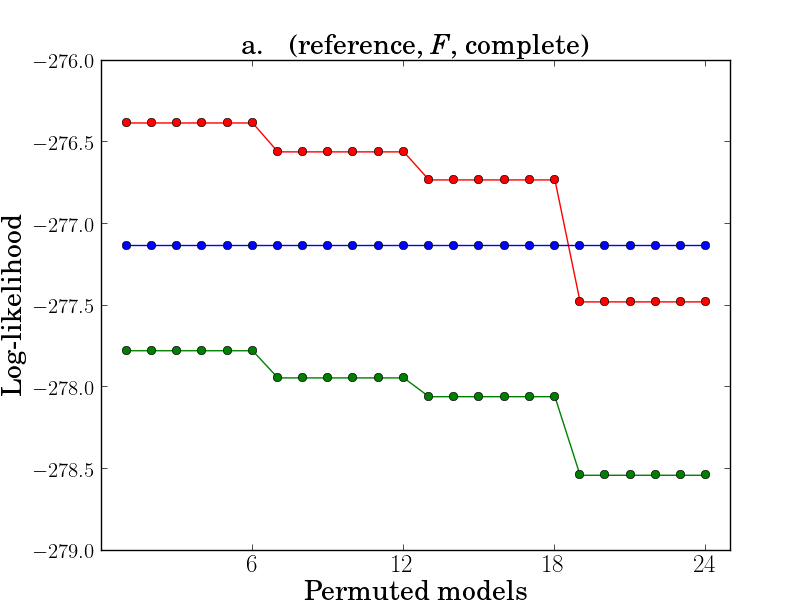}
    	\includegraphics[width=0.48\textwidth]{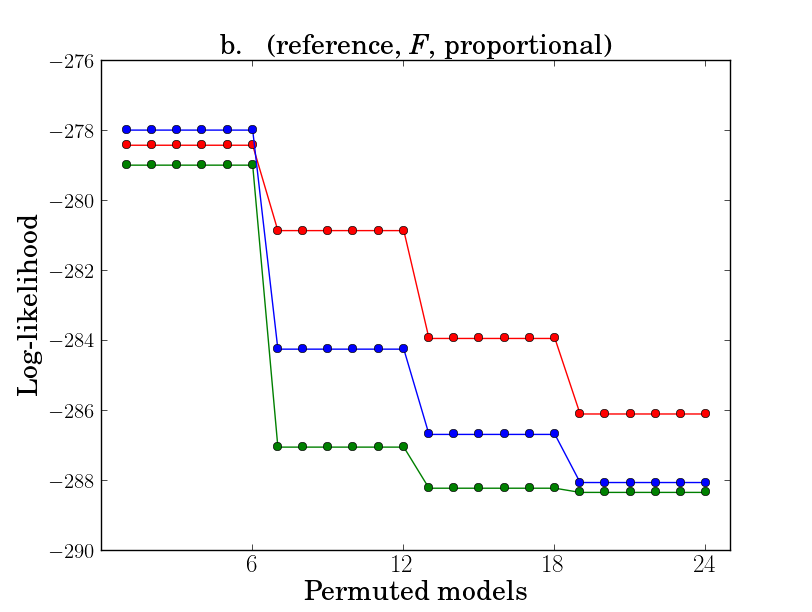}
    	\end{figure}

\noindent Given the reference category, ordering assumption has no impact on reference model fit as expected. We have shown that reference models are more or less invariant under transpositions of the reference category. Reference models can thus be classified according to their degree of invariance; see figure \ref{nominal_scale}. 

\begin{figure}[!h]
\centering
\caption{Classification of reference models on an invariance scale.}\label{nominal_scale}
\includegraphics[width=1\textwidth]{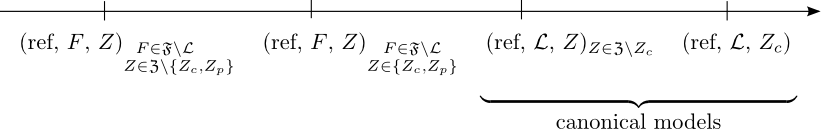}
\end{figure}

		\subsection{Invariances of adjacent and cumulative models}
		Adjacent, cumulative and sequential ratios are defined using the category ordering assumption and are thus naturally devoted to ordinal data. According to \citet{mccullagh78}, models for ordinal data should be invariant only under the special reverse permutation. In particular, \citet{mccullagh80} showed that the three models (cumulative, logistic, proportional), (cumulative, normal, proportional) and (cumulative, Cauchy, proportional) are invariant under the reverse permutation. We generalize this property to any symmetrical cdf $F$ and also for the complete design. The adjacent ratio turns out to have the same property, but not the sequential ratio.
\begin{prop}
Let $\tilde{\sigma}$ be the reverse permutation (i.e. $\tilde{\sigma}(j)=J-j+1,\; \forall j \in \{1,\ldots,J\}$) and $\tilde{P}$ be the restricted reverse permutation matrix of dimension $J-1$
\[ \tilde{P} = \begin{pmatrix}
 & & 1\\
 &  \revdots &  \\
1 & &
\end{pmatrix}.
\]
Then we have
\[ \mbox{(adjacent, }F, Z)_{\tilde{\sigma}} = \mbox{(adjacent, }\tilde{F}, -\tilde{P}Z), \]
and
\[ \mbox{(cumulative, }F, Z)_{\tilde{\sigma}} = \mbox{(cumulative, }\tilde{F}, -\tilde{P}Z), \]
for any cdf $F \in \mathfrak{F}$ and any design matrix $Z \in \mathfrak{Z}$.
\end{prop}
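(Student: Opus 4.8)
The plan is to treat the adjacent and cumulative cases in parallel, since both reduce to a single complementation identity for the ratio under reversal, followed by the defining relation $\tilde{F}(w)=1-F(-w)$. First I would fix $\boldsymbol{\beta}$ and write the permuted probability vector explicitly. Under the reverse permutation $\tilde{\sigma}$, the $j$-th coordinate of $\boldsymbol{\pi}_{\tilde{\sigma}}$ is $\pi_{\tilde{\sigma}(j)}=\pi_{J-j+1}$, so the full vector $(\pi_1,\ldots,\pi_J)$ is simply read backwards (the implicit ``last'' probability of $\boldsymbol{\pi}_{\tilde{\sigma}}$ being $\pi_1$). Writing $\boldsymbol{\eta}=Z\boldsymbol{\beta}$ for the predictor of the left-hand model, the defining equations of the permuted model are $r_j(\boldsymbol{\pi}_{\tilde{\sigma}})=F(\eta_j)$ for $j=1,\ldots,J-1$.

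The key step is to re-express $r_j(\boldsymbol{\pi}_{\tilde{\sigma}})$ through the un-permuted ratio at the reflected index $J-j$. For the cumulative ratio this is immediate: $r_j(\boldsymbol{\pi}_{\tilde{\sigma}})=\pi_J+\ldots+\pi_{J-j+1}=1-\sum_{k=1}^{J-j}\pi_k=1-r_{J-j}(\boldsymbol{\pi})$, using $\sum_{k=1}^{J}\pi_k=1$. For the adjacent ratio the same relation holds coordinatewise, since $(\boldsymbol{\pi}_{\tilde{\sigma}})_j=\pi_{J-j+1}$ and $(\boldsymbol{\pi}_{\tilde{\sigma}})_{j+1}=\pi_{J-j}$ give $r_j(\boldsymbol{\pi}_{\tilde{\sigma}})=\tfrac{\pi_{J-j+1}}{\pi_{J-j+1}+\pi_{J-j}}=1-\tfrac{\pi_{J-j}}{\pi_{J-j}+\pi_{J-j+1}}=1-r_{J-j}(\boldsymbol{\pi})$. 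Thus in both cases $r_j(\boldsymbol{\pi}_{\tilde{\sigma}})=1-r_{J-j}(\boldsymbol{\pi})$.

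Substituting into the model equation yields $r_{J-j}(\boldsymbol{\pi})=1-F(\eta_j)=\tilde{F}(-\eta_j)$ by the definition of $\tilde{F}$. Reindexing with $k=J-j$ (which again runs over $1,\ldots,J-1$) gives $r_k(\boldsymbol{\pi})=\tilde{F}(-\eta_{J-k})$. Finally I would recognise $-\eta_{J-k}$ as the $k$-th coordinate of $-\tilde{P}\boldsymbol{\eta}=-\tilde{P}Z\boldsymbol{\beta}$, since $\tilde{P}$ reverses coordinate order. Hence $\boldsymbol{\pi}$ satisfies $r_k(\boldsymbol{\pi})=\tilde{F}\bigl((-\tilde{P}Z\boldsymbol{\beta})_k\bigr)$ with the \emph{same} $\boldsymbol{\beta}$, which is exactly the right-hand model $(r,\tilde{F},-\tilde{P}Z)$; the linear map $-\tilde{P}$ is absorbed into the design matrix, so we obtain equality of models, not merely equivalence.

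The computations are routine; the points requiring care are the index bookkeeping under reversal (verifying $k=J-j$ stays in range, including the boundary $j=J-1$, and that $\tilde{P}$ acts on the predictor as claimed) and, for the cumulative case, checking that the transformed predictor remains in the constrained space $\mathcal{S}=\{\boldsymbol{\eta}\mid \eta_1<\ldots<\eta_{J-1}\}$. Here $-\tilde{P}$ first inverts the ordering via $\tilde{P}$ and then restores it via the sign change, so $-\tilde{P}$ maps $\mathcal{S}$ onto itself and the cumulative model on the right is well defined. I would also note that the sequential ratio does \emph{not} satisfy a clean complementation identity of the form $r_j(\boldsymbol{\pi}_{\tilde{\sigma}})=1-r_{J-j}(\boldsymbol{\pi})$, which is precisely why the proposition is stated only for the adjacent and cumulative ratios.
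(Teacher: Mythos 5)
Your proof is correct and follows essentially the same route as the paper: establish the complementation identity $r_j(\boldsymbol{\pi}_{\tilde{\sigma}})=1-r_{J-j}(\boldsymbol{\pi})$ for both ratios, convert $1-F(\eta_j)$ to $\tilde{F}(-\eta_j)$, and reindex so that $-\tilde{P}$ is absorbed into the design matrix. You merely supply details the paper omits (the explicit verification of the identity for each ratio and the check that $-\tilde{P}$ preserves the ordered predictor space), so no substantive difference.
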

\begin{proof}
For adjacent and cumulative ratios, it can be shown that
\begin{equation}
 r_j(\boldsymbol{\pi}_{\tilde{\sigma}}) = 1- r_{\tilde{\sigma}(j+1)}(\boldsymbol{\pi}), \label{complement}
\end{equation}
for $j \in \{1,\ldots,J-1\}$. Assume that the distribution $\boldsymbol{Y}|\boldsymbol{X}=\boldsymbol{x}$ is defined by the permuted $(r,F,Z)_{\tilde{\sigma}}$ model with $r=$ adjacent or cumulative, i.e.
\[ r_j(\boldsymbol{\pi}_{\tilde{\sigma}}) = F(\eta_j),\] 
for $j \in \{1,\ldots,J-1\}$. Using equality \eqref{complement}, we obtain equivalently
\begin{equation}
 r_{J-j}(\boldsymbol{\pi}) = \tilde{F}(-\eta_j), \label{symm} 
\end{equation}
for $j \in \{1,\ldots,J-1\}$. Denoting $i=J-j$, \eqref{symm} becomes
\begin{equation}
 r_i(\boldsymbol{\pi}) = \tilde{F}(-\eta_{J-i}), \nonumber
\end{equation}
for $i \in \{1,\ldots,J-1\}$. Hence $\boldsymbol{Y}|\boldsymbol{X}=\boldsymbol{x}$ follows the $(r,\tilde{F},-\tilde{P}Z)$ model.
\end{proof}
\noindent Noting that $\tilde{P}$ is invertible with $\tilde{P}^{-1}=\tilde{P}$, we get:
\begin{coro}\label{cum_adj_stability}
The two families of adjacent and cumulative models are stable under the reverse permutation.
\end{coro}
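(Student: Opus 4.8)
The plan is to verify the stability condition $\tilde{\sigma}(\mathfrak{M}) = \mathfrak{M}$ directly from the preceding proposition, where $\mathfrak{M}$ denotes the adjacent family $\{(\mbox{adjacent}, F, Z) : F \in \mathfrak{F},\, Z \in \mathfrak{Z}\}$ (and then verbatim for the cumulative family). The proposition tells us precisely where each permuted model lands: $(\mbox{adjacent}, F, Z)_{\tilde{\sigma}} = (\mbox{adjacent}, \tilde{F}, -\tilde{P}Z)$. Thus the corollary reduces to showing that the transformation $\Phi:(F, Z) \mapsto (\tilde{F}, -\tilde{P}Z)$ maps the index set $\mathfrak{F} \times \mathfrak{Z}$ bijectively onto itself, since then $\tilde{\sigma}(\mathfrak{M}) = \{(\mbox{adjacent}, \tilde{F}, -\tilde{P}Z) : F \in \mathfrak{F},\, Z \in \mathfrak{Z}\} = \mathfrak{M}$.

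First I would check the two closure facts. The symmetric $\tilde{F}(w) = 1 - F(-w)$ of a continuous, strictly increasing cdf is again a continuous, strictly increasing cdf, so $\tilde{F} \in \mathfrak{F}$ whenever $F \in \mathfrak{F}$. For the design matrix, since $-\tilde{P}$ is a full-rank $(J-1)\times(J-1)$ matrix, left-multiplication carries a valid design matrix to another valid design matrix with the same number of columns, hence $-\tilde{P}Z \in \mathfrak{Z}$. Together these give the inclusion $\tilde{\sigma}(\mathfrak{M}) \subseteq \mathfrak{M}$.

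For the reverse inclusion, and indeed to obtain equality in one stroke, I would exploit that both operations are involutions. We already have $\tilde{P}^{-1} = \tilde{P}$, so $(-\tilde{P})(-\tilde{P}) = \tilde{P}^2 = I$, and symmetrization satisfies $\tilde{\tilde{F}} = F$. Hence $\Phi \circ \Phi = \mbox{id}$ on $\mathfrak{F} \times \mathfrak{Z}$, so $\Phi$ is a bijection of the adjacent family onto itself and $\Phi(\mathfrak{F}\times\mathfrak{Z}) = \mathfrak{F}\times\mathfrak{Z}$. Combining this with the identity supplied by the proposition yields $\tilde{\sigma}(\mathfrak{M}) = \mathfrak{M}$. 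The argument for the cumulative family is word-for-word identical, the proposition furnishing the same identity with $r = \mbox{cumulative}$.

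The step requiring the most care is the claim that $-\tilde{P}Z$ genuinely lands in $\mathfrak{Z}$, rather than merely being a matrix of the right dimensions: one must confirm that negating and reversing the rows preserves the structural features qualifying a matrix as an element of $\mathfrak{Z}$ (its position between $Z_p$ and $Z_c$). Here I would recall the earlier remark about the matrix $A$, namely that such a left-multiplication can alter the constrained space $\mathcal{C}$ while still staying inside $\mathfrak{Z}$. Since stability is a statement about the family as a whole rather than about invariance of any individual model, this change of constraints is harmless; only the set membership $-\tilde{P}Z \in \mathfrak{Z}$ needs to be secured.
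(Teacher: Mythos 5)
Your proof is correct and follows essentially the same route as the paper: the paper derives the corollary from the preceding proposition by noting that $\tilde{P}$ is invertible with $\tilde{P}^{-1}=\tilde{P}$, which is exactly your involution argument for $\Phi:(F,Z)\mapsto(\tilde{F},-\tilde{P}Z)$, merely spelled out in more detail (the closure facts $\tilde{F}\in\mathfrak{F}$ and $-\tilde{P}Z\in\mathfrak{Z}$ and the identity $\tilde{\tilde{F}}=F$ are left implicit in the paper).
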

\begin{coro}\label{cum_adj_Zc_Zp}
Let $F \in \tilde{\mathfrak{F}}$. The particular (adjacent, $F$, complete), (adjacent, $F$, proportional), (cumulative, $F$, complete) and (cumulative, $F$, proportional) models are invariant under the reverse permutation.
\end{coro}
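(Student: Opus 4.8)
The plan is to read this off the preceding Property, once the symmetry hypothesis has been used to eliminate the change of cdf. First I would note that $F \in \tilde{\mathfrak{F}}$ means $F$ is symmetric, so $F(-w) = 1 - F(w)$ and hence $\tilde{F}(w) = 1 - F(-w) = F(w)$, i.e. $\tilde{F} = F$. Consequently, for $r$ equal to the adjacent or the cumulative ratio, the preceding Property specializes to
\[ (r, F, Z)_{\tilde{\sigma}} = (r, F, -\tilde{P}Z), \]
for every $Z \in \mathfrak{Z}$. By the definition of invariance, $(r,F,Z)$ is invariant under $\tilde{\sigma}$ exactly when $(r,F,Z)$ and $(r,F,Z)_{\tilde{\sigma}}$ are equivalent; via the display above this reduces to proving that the design matrices $Z$ and $-\tilde{P}Z$ are equivalent, in the two cases $Z = Z_c$ and $Z = Z_p$.

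For the complete design the argument is the same as the one already used for $P_{\sigma_J}Z_c$ and $AZ_c$: the matrix $-\tilde{P}$ has full rank, so left-multiplication by it is an invertible linear reparametrization of the unconstrained vector $\boldsymbol{\beta}$, whence $Z_c$ and $-\tilde{P}Z_c$ are equivalent. This settles invariance of the (adjacent, $F$, complete) and (cumulative, $F$, complete) models at once.

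The proportional design is the part requiring care, because $Z_p$ carries the nontrivial contrast $\boldsymbol{\delta_1}=\ldots=\boldsymbol{\delta_{J-1}}$, and (as the earlier discussion of $AZ_p$ shows) left-multiplication by a full-rank matrix need not preserve it. Here I would compute the predictors explicitly: writing $\eta_j = \alpha_j + \boldsymbol{x}^t\boldsymbol{\delta}$ for $Z_p$, the operation $-\tilde{P}$ reverses and negates the components, giving $\eta_j^{\prime} = -\alpha_{J-j} - \boldsymbol{x}^t\boldsymbol{\delta}$. These are again $J-1$ parallel affine functions of $\boldsymbol{x}$, with common slope $-\boldsymbol{\delta}$ and intercepts $-\alpha_{J-j}$, so as $\boldsymbol{\beta}$ ranges over $\tilde{\mathcal{C}}$ the predictor $-\tilde{P}Z_p\boldsymbol{\beta}$ sweeps out the same set of linear predictors as $Z_p\boldsymbol{\beta}$. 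Equivalently, since the reverse permutation fixes the all-ones vector, the slope block of $-\tilde{P}Z_p$ is still proportional to the all-ones column (hence the common-slope contrast is preserved), while the intercept block $-\tilde{P}$ remains full rank; thus $Z_p$ and $-\tilde{P}Z_p$ are equivalent, giving invariance of the (adjacent, $F$, proportional) and (cumulative, $F$, proportional) models.

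The only real obstacle is this last point: one cannot simply invoke full rank of $-\tilde{P}$ as in the complete case, but must verify that the reverse-and-negate operation keeps all slopes equal. The identity $\tilde{P}\mathbf{1} = \mathbf{1}$ — that the restricted reverse permutation sends the all-ones column to itself — is precisely what makes the proportional structure survive, in contrast to the difference matrix $A$, which destroys it.
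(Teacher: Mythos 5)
Your proof is correct and follows essentially the same route as the paper's: specialize the preceding Property using $\tilde{F}=F$ for symmetric $F$, then check that $Z_c$ and $-\tilde{P}Z_c$ are equivalent by full rank of $\tilde{P}$, and that $Z_p$ and $-\tilde{P}Z_p$ are equivalent because reversing and negating preserves the common-slope contrast ($\boldsymbol{\delta_1}=\ldots=\boldsymbol{\delta_{J-1}} \Leftrightarrow -\boldsymbol{\delta_{J-1}}=\ldots=-\boldsymbol{\delta_{1}}$). Your added remark that $\tilde{P}\mathbf{1}=\mathbf{1}$ is exactly the paper's implicit reason, just made explicit.
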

\begin{proof}
The design matrices $Z_p$ and $-\tilde{P}Z_p$ are equivalent because
\[ \boldsymbol{\delta_1}=\ldots=\boldsymbol{\delta_{J-1}} \Leftrightarrow -\boldsymbol{\delta_{J-1}}=\ldots=-\boldsymbol{\delta_{1}}. \]
The design matrices $Z_c$ and $-\tilde{P}Z_c$ are also equivalent because $\tilde{P}$ has full rank.
\end{proof}

			\subsubsection{Non-invariance}

\begin{conj}
Let $F \in \tilde{\mathfrak{F}}$ (respectively $F \in \tilde{\mathfrak{F}} \setminus \mathcal{L}$). The particular (adjacent, $F$, proportional), (cumulative, $F$, complete) and (cumulative, $F$, proportional) (respectively (adjacent, $F$, complete)) models are invariant only under the reverse permutation.
\end{conj}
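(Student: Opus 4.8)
Since Corollary~\ref{cum_adj_Zc_Zp} already establishes invariance under the reverse permutation $\tilde{\sigma}$, the plan is to prove the converse: no permutation $\sigma\notin\{\mathrm{id},\tilde{\sigma}\}$ can leave these models invariant. First I would observe that, for a fixed triplet, the set of permutations under which the model is invariant is a \emph{subgroup} of the symmetric group $S_J$. Indeed, writing $\mathcal{D}$ for the family of conditional laws generated as $\boldsymbol{\beta}$ varies, the permuted model $(r,F,Z)_{\sigma}$ generates $\sigma^{-1}\!\cdot\mathcal{D}$, so invariance under $\sigma$ is exactly the statement that $\sigma$ stabilises $\mathcal{D}$ as a subset of the simplex; stabilisers are subgroups. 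As this subgroup already contains the order-two group $\{\mathrm{id},\tilde{\sigma}\}$, it suffices to exhibit, for one representative of each nontrivial coset, a failure of equivalence, and the conjecture follows.

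To detect non-equivalence I would fix a single continuous covariate $x\in\mathbb{R}$ and compare the two families of fitted laws $x\mapsto\boldsymbol{\pi}(x;\boldsymbol{\beta})$. Equivalence forces, for each $\boldsymbol{\beta}$, a reparametrization $h(\boldsymbol{\beta})$ under which all category probabilities coincide as functions of $x$; the idea is to push these identities to their tail limits $x\to\pm\infty$, where each factor $F(\eta_j)$ tends to $0$ or $1$. For the cumulative proportional model this is already decisive: in the case $J=3$, matching the probability of the top category first forces equality of the slope and of the highest threshold, after which the transposition of the two lower categories imposes the functional identity
\[ F(a+t)+F(b+t)=F(c+t),\qquad t=\delta x\in\mathbb{R}, \]
whose two sides tend to $2$ and $1$ as $t\to+\infty$ --- a contradiction valid for \emph{every} $F\in\mathfrak{F}$. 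The same tail mechanism, applied to the appropriate partial sums, is expected to rule out every non-reverse permutation for the cumulative proportional, cumulative complete (which on $\mathcal{X}=\mathbb{R}$ is constrained to parallel predictors, hence behaves like the proportional one) and adjacent proportional models.

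The genuinely delicate case is \emph{adjacent complete}, and this is precisely why the hypothesis is strengthened to $F\in\tilde{\mathfrak{F}}\setminus\mathcal{L}$: for the logistic cdf the adjacent model is canonical and invariant under \emph{all} permutations (Corollary~\ref{canonical_model}), so any argument must exploit non-logisticity. Here, permuting two non-reverse categories replaces one adjacent ratio by a combination of the form
\[ \frac{F(\eta_1)\,F(\eta_2)}{F(\eta_1)\,F(\eta_2)+\{1-F(\eta_1)\}\{1-F(\eta_2)\}}, \]
and invariance would require this to again be $F$ evaluated at an affine function of $(\eta_1,\eta_2)$. For the logistic law, where $F/(1-F)=\exp$, the combination collapses to $F(\eta_1+\eta_2)$, which is exactly the coincidence producing full invariance; the task is therefore to prove the \emph{rigidity} statement that no other symmetric continuous cdf satisfies this relation --- a functional-equation characterisation of the logistic distribution.

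The main obstacle, and the reason the statement is left as a conjecture, is to upgrade these necessary tail and functional-equation conditions into a true impossibility for \emph{arbitrary} reparametrizations $h$, not merely the affine ones, uniformly over all $J$, all non-reverse permutations, and all merely continuous symmetric $F$. The asymptotic analysis eliminates the natural reparametrizations, but excluding exotic ones demands a genuine rigidity theorem; for heavy-tailed laws such as the Cauchy the tail expansions degrade and the analytic tools available for analytically defined $F$ no longer apply, which is why one must in practice fall back on numerical evidence (plateaus in the ordered log-likelihood, in the spirit of Figure~\ref{ref_perm}) rather than a closed proof.
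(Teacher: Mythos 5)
The statement you are addressing is left as a \emph{conjecture} in the paper: the authors prove only the positive direction (invariance under the reverse permutation, Corollary \ref{cum_adj_Zc_Zp}) and support the ``only'' part purely empirically, by counting the $J!/2!$ plateaus in the ordered log-likelihoods of Figures \ref{adj_perm} and \ref{cum_perm} on a benchmark dataset. Your proposal therefore goes beyond the paper rather than diverging from it, and the pieces you make concrete are sound: the set of permutations under which a fixed triplet is invariant is indeed a stabiliser, hence a subgroup of $S_J$ containing $\{\mathrm{id},\tilde{\sigma}\}$, which legitimately reduces the verification to one representative per nontrivial coset; the tail identity $F(a+t)+F(b+t)=F(c+t)$ does yield a genuine contradiction for the $J=3$ cumulative proportional transposition, for any cdf supported on all of $\mathbb{R}$; and you correctly isolate why the logistic cdf must be excluded in the adjacent complete case --- the collapse of the composed odds ratio to $F(\eta_1+\eta_2)$ is exactly the mechanism behind Corollary \ref{canonical_model}, so the missing ingredient is a rigidity theorem characterising the logistic distribution by that functional equation. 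What remains open is precisely what you say remains open: extending the tail argument uniformly over all $J$ and all non-reverse permutations, excluding arbitrary (non-affine) reparametrizations $h$, and proving the logistic characterisation for merely continuous symmetric $F$. Since the paper itself offers no proof, your reduction of the conjecture to these identified sub-problems, together with the same numerical fallback, is an acceptable and in fact more informative treatment; just note that your claim that the complete cumulative design degenerates to parallel predictors holds only when $\mathcal{X}=\mathbb{R}$, not for bounded covariate domains (see Section \ref{compatibility}), so that case cannot simply be folded into the proportional one in general.
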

\noindent We want to show that adjacent and cumulative models are not invariant under other permutations than the reverse permutation. Figure \ref{adj_perm} (respectively \ref{cum_perm}) investigates the case of (adjacent, $F$, complete) and (adjacent, $F$, proportional) models (respectively (cumulative, $F$, complete) and (cumulative, $F$, proportional) models) for all the $J!=24$ permutations and for three different cdfs. We obtain $J!/2!=12$ plateaus when $F$ is symmetric, as expected. Each plateau corresponds to a particular permutation and its associated reverse permutation. The particular (adjacent, logistic, complete) model is equivalent to the canonical (reference, logistic, complete) model (see equivalence of Agresti \eqref{eq_agresti}) and thus is invariant under all permutations according to corollary \ref{canonical_model} (see the blue line in figure \ref{adj_perm}a.). Let us note that log-likelihood has diverged for some cumulative models because of non-positive probabilities (see section \ref{compatibility} for more details) and thus are not represented in figure \ref{cum_perm}.

	\begin{figure}[!h]
	\centering
	\caption{Ordered log-likelihood of a. (adjacent, $F$, complete$)_{\sigma}$ and b. (adjacent, $F$, proportional$)_{\sigma}$ models for all permutations $\sigma$ and for logistic (blue), Cauchy (green) and Gumbel min (red) cdfs.}\label{adj_perm}
    	\includegraphics[width=0.48\textwidth]{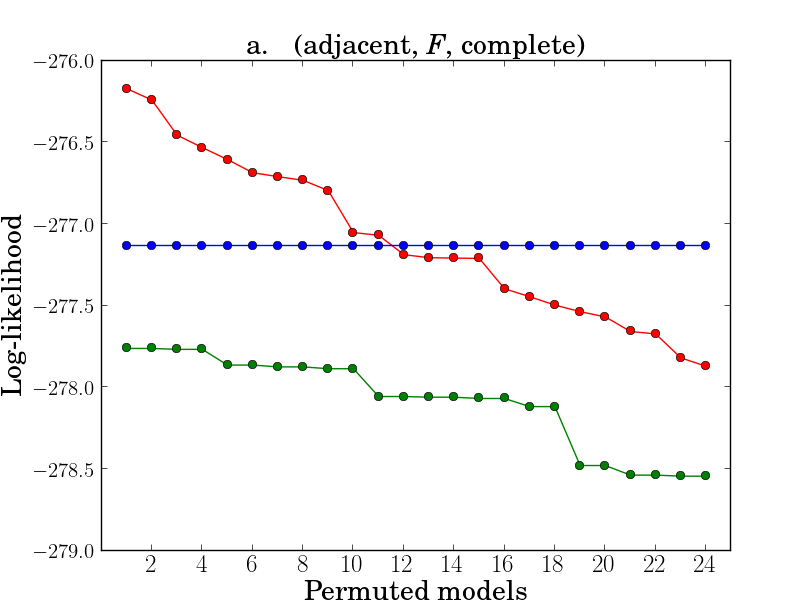}
    	\includegraphics[width=0.48\textwidth]{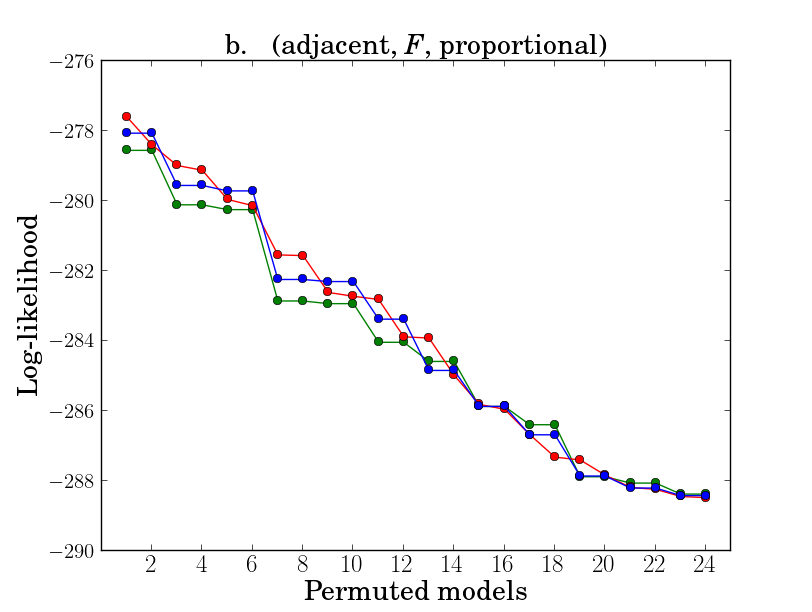}
    	\end{figure}

   	\begin{figure}[!h]
   	\centering
    	\caption{Ordered log-likelihood of a. (cumulative, $F$, complete$)_{\sigma}$ and b. (cumulative, $F$, proportional$)_{\sigma}$ models for all permutations $\sigma$ and for logistic (blue), Cauchy (green) and Gumbel min (red) cdfs.}\label{cum_perm}
    	\includegraphics[width=0.48\textwidth]{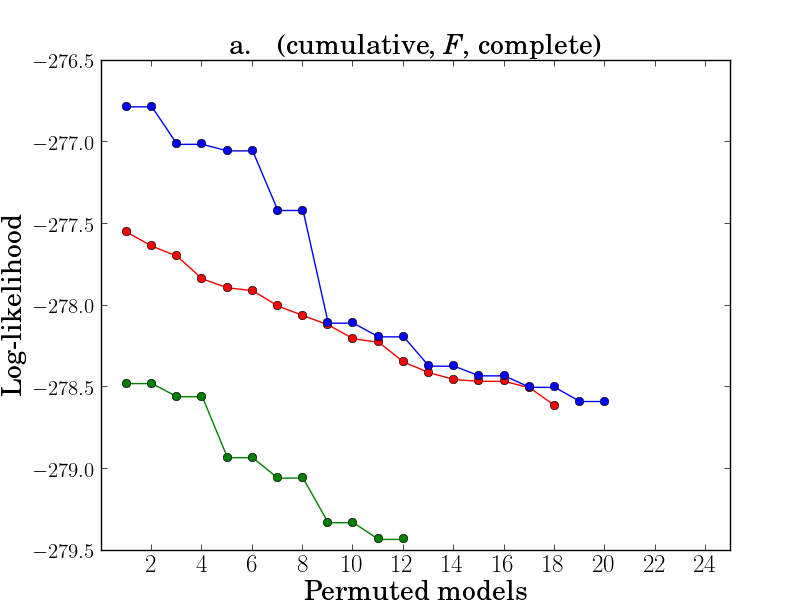}
    	\includegraphics[width=0.48\textwidth]{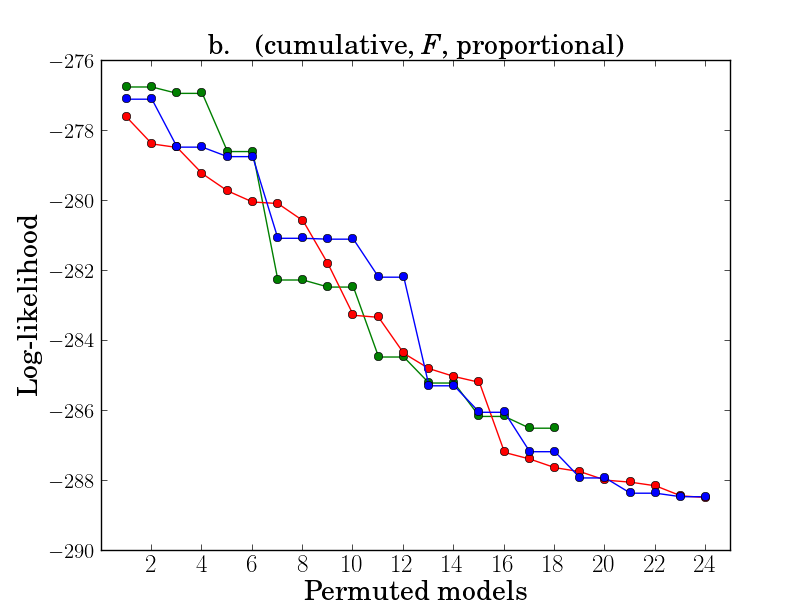}
    	\end{figure} 

			\subsubsection{Reversible models}
			A stochastic process is defined as time-reversible if it is invariant under the reversal of the time scale. Assume that in our context the order between categories plays the role of the time scale. By analogy with stochastic processes, we state the following definition:
\begin{Def}
A model for ordinal data is reversible if $r_j(\boldsymbol{\pi}_{\tilde{\sigma}}) = 1- r_{\tilde{\sigma}(j+1)}(\boldsymbol{\pi})$ for the reverse permutation $\tilde{\sigma}$ (equation \eqref{complement}).
\end{Def}

\noindent This entails a reversal of the cdf $F$ (e.g. the reversal of a Gumbel min cdf is a Gumbel max cdf) and design matrix $Z$. In the case of symmetric cdf $F$ and complete or proportional design matrix, a reversible models is invariant under the reverse permutation (see corollary \ref{cum_adj_Zc_Zp}). Let us note that (adjacent, logistic, complete) model, being invariant under all permutations, is inappropriate for ordinal data. More generally, the subfamily of models \{(adjacent, logistic, $Z); \; Z \in \mathfrak{Z}$\}, being stable under all permutations (corollaries \ref{ref=adj} and \ref{canonic}), is inappropriate for ordinal data.

Adjacent and cumulative models are thus reversible (see corollary \ref{cum_adj_stability}) while sequential models are not. This distinction between these two categories of models for ordinal data is critical for a clear understanding of the invariance properties.

		\subsection{Invariances of sequential models}
The reversibiliy property distinguishes sequential models from adjacent and cumulative models because equality (\ref{complement}) is no longer valid for sequential models. The reverse permutation changes the structure of a sequential model. Nevertheless we have the following property:
\begin{prop}
Let $\tilde{\tau}$ be the transposition of the last two categories and $A_{\tilde{\tau}}$ be the square matrix of dimension $J-1$
\[ A_{\tilde{\tau}} = \begin{pmatrix}
1 & & & \\
 & \ddots & & \\
 & & 1 & \\
 & &   & -1
\end{pmatrix}.
\]
Then we have
\[ (\mbox{sequential, }F, Z)_{\tilde{\tau}} = (\mbox{sequential, }F, A_{\tilde{\tau}}Z),\]
for any $F \in \tilde{\mathfrak{F}}$ and any design matrix $Z \in \mathfrak{Z}$.
\end{prop}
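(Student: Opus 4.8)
The plan is to mirror the template used for the adjacent and cumulative reverse-permutation property: start from the \emph{permuted} model, rewrite the permuted sequential ratios in terms of the original ones, and then absorb the resulting sign change into a linear reparametrization of the predictor. First I would compute the action of $\tilde{\tau}$ on the probabilities. Since $\tilde{\tau}$ fixes $1,\ldots,J-2$ and exchanges $J-1$ and $J$, the permuted truncated vector is $\boldsymbol{\pi}_{\tilde{\tau}}=(\pi_1,\ldots,\pi_{J-2},\pi_J)$. I would then evaluate the sequential ratio $r_j(\boldsymbol{\pi}_{\tilde{\tau}})$. The decisive observation is that for $j\le J-2$ the denominator is the tail sum over the set $\{j,\ldots,J\}$, which is left invariant by $\tilde{\tau}$, and the numerator $\pi_j$ is fixed, so $r_j(\boldsymbol{\pi}_{\tilde{\tau}})=r_j(\boldsymbol{\pi})$; only the last component is altered, giving
\[ r_{J-1}(\boldsymbol{\pi}_{\tilde{\tau}})=\frac{\pi_J}{\pi_J+\pi_{J-1}}=1-r_{J-1}(\boldsymbol{\pi}). \]

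Second, I would assume $\boldsymbol{Y}|\boldsymbol{X}=\boldsymbol{x}$ follows the permuted model $(\mbox{sequential},F,Z)_{\tilde{\tau}}$, that is $r_j(\boldsymbol{\pi}_{\tilde{\tau}})=F(\eta_j)$ for all $j$, with $\boldsymbol{\eta}=Z\boldsymbol{\beta}$. Substituting the two relations above yields $r_j(\boldsymbol{\pi})=F(\eta_j)$ for $j\le J-2$ and $r_{J-1}(\boldsymbol{\pi})=1-F(\eta_{J-1})$. At this point I would invoke the symmetry of $F$: for $F\in\tilde{\mathfrak{F}}$ one has $1-F(w)=F(-w)$, so the last equation becomes $r_{J-1}(\boldsymbol{\pi})=F(-\eta_{J-1})$. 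Collecting the $J-1$ equations shows that $\boldsymbol{\pi}$ satisfies an unpermuted sequential model whose predictor is $A_{\tilde{\tau}}\boldsymbol{\eta}$ (the identity on the first $J-2$ coordinates and a sign-flip on the last). Since $\boldsymbol{\eta}=Z\boldsymbol{\beta}$, this predictor equals $A_{\tilde{\tau}}Z\boldsymbol{\beta}$ with the \emph{same} $\boldsymbol{\beta}$, so $\boldsymbol{Y}|\boldsymbol{X}=\boldsymbol{x}$ follows $(\mbox{sequential},F,A_{\tilde{\tau}}Z)$. The reparametrization being the identity on $\boldsymbol{\beta}$, this is an equality of models, not merely an equivalence.

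The conceptually delicate step, and the reason the statement is confined to the transposition of the last two categories rather than an arbitrary one, is the localization noted in the first step: the nested tail denominators of the sequential ratio are invariant under $\tilde{\tau}$ for every index below $J-1$, so the permutation perturbs exactly one ratio. This is precisely what breaks down for a general permutation, which would scramble several tail sets at once and thereby destroy the sequential structure; it is also what makes the single symmetry identity $1-F(\cdot)=F(-\cdot)$ enough to keep the \emph{same} $F$ on both sides, in contrast with the reverse-permutation property where every ratio is complemented and one is forced to pass to the reversed cdf $\tilde{F}$. I expect no genuine computational difficulty beyond careful bookkeeping of the truncated versus full probability vectors; the only point that must be checked, and which the argument makes transparent, is that the hypothesis $F\in\tilde{\mathfrak{F}}$ is truly needed, since without symmetry the last equation would read $r_{J-1}(\boldsymbol{\pi})=\tilde{F}(-\eta_{J-1})$ and the cdf would change.
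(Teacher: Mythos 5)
Your proposal is correct and follows essentially the same route as the paper's proof: isolate the last sequential ratio, observe that $\tilde{\tau}$ leaves the first $J-2$ ratios unchanged and complements the last one, and use the symmetry $1-F(w)=F(-w)$ to absorb the complementation into a sign flip of the last predictor, i.e. $A_{\tilde{\tau}}Z$ with the same $\boldsymbol{\beta}$. The only difference is presentational: you make explicit the localization argument (invariance of the tail sums for $j\le J-2$) that the paper leaves implicit.
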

\begin{proof}
Assume that the distribution of $\boldsymbol{Y}|\boldsymbol{X}=\boldsymbol{x}$ is defined by the transposed (sequential, $F$, $Z)_{\tilde{\tau}}$ model with a symmetric cdf $F$, i.e.
\[ \frac{\pi_{\tilde{\tau}(j)}}{\pi_{\tilde{\tau}(j)}+ \ldots + \pi_{\tilde{\tau}(J-1)} + \pi_{\tilde{\tau}(J)} } = F(\eta_j) , \]
for $j \in \{1,\ldots,J-1\}$. The last equation can be isolated
\[ \left\{
\begin{array}{ll}
\displaystyle \frac{\pi_j}{\pi_j + \ldots + \pi_J} = F(\eta_j) & \forall j \in \{1,\ldots,J-2\}, \\
\displaystyle \frac{\pi_J}{\pi_J + \pi_{J-1}} = F(\eta_{J-1}).
\end{array}
\right.
\]
This is equivalent to
\[ \left\{
\begin{array}{ll}
\displaystyle \frac{\pi_j}{\pi_j + \ldots + \pi_J} = F(\eta_j), & \forall j \in \{1,\ldots,J-2\}, \\
\displaystyle \frac{\pi_{J-1}}{\pi_{J-1} + \pi_J} = \tilde{F}(-\eta_{J-1}).
\end{array}
\right.
\]
Since $F$ is symmetric (i.e. $F=\tilde{F}$), then $\boldsymbol{Y}|\boldsymbol{X}=\boldsymbol{x}$ follows the (sequential, $F$, $A_{\tilde{\tau}}Z$) model.
\end{proof}
\noindent Noting that $A_{\tilde{\tau}}$ is invertible with $A_{\tilde{\tau}}^{-1}=A_{\tilde{\tau}}$, we get:
\begin{coro}
The subfamily of models \{(sequential, $F$, $Z$)$; \; F \in \tilde{\mathfrak{F}}$, $Z \in \mathfrak{Z}$\} is stable under the transposition $\tilde{\tau}$ of the last two categories.
\end{coro}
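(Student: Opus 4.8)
The plan is to mirror the proofs of the preceding invariance properties: assume $\boldsymbol{Y}|\boldsymbol{X}=\boldsymbol{x}$ is governed by the permuted model $(\mbox{sequential},F,Z)_{\tilde{\tau}}$, write out its $J-1$ defining equations $r_j(\boldsymbol{\pi}_{\tilde{\tau}})=F(\eta_j)$, and massage them until they coincide with the defining equations of an unpermuted sequential model carrying a modified linear predictor.

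First I would expand the permuted sequential ratio $r_j(\boldsymbol{\pi}_{\tilde{\tau}})=\pi_{\tilde{\tau}(j)}/\sum_{k=j}^{J}\pi_{\tilde{\tau}(k)}$. Because $\tilde{\tau}$ fixes every index strictly below $J-1$ and merely interchanges $J-1$ and $J$, for each $j\le J-2$ the tail $\{\tilde{\tau}(j),\ldots,\tilde{\tau}(J)\}$ is the same set $\{j,\ldots,J\}$ in a different order, so the denominator $\sum_{k=j}^{J}\pi_{\tilde{\tau}(k)}$ equals $\pi_j+\ldots+\pi_J$ and the numerator equals $\pi_j$; hence the first $J-2$ equations are left completely unchanged. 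Only the terminal equation is altered, becoming $\pi_J/(\pi_{J-1}+\pi_J)=F(\eta_{J-1})$. Recognizing this localization is the crux of the argument and the step I expect to be the main obstacle: the nested conditioning behind the sequential ratio means that every conditional event below the last one sees the tail only through its total mass, which is invariant under reordering of that tail. This is precisely why the statement concerns the transposition of the last two categories rather than an arbitrary transposition, and why no analogue of the reversibility identity \eqref{complement} is available here.

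Next I would rewrite the single altered equation through the complement identity $\pi_J/(\pi_{J-1}+\pi_J)=1-\pi_{J-1}/(\pi_{J-1}+\pi_J)=1-r_{J-1}(\boldsymbol{\pi})$, giving $r_{J-1}(\boldsymbol{\pi})=1-F(\eta_{J-1})$. Invoking the symmetry hypothesis $F\in\tilde{\mathfrak{F}}$, that is $F=\tilde{F}$ with $\tilde{F}(w)=1-F(-w)$, turns the right-hand side into $F(-\eta_{J-1})$, so that $r_{J-1}(\boldsymbol{\pi})=F(-\eta_{J-1})$. The collected system then reads $r_j(\boldsymbol{\pi})=F(\eta_j)$ for $j\le J-2$ together with $r_{J-1}(\boldsymbol{\pi})=F(-\eta_{J-1})$, which is exactly an unpermuted sequential model with linear predictor $\boldsymbol{\eta}'=A_{\tilde{\tau}}\boldsymbol{\eta}$, since $A_{\tilde{\tau}}$ preserves the first $J-2$ components and negates the last. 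Substituting $\boldsymbol{\eta}=Z\boldsymbol{\beta}$ yields $\boldsymbol{\eta}'=A_{\tilde{\tau}}Z\boldsymbol{\beta}$ with the same parameter $\boldsymbol{\beta}$, so $\boldsymbol{Y}|\boldsymbol{X}=\boldsymbol{x}$ follows $(\mbox{sequential},F,A_{\tilde{\tau}}Z)$, which establishes the claimed equality of models.
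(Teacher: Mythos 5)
Your derivation is correct, but it is essentially the paper's own proof of the \emph{preceding Proposition}, not of the Corollary you were asked to prove. What you establish is the equality of models $(\mbox{sequential},F,Z)_{\tilde{\tau}} = (\mbox{sequential},F,A_{\tilde{\tau}}Z)$: the localization of the perturbation to the last equation (the tails $\{\tilde{\tau}(j),\ldots,\tilde{\tau}(J)\}$ being setwise unchanged for $j\le J-2$), the complement identity for the terminal equation, and the use of the symmetry $F=\tilde{F}$ all match the paper's argument for that Proposition. The statement at hand, however, is the stability of the subfamily $\mathfrak{M} = \{(\mbox{sequential},F,Z);\; F \in \tilde{\mathfrak{F}},\; Z \in \mathfrak{Z}\}$ under $\tilde{\tau}$, i.e.\ the set equality $\tilde{\tau}(\mathfrak{M}) = \mathfrak{M}$, and your write-up ends at ``the claimed equality of models'' without returning to that claim.

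The missing step is short but it is the entire content of the corollary. Your equality yields one inclusion, $\tilde{\tau}(\mathfrak{M}) \subseteq \mathfrak{M}$, provided you also observe that $A_{\tilde{\tau}}Z$ is again an admissible design matrix (it is: $A_{\tilde{\tau}}$ has full rank, so $A_{\tilde{\tau}}Z$ stays in $\mathfrak{Z}$, and $F$ is unchanged so it stays in $\tilde{\mathfrak{F}}$). For the reverse inclusion you must exhibit every $(\mbox{sequential},F,Z)$ with $Z\in\mathfrak{Z}$ as the $\tilde{\tau}$-image of some member of $\mathfrak{M}$; this is exactly where the paper's one-line justification enters, namely that $A_{\tilde{\tau}}$ is an involution, $A_{\tilde{\tau}}^{-1}=A_{\tilde{\tau}}$, so that $(\mbox{sequential},F,A_{\tilde{\tau}}Z)_{\tilde{\tau}} = (\mbox{sequential},F,A_{\tilde{\tau}}A_{\tilde{\tau}}Z) = (\mbox{sequential},F,Z)$. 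Hence $Z \mapsto A_{\tilde{\tau}}Z$ is a bijection of $\mathfrak{Z}$ onto itself and $\tilde{\tau}(\mathfrak{M})=\mathfrak{M}$. Add this remark and your proof is complete; as written, it proves the Proposition and leaves the Corollary unaddressed.
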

\noindent Noting that matrix $A_{\tilde{\tau}}$ has full rank, we get:
\begin{coro}\label{seq_tau_comp}
Let $F \in \tilde{\mathfrak{F}}$. The particular (sequential, $F$, complete) model is invariant under the transposition $\tilde{\tau}$ of the last two categories.
\end{coro}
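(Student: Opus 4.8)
The plan is to deduce the corollary directly from the preceding Property, reusing verbatim the scheme already applied in Corollaries \ref{ref_invariance} and \ref{cum_adj_Zc_Zp}. By definition, saying that the (sequential, $F$, complete) model is invariant under $\tilde{\tau}$ means that the permuted model $(\mbox{sequential, }F, Z_c)_{\tilde{\tau}}$ is equivalent to $(\mbox{sequential, }F, Z_c)$. The Property already supplies the exact identity
\[ (\mbox{sequential, }F, Z_c)_{\tilde{\tau}} = (\mbox{sequential, }F, A_{\tilde{\tau}}Z_c), \]
valid for any symmetric cdf $F \in \tilde{\mathfrak{F}}$, so the entire task collapses to comparing the two design matrices $Z_c$ and $A_{\tilde{\tau}}Z_c$.

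First I would recall that the complete design $Z_c$ carries no constraint, i.e. it represents the full space $\mathcal{C} = \mathbb{R}^{(J-1)(1+p)}$. The matrix $A_{\tilde{\tau}}$ is the $(J-1)$-dimensional diagonal matrix $\mathrm{diag}(1,\ldots,1,-1)$, hence invertible and of full rank. Left-multiplication by an invertible matrix is a bijective linear reparametrization of the linear predictor $\boldsymbol{\eta}$, so $Z_c$ and $A_{\tilde{\tau}}Z_c$ have the same column space and represent the same (unconstrained) space $\mathcal{C}$; they are therefore equivalent design matrices within $\mathfrak{Z}$. Consequently $(\mbox{sequential, }F, A_{\tilde{\tau}}Z_c)$ is equivalent to $(\mbox{sequential, }F, Z_c)$.

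Chaining the two facts yields that $(\mbox{sequential, }F, Z_c)_{\tilde{\tau}}$ is equivalent to $(\mbox{sequential, }F, Z_c)$, which is precisely the asserted invariance under $\tilde{\tau}$. The only hypothesis that must be carried through is the symmetry $F \in \tilde{\mathfrak{F}}$, and it is needed solely to invoke the Property, whose proof rewrites $F(\eta_{J-1})$ as $\tilde{F}(-\eta_{J-1})$ and then uses $F=\tilde{F}$. I do not expect a genuine obstacle here: the argument is essentially immediate once the Property is granted, and the single point worth a line of justification is the equivalence of $Z_c$ and $A_{\tilde{\tau}}Z_c$, which rests entirely on $A_{\tilde{\tau}}$ having full rank, exactly as $P_{\sigma_J}$ and $-\tilde{P}$ did in the earlier corollaries.
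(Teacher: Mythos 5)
Your proof is correct and takes essentially the same route as the paper: the corollary is deduced from the preceding Property together with the single observation that $A_{\tilde{\tau}}$ has full rank, so that the unconstrained design matrices $Z_c$ and $A_{\tilde{\tau}}Z_c$ are equivalent, exactly as with $P_{\sigma_J}$ and $-\tilde{P}$ in the earlier corollaries. The paper compresses this into one line (``Noting that matrix $A_{\tilde{\tau}}$ has full rank, we get\ldots''); your elaboration, including the remark that symmetry of $F$ is needed only to invoke the Property, matches its intent.
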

However the design matrices $A_{\tilde{\tau}}Z_p$ and $Z_p$ are not equivalent. Therefore, the (sequential, $F$, proportional) model is not invariant under $\tilde{\tau}$, even if $F$ is symmetric.

			\subsubsection{Non-invariance}
\begin{conj}
Let $F \in \tilde{\mathfrak{F}}$. The particular (sequential, $F$, complete) model is invariant only under the transposition $\tilde{\tau}$ of the last two categories.
\end{conj}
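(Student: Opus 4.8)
The plan is to translate invariance under a permutation $\sigma$ into an equality of \emph{families of curves} and then extract a parametrization-free invariant that pins down $\sigma$. Recall from the definition of equivalence that, since $h$ is a bijection of the parameter space, invariance of (sequential, $F$, complete) under $\sigma$ holds if and only if the set of maps $\{\,\boldsymbol{x}\mapsto\boldsymbol{\pi}(\boldsymbol{x};\boldsymbol{\beta})\,\}$ produced by the model coincides with the set produced by its permuted version $(\text{sequential},F,Z_c)_{\sigma}$. Both are genuine (sequential, $F$, complete) models up to relabelling of categories, so the whole problem becomes: for which relabellings do these two sets of curves agree?

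First I would record the cascade formula. Writing $\eta_k=\alpha_k+\boldsymbol{x}^t\boldsymbol{\delta_k}$, the sequential equations $r_j(\boldsymbol{\pi})=F(\eta_j)$ invert to
\[ \pi_m = F(\eta_m)\prod_{k=1}^{m-1}\bigl(1-F(\eta_k)\bigr)\quad(m<J),\qquad \pi_J=\prod_{k=1}^{J-1}\bigl(1-F(\eta_k)\bigr). \]
More generally, for the survival set $S_i=\{i,\ldots,J\}$ one has $P(Y=i\mid Y\in S_i)=F(\eta_i)$, so $F^{-1}\bigl(P(Y=i\mid Y\in S_i)\bigr)=\eta_i$ is affine in $\boldsymbol{x}$ for \emph{every} choice of parameters. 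This suggests the key invariant: given a survival set $C$ common to both families, call $m\in C$ a \textbf{lead category of $C$} if $F^{-1}\!\bigl(P(Y=m\mid Y\in C)\bigr)$ is affine in $\boldsymbol{x}$ for every member of the family. Since the set of curves is unchanged under equivalence, the lead categories of a given $C$ are the same for a family and for any family equivalent to it.

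The core step is then to show that each survival set has a \emph{unique} lead category, and to peel categories off inductively. For $C=\{1,\ldots,J\}$ the category $m=1$ is a lead category; the claim is that no $m\ge 2$ is, because $\pi_m$ then carries the product factor $\prod_{k<m}(1-F(\eta_k))$ and $F^{-1}(\pi_m)$ cannot be affine for generic parameters. Granting uniqueness, equality of the two families forces $\sigma(1)=1$. I would then condition on $\{Y\neq 1\}$: the induced conditional models are again equivalent (sequential, $F$, complete) models on $\{2,\ldots,J\}$, so the same argument gives $\sigma(2)=2$, and so on. This peeling works as long as at least three categories remain, forcing $\sigma(j)=j$ for $j=1,\ldots,J-2$. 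At the final stage only the pair $\{J-1,J\}$ is left, where $r_{J-1}$ is a single binary split and the symmetry $F=\tilde F$ makes both orderings define the same conditional family (this is exactly the content of Corollary \ref{seq_tau_comp}). Hence $\sigma$ restricted to $\{J-1,J\}$ is unconstrained, so $\sigma$ is either the identity or $\tilde{\tau}$, which together with Corollary \ref{seq_tau_comp} yields the claim.

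The hard part will be the uniqueness of the lead category, i.e. proving that for $m\ge 2$ the function $\boldsymbol{x}\mapsto F^{-1}\bigl(F(\eta_m)\prod_{k<m}(1-F(\eta_k))\bigr)$ is non-affine for some admissible parameters, \emph{for every} symmetric $F$. For analytically tractable cdfs (logistic, normal, Gumbel) one can verify this directly by inspecting second derivatives or tail behaviour along a one-parameter path in $\boldsymbol{x}$, but a uniform argument covering all $F\in\tilde{\mathfrak{F}}$ (including heavy-tailed cases such as Cauchy) appears to require a genuine functional-equation or regularity argument ruling out affinity of such composite products. It is precisely this analytic lemma that I expect to resist a short proof, which is presumably why the statement is left as a conjecture and checked numerically, as was done for the adjacent and cumulative families.
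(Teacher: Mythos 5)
The first thing to say is that the paper itself does not prove this statement: it is stated as a Conjecture, and the only support offered is empirical --- the $J!/2!=12$ log-likelihood plateaus observed for symmetric cdfs in figure \ref{seq_perm} on the disturbed-dreams data. So there is no ``paper proof'' for your argument to match; the relevant question is whether your sketch actually closes the conjecture. It does not, and you say so yourself: everything hinges on the ``unique lead category'' lemma, i.e.\ that for $m\geq 2$ one can choose parameters making $\boldsymbol{x}\mapsto F^{-1}\bigl(F(\eta_m)\prod_{k<m}(1-F(\eta_k))\bigr)$ non-affine, for \emph{every} $F\in\tilde{\mathfrak{F}}$. The surrounding reduction is sound and is genuinely more than the paper offers: invariance under $\sigma$ is indeed equivalent to equality of the two curve sets (because the reparametrization $h$ in the definition of equivalence is a bijection of $\tilde{\mathcal{C}}$), lead categories are a property of the curve set and hence preserved, and the peeling step is legitimate because for the complete design the parameters $(\alpha_k,\boldsymbol{\delta_k})_{k\geq 2}$ are variation-independent of $(\alpha_1,\boldsymbol{\delta_1})$, so conditioning on $\{Y\geq 2\}$ yields the full (sequential, $F$, complete) family on the remaining categories.

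On the gap itself, two remarks. First, it is less formidable than you suggest when $\mathcal{X}$ is unbounded: take $\boldsymbol{\delta_1}\neq\boldsymbol{0}$ and $\boldsymbol{\delta_k}=\boldsymbol{0}$ for $k\geq 2$, so that $\pi_m(\boldsymbol{x})=a\bigl(1-F(\eta_1(\boldsymbol{x}))\bigr)$ with a constant $a\in(0,1)$; affinity of $F^{-1}(\pi_m)$ would force the functional identity $a\bigl(1-F(t)\bigr)=F(\lambda t+\mu)$ for all $t\in\mathbb{R}$, and letting $t\to-\infty$ gives $a=1$ or a constant right-hand side, a contradiction in either case. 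No symmetry and no smoothness of $F$ is needed for this, only strict monotonicity and the limits at $\pm\infty$. Second, and this is the real obstruction, when $\mathcal{X}$ is bounded the limit argument is unavailable, and the identity above only needs to hold on an interval; since $\tilde{\mathfrak{F}}$ contains arbitrary continuous strictly increasing symmetric cdfs, one can plausibly engineer an $F$ satisfying it locally, so the conjecture as stated (quantified over all of $\tilde{\mathfrak{F}}$ with no restriction on $\mathcal{X}$) may even be false in corner cases. You should therefore present your argument as a conditional result --- the conjecture holds for any $F$ and $\mathcal{X}$ for which the non-affinity lemma holds, e.g.\ unbounded $\mathcal{X}$ --- rather than as a proof; as it stands the key analytic step is asserted, not established, which is exactly the status the paper leaves it in.
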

The (sequential, $F$, complete) models are invariant under the transposition of the last two categories when $F$ is symmetric (Corollary \ref{seq_tau_comp}). But are they still invariant under other permutations? Figure \ref{seq_perm} investigates the case of (sequential, $F$, complete) models for all the $J!=24$ permutations and for different cdfs. We obtain $J!/2!=12$ plateaus only for symmetrical cdfs, as expected. Each plateau corresponds to a particular permutation and its associated transposition of the last two categories.

	\begin{figure}[h!]
	\centering
	\caption{Ordered log-likelihood of (sequential, $F$, complete$)_{\sigma}$ models for all permutations $\sigma$ and for logistic (blue), Cauchy (green) and Gumbel min (red) cdfs.}\label{seq_perm}
    	\includegraphics[width=0.48\textwidth]{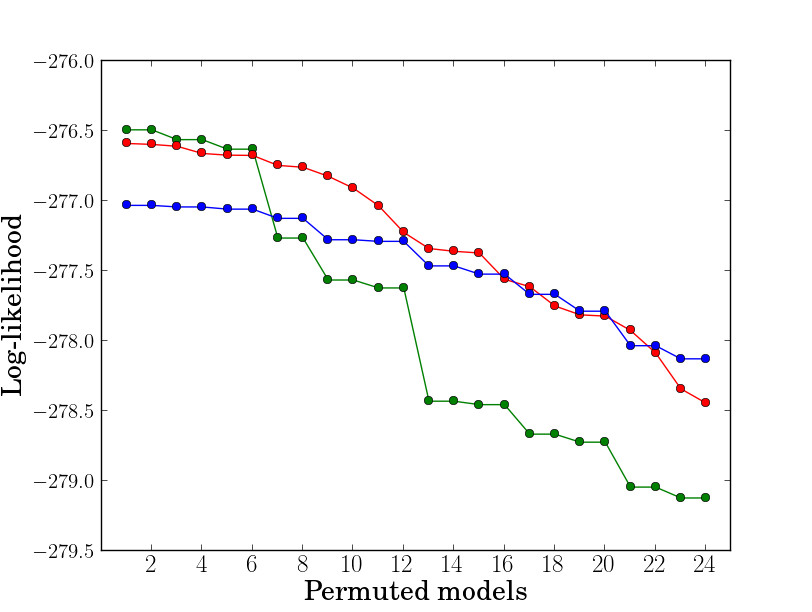}
    	\end{figure}

\vspace*{0.5cm}

In general, sequential models are not invariant under a permutation of categories except the particular case of symmetric cdf and complete design matrix, where sequential models are invariant under the transposition of the last two categories.

\section{\label{applications} Applications}

	\subsection{Supervised classification}
	Linear, quadratic and logistic discriminant analyses are three classical methods used for supervised classification. The logistic discriminant analysis often outperforms other classical methods \citep{classification2000}. In our context, we prefer to consider the logistic regression rather than the logistic discriminant analysis, these two methods being very similar \citep{hastie2005}. In this subsection, we propose a family of classifiers that includes the logistic regression as a particular case. We then compare the classification error rates on three benchmark datasets (available on UCI), using $10$-fold cross validation. The logistic regression is fully specified by the (reference, logistic, complete) triplet. 
	
	We propose to use the entire set of reference models with a complete design, which all have the same number of parameters. We can change the cdfs $F$ to obtain a better fit. For the application, we used ten different cdf $F$:
\[ \mathfrak{F}_0 = \{ \mbox{normal, Laplace, Gumbel min, Gumbel max, Student}(1), \ldots, \mbox{Student}(6) \}, \]
from which ten classifiers were built
\[ \mathfrak{C}^* = \{ (\mbox{reference}, F, \mbox{complete}); \; F \in \mathfrak{F}_0 \}. \]
All these classifiers were compared with the logistic regression, using $10$-fold cross validation. For each classifier, the mean error rate was computed on the ten sub-samples and compared with the logistic regression error rate (represented in blue in figures \ref{thyroid}, \ref{vehicle} and \ref{pages_blocks}). The impact of changing the cdf can be seen (the corresponding minimal error rate is represented in green).

	In section \ref{invariance}, we saw that (reference, $F$, complete) models, with $F \neq$ logistic, do not seem to be invariant under transpositions of the reference category. This means that changing the reference category potentially changes the fit. Therefore, we propose to extend the set of classifiers $\mathfrak{C}^*$ to obtain
\[ \mathfrak{C} = \{ (\mbox{reference}, F, \mbox{complete})_{\tau} ; \; F \in \mathfrak{F}_0, \; \tau \in \mathcal{T}_J \}, \]
where $\mathcal{T}_J$ contains all transpositions of the reference category $J$. Finally, the set $ \mathfrak{C}$ contains exactly $ 10 \times J$ classifiers. All these classifiers were then compared with the logistic regression. The impact of changing the reference category can be seen (the corresponding minimal error rate is represented in red). The three following original datasets were extracted from the UCI machine learning repository and the datasets already partitioned by means of a 10-fold cross validation procedure were extracted from the KEEL dataset repository. They contain respectively $3$, $4$ and $5$ classes.

		\paragraph*{Thyroid}
		This dataset is one of the several thyroid databases available in the UCI repository. The task is to detect if a given patient is normal (1) or suffers from hyperthyroidism (2) or hypothyroidism (3). This dataset contains $n=7200$ individuals and all 21 explanatory variables are quantitative.
		
		For the (reference, logistic, complete) model, the mean error rate of misclassification (in blue) was $6.12\%$. Using all the classifiers of $\mathfrak{C}^*$, the best classifier was the (reference, Student(3), complete) model with a misclassification mean error rate (in green) of $2.32\%$ (see figure \ref{thyroid} a.). Finally, using all the classifiers of $\mathfrak{C}$, the best classifier was the (reference, Student(2), complete$)_{\tau}$ model (where $\tau(J)=2$) with a misclassification mean error rate (in red) of $1.61\%$. The gain appeared to be mainly due to the change in cdf $F$ (see figure \ref{thyroid} b.).

	\begin{figure}[!h]
	\centering
	\caption{\label{thyroid} Error rates for the classifiers of a. $\mathfrak{C}^*$ and b. $\mathfrak{C}$ on the thyroid dataset.}
    	\includegraphics[width=0.49\textwidth]{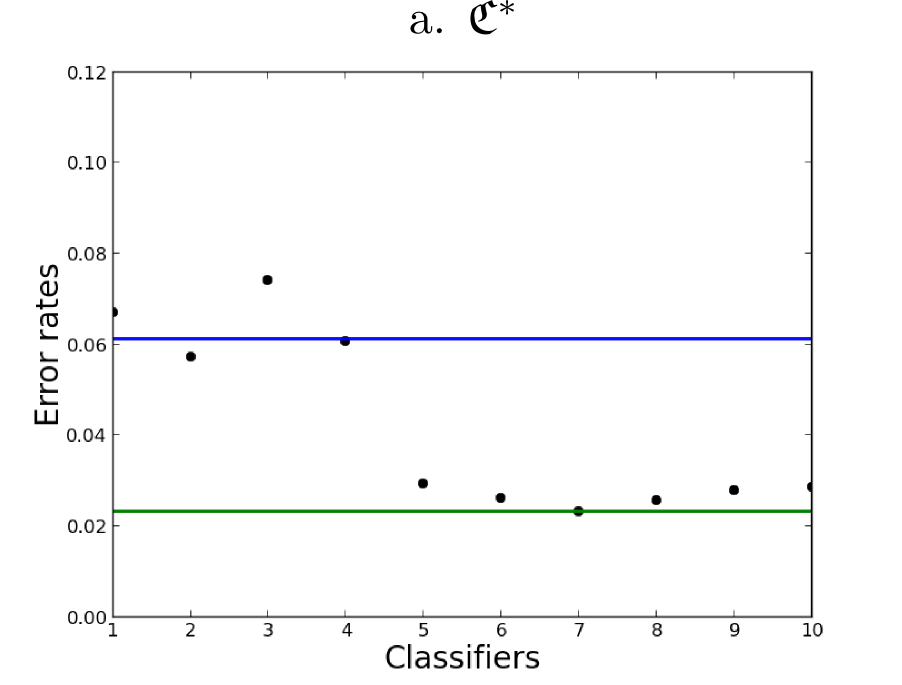}
    	\includegraphics[width=0.49\textwidth]{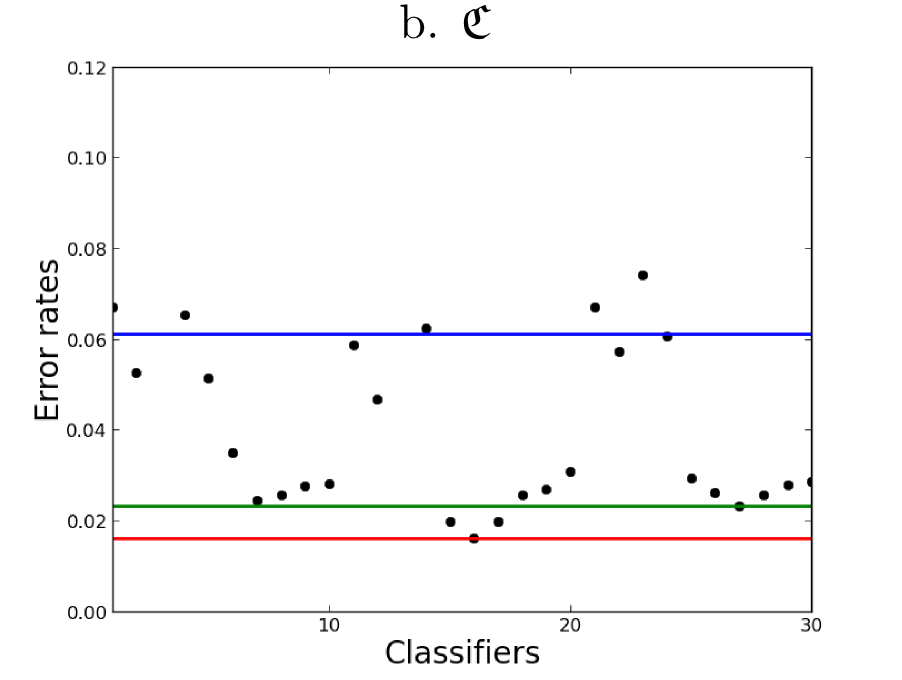}
    	\end{figure}		

		\paragraph*{Vehicle}
		The purpose is to classify a given silhouette as one of four types of vehicle, using a set of features extracted from the silhouette. The vehicle may be viewed from one of many different angles. The four types of vehicle are: bus (1), opel (2), saab (3) and van (4). This dataset contains $n=846$ instances and all $18$ explanatory variables are quantitative.
		
		For the (reference, logistic, complete) model, the misclassification mean error rate (in blue) was $19.03\%$. All classifiers of $\mathfrak{C}^*$ or $\mathfrak{C}$ gave higher error rate (see figure \ref{vehicle}).

    	\begin{figure}[!h]
	\centering
    	\caption{\label{vehicle} Error rates for the classifiers of a. $\mathfrak{C}^*$ and b. $\mathfrak{C}$ on the vehicle dataset.}
    	\includegraphics[width=0.49\textwidth]{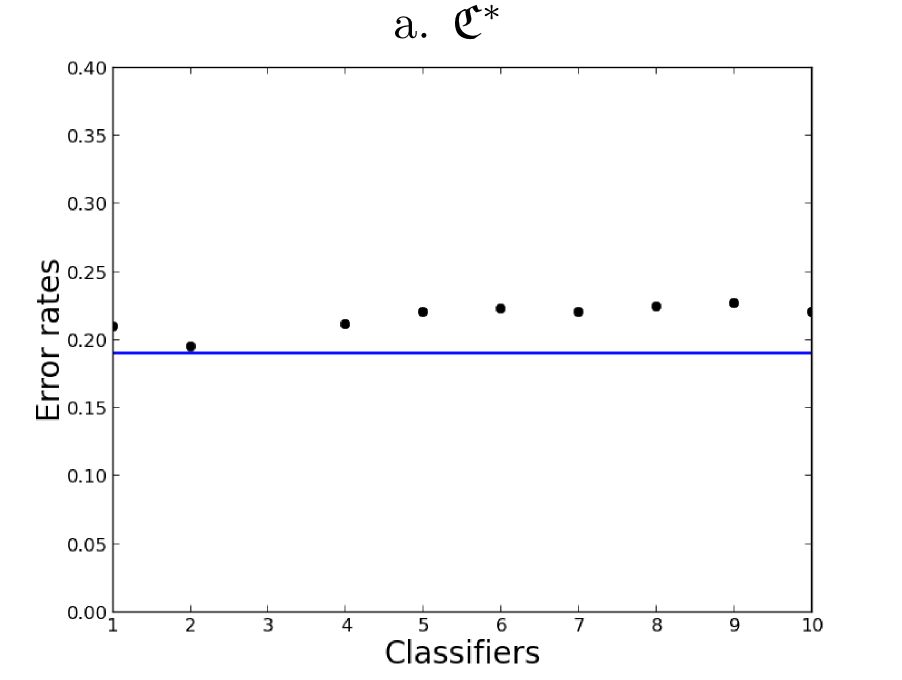}
    	\includegraphics[width=0.49\textwidth]{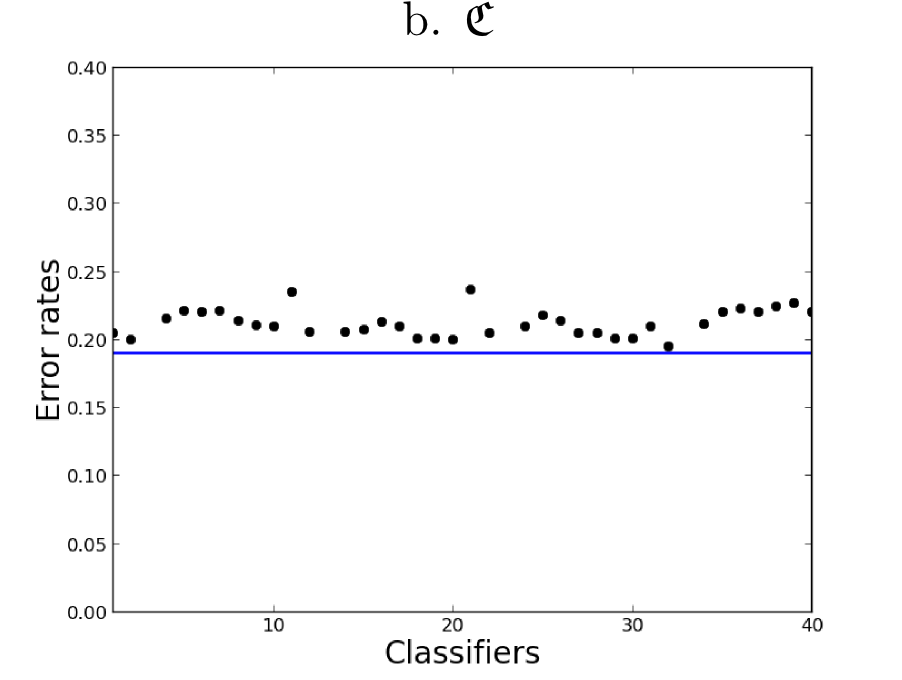}
    	\end{figure}

		\paragraph*{Pages blocks}
		The problem consists in classifying all the blocks of page layout in a document detected by a segmentation process. This is an essential step in document analysis to separate text from graphic areas. The five classes are: text (1), horizontal line (2), picture (3), vertical line (4) and graphic (5). The $n = 5473$ examples were extracted from 54 distinct documents. Each observation concerns one block. All 10 explanatory variables are quantitative.
		
		For the (reference, logistic, complete) model, the misclassification mean error rate (in blue) was $5.55\%$. Using all the classifiers of $\mathfrak{C}^*$, the best classifier was the (reference, Student(3), complete) model with a misclassification mean error rate (in green) of $3.67\%$ (see figure \ref{pages_blocks} a.). Finally, using all the classifiers of $\mathfrak{C}$, the best classifier was the (reference, Student(1), complete$)_{\tau}$ model (where $\tau(J)=1$) with a misclassification mean error rate (in red) of $2.94\%$ (see figure \ref{pages_blocks} b.).

	\begin{figure}[!h]
	\centering
    	\caption{\label{pages_blocks} Error rates for the classifiers of a. $\mathfrak{C}^*$ and b. $\mathfrak{C}$ on the pages-blocks dataset.}
    	\includegraphics[width=0.49\textwidth]{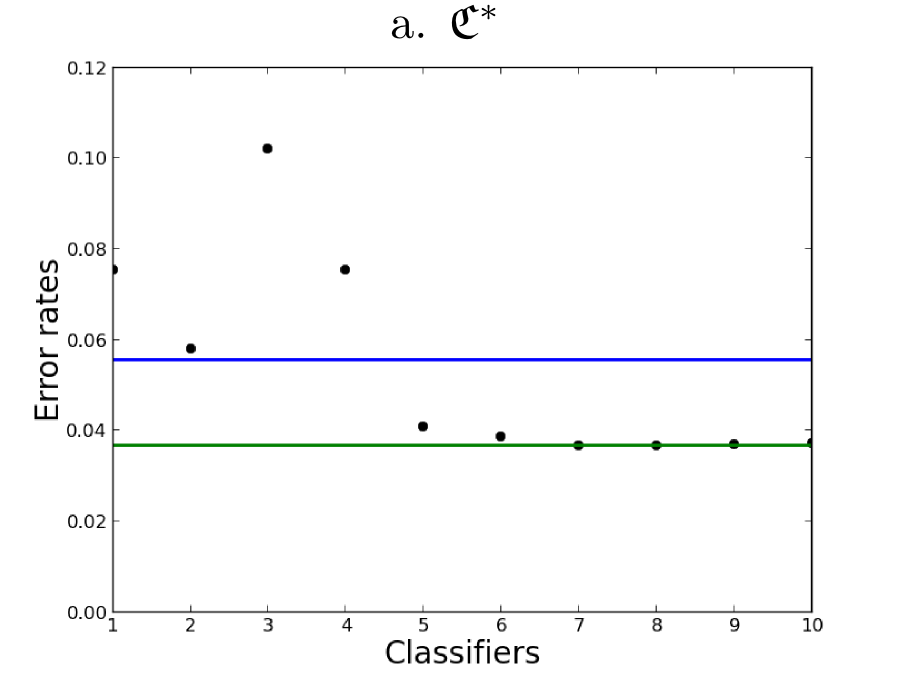}
    	\includegraphics[width=0.49\textwidth]{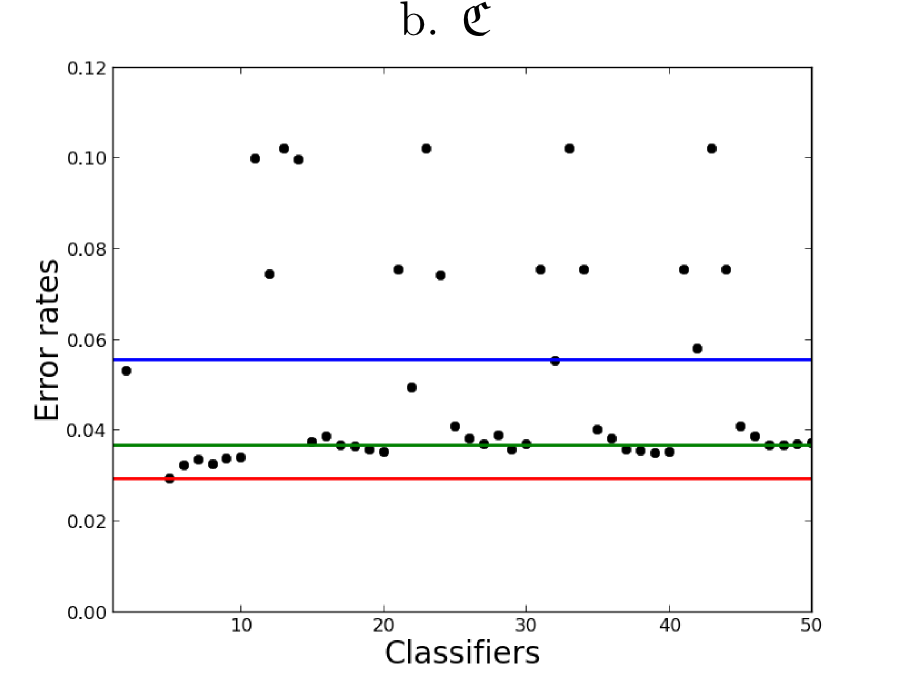}
    	\end{figure}	
    	

The Gumbel min cdf gave the worst classifiers. The Normal, Laplace and Gumbel max cdfs were comparable to the logistic cdf. Finally, Student distributions outperformed the other cdfs, likely because of their heavier tails.

	\subsection{Partially-known total ordering}
	Let us first briefly introduce the pear tree dataset. Harvested seeds of \textit{Pyrus spinosa} were sown and planted in January 2001 in a nursery located near Aix-en Provence, southeastern France. In winter 2001, the first annual shoots of the trunk of 50 one-year-old individuals were described by node. The presence of an immediate axillary shoot was noted at each successive node. Immediate shoots were classified in four categories according to length and transformation or not of the apex into spine (i.e. definite growth or not). The final dataset was thus constituted of 50 bivariate sequences of cumulative length 3285 combining a categorical variable $Y$ (type of axillary production selected from among latent bud (l), unspiny short shoot (u), unspiny long shoot (U), spiny short shoot (s) and spiny long shoot (S)) with an interval-scaled variable $X$ (internode length); see figure \ref{pear_tree_legende}.
	
	\begin{figure}[h!]
	\centering
	\caption{Axillary production of pear tree.}\label{pear_tree_legende}
    	\includegraphics[width=0.37\textwidth]{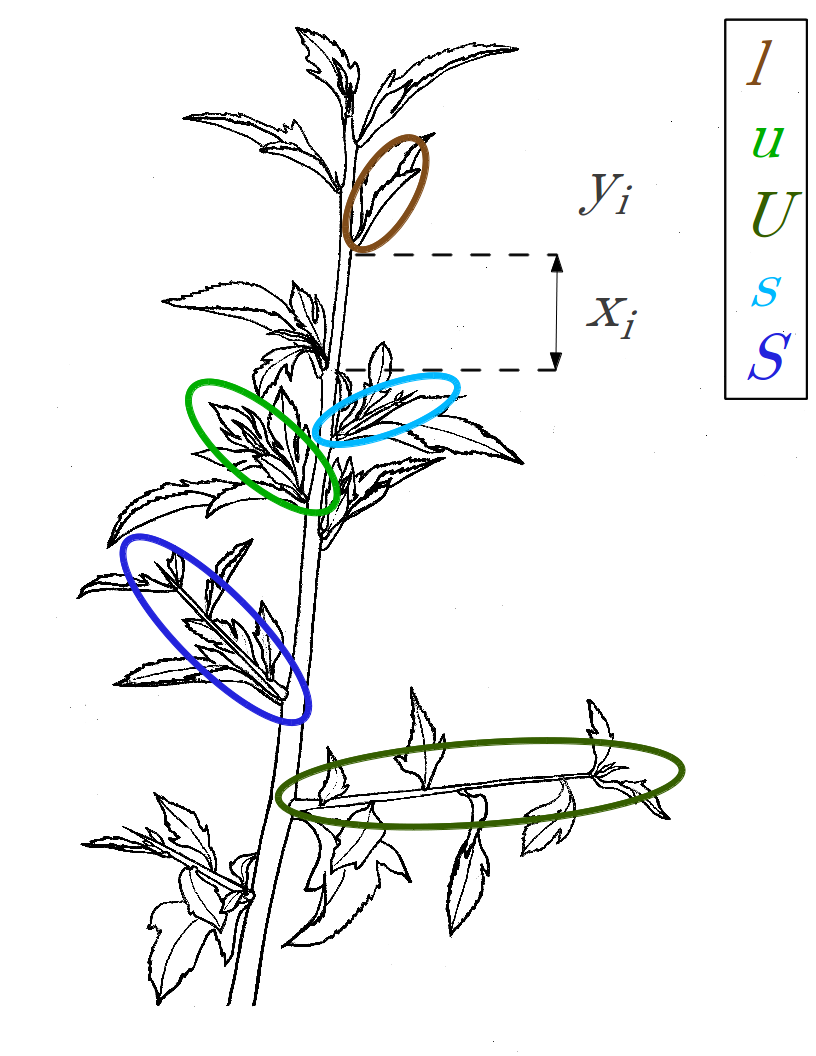}
    	\end{figure}	
	
	We sought to explain the type of axillary production as a function of the internode length, using only partial information about category ordering. In fact, the three categories $l$, $u$ and $U$ (respectively $l$, $s$ and $S$) are ordered. But the two mechanisms of elongation and transformation into spine are difficult to compare. Thus the order among the five categories was only partially known and can be summarized by the Hasse diagram in figure \ref{hasse_diagram}. However, total ordering among the five categories was assumed and we attempted to recover it. We therefore tested all permutations of the five categories such that $l < u < U$ and $l < s < S$ (i.e. only $4!/2!2!=6$ permutations). Since axillary production may be interpreted as a sequential mechanism, we use the sequential ratio.

	\begin{figure}[h!]
	\centering
	\caption{Hasse diagram of order relationships $l < u < U$ and $l < s < S$.}\label{hasse_diagram}
    	\includegraphics[width=0.2\textwidth]{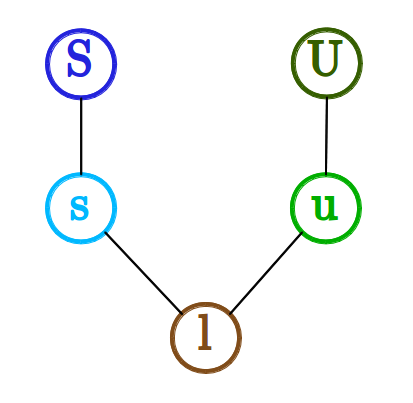}
    	\end{figure}	

The design matrix was first selected using BIC rather than AIC since we sought to explain axillary production rather than predict it. We compared the (sequential, logistic, complete$)_{\sigma}$ and (sequential, logistic, proportional$)_{\sigma}$ models for the six permutations $\sigma$: \{l, u, s, U, S\}, \{l, u, s, S, U\}, \{l, u, U, s, S\}, \{l, s, u, S, U\}, \{l, s, u, U, S\} and \{l, s, S, u, U\}. The complete design matrix was selected in all cases. We compared the six permuted (sequential, logistic, complete$)_{\sigma}$ models using the log-likelihood as criterion (see figure \ref{permuted_seq}). The third permutation $\sigma^*$ was the best, but the corresponding log-likelihood was very similar to the first two. Since models 1 and 2 (respectively 4 and 5) had exactly the same log-likelihood (illustrating Corollary \ref{seq_tau_comp}), they could not be differentiated. To differentiate all the permuted models we used a non-symmetric cdf $F$ (such as Gumbel min or Gumbel max), because Corollary \ref{seq_tau_comp} of invariance is no longer valid. The best result was obtained with the Gumbel max cdf, summarized in figure \ref{permuted_seq}. The third permutation $\sigma^*$ was still the best: \{l, u, U, s, S\}. Furthermore, a huge difference appeared between the first three permuted models and the last three. Therefore, the unspiny short shoot (u) seems to occur before the spiny short shoot (s).

	\begin{figure}[h!]
	\centering
	\caption{Log-likelihood of the (sequential, $F$, complete$)_{\sigma}$ models for the six permutations $\sigma$ for cdfs logistic (blue), Cauchy (green) and Gumbel max (magenta).}\label{permuted_seq}
    	\includegraphics[width=0.49\textwidth]{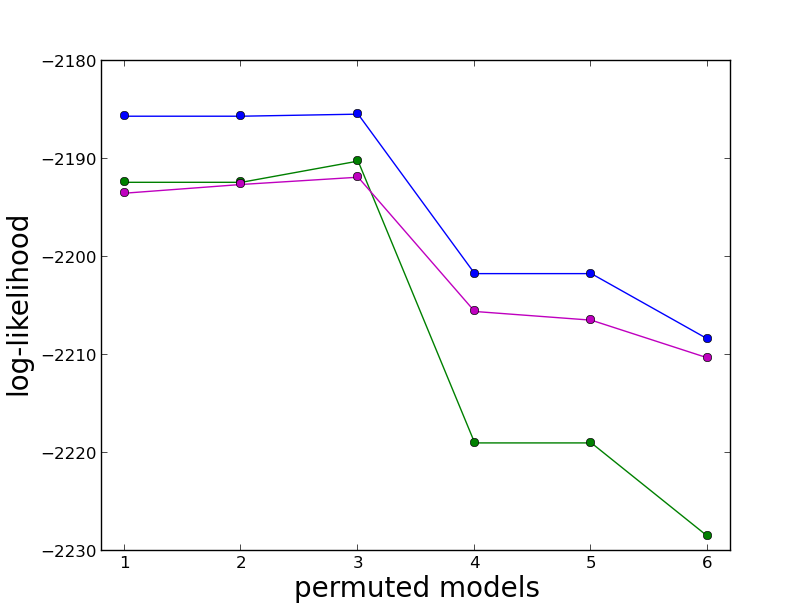}
    	\end{figure}

\section{\label{discussion} Concluding remarks}
The study of invariance properties showed that the systematic use of the logistic cdf is not justified. On the one hand, alternative cdfs allowed us to define the family of reference models for nominal data, for which the choice of the reference category is important. On the other hand, adjacent models are not appropriate for ordinal data with the logistic cdf.

For ordinal data, choosing the appropriate ratio is not obvious since adjacent, cumulative and sequential models have different properties and shortcomings. Adjacent models cannot be interpreted using latent variables. Cumulative models are not always defined (see section \ref{compatibility}). Finally, sequential models are not reversible. This might be explained by the different ordering interpretations. Sequential models correspond to non-reversible process ordering, while adjacent and cumulative models correspond to reversible scale ordering. 

The study of invariance properties allowed us to classify each $(r, F, Z)$ triplet as an ordinal or a nominal model. But which model is appropriate when categories are only partially ordered ? \citet{zhang} introduced the partitioned conditional model for partially-ordered set. They combine the multinomial logit model (for nominal data) with the odds proportional logit model (for ordinal data). More generally, we propose to combine any nominal model with any ordinal model \citep{peyhardi2013iwsm}. The unification proposed here help the specification of GLMs for partially ordered response variables.

\bibliographystyle{apalike}
\bibliography{biblio_phD}

\pagebreak

\appendix

\section*{Details on Fisher's scoring algorithm}\label{appendix_dpi_dr}
Below are details on computation of the Jacobian matrix $\frac{\partial \boldsymbol{\pi}}{\partial r}$ for four different ratios.
	\subsection*{Reference}
	For the reference ratio we have for $j=1,\ldots,J-1$
\begin{equation}
\pi_j = \frac{r_j}{1-r_j} \pi_J.	\label{ref_ratio}
\end{equation}
Summing on $j$ from $1$ to $J-1$ we obtain
\[ \pi_J = \frac{1}{1+ \sum_{j=1}^{J-1} \frac{r_j}{1-r_j} } . \]
The derivative of the product \eqref{ref_ratio} with respect to $r_i$ is
\begin{equation}
\dfdx{\pi_j}{r_i} = \dfdx{}{r_i} \left( \frac{r_j}{1-r_j} \right) \pi_J + \frac{r_j}{1-r_j} \dfdx{\pi_J}{r_i}.	\label{diff_ref}
\end{equation}
For the term of the sum part we obtain
\[ 
\dfdx{}{r_i} \left( \frac{r_j}{1-r_j} \right) = \left\{ 
\begin{array}{ll}
\displaystyle \frac{1}{(1-r_i)^2} & \mbox{if} \; i=j, \\
\displaystyle 0 & \mbox{otherwise}.
\end{array}
\right.
\]
For the second term of the sum we obtain
\[ \dfdx{\pi_J}{r_i} = - \frac{1}{(1-r_i)^2} \pi_J^2 .\]
Then \eqref{diff_ref} becomes
\[ 
\dfdx{\pi_j}{r_i} = \left\{ 
\begin{array}{ll}
\displaystyle  \frac{1}{r_i(1-r_i)} \left[ \frac{r_i}{1-r_i}\pi_J - \left( \frac{r_i}{1-r_i}\pi_J \right)^2 \right]   & \mbox{if} \; i=j, \\
\displaystyle  - \frac{1}{r_i(1-r_i)} \; \frac{r_i}{1-r_i} \; \frac{r_j}{1-r_j}\pi_J^2  & \mbox{otherwise}.
\end{array}
\right.
\]
Using \eqref{ref_ratio} again we obtain
\[ 
\dfdx{\pi_j}{r_i} = \frac{1}{r_i(1-r_i)} \left\{ 
\begin{array}{ll}
\displaystyle  \pi_i(1-\pi_i)  & \mbox{if} \; i=j, \\
\displaystyle  -\pi_i \pi_j  & \mbox{otherwise}.
\end{array}
\right.
\]
Finally we have
\[ \dfdx{\pi_j}{r_i} = \frac{\mbox{Cov}(Y_i,Y_j)}{F(\eta_i)[1-F(\eta_i)]}, \]
for row $i$ and column $j$ of the Jacobian matrix.

	\subsection*{Adjacent}
	For the adjacent ratio we have for $j=1,\ldots,J-1$
\[ \pi_j = \frac{r_j}{1-r_j} \pi_{j+1}.\]
and thus 
\begin{equation}
\pi_j = \left( \prod_{k=j}^{J-1}\frac{r_k}{1-r_k} \right) \pi_{J}.	\label{adj_ratio}
\end{equation}
Summing on $j$ from $1$ to $J-1$ we obtain
\[ \pi_J = \frac{1}{1+ \sum_{j=1}^{J-1} \prod_{k=j}^{J-1}\frac{r_k}{1-r_k} } . \]
The derivative of the product \eqref{adj_ratio} with respect to $r_i$ is
\begin{equation}
\dfdx{\pi_j}{r_i} = \dfdx{}{r_i} \left( \prod_{k=j}^{J-1}\frac{r_k}{1-r_k} \right) \pi_J + \left( \prod_{k=j}^{J-1}\frac{r_k}{1-r_k}\right) \dfdx{\pi_J}{r_i}.	\label{diff_adj}
\end{equation}
For the first term of the sum we obtain
\[ 
\dfdx{}{r_i} \left( \prod_{k=j}^{J-1}\frac{r_k}{1-r_k} \right) = \left\{ 
\begin{array}{ll}
\displaystyle \frac{1}{r_i(1-r_i)} \left( \prod_{k=j}^{J-1}\frac{r_k}{1-r_k} \right) & \mbox{if} \; i \geq j, \\
\displaystyle 0 & \mbox{otherwise}.
\end{array}
\right.
\]
For the second term of the sum we obtain
\[ \dfdx{\pi_J}{r_i} = - \frac{1}{r_i(1-r_i)} \left(\sum_{k=1}^i \prod_{k=j}^{J-1}\frac{r_k}{1-r_k} \right) \pi_J^2 .\]
Using \eqref{adj_ratio} it becomes
\[ \dfdx{\pi_J}{r_i} = - \frac{\pi_J}{r_i(1-r_i)} \sum_{k=1}^i \pi_k .\]
Then \eqref{diff_adj} becomes
\[ 
\dfdx{\pi_j}{r_i} = \left\{ 
\begin{array}{ll}
\displaystyle  \frac{1}{r_i(1-r_i)} \left( \pi_j - \pi_j \sum_{k=1}^i \pi_k \right)   & \mbox{if} \; i \geq j, \\
\displaystyle  - \frac{1}{r_i(1-r_i)} \pi_j \sum_{k=1}^i \pi_k  & \mbox{otherwise}.
\end{array}
\right.
\]
Finally we have
\[ 
\dfdx{\pi_j}{r_i} = \frac{1}{F(\eta_i)[1-F(\eta_i)]} \left\{ 
\begin{array}{ll}
\displaystyle  \pi_j(1-\gamma_i)  & \mbox{if} \; i \geq j, \\
\displaystyle  -\pi_j \gamma_i  & \mbox{otherwise},
\end{array}
\right.
\]
for row $i$ and column $j$ of the Jacobian matrix, where $\gamma_i=P(Y \leq i)=\sum_{k=1}^i \pi_k$.

	\subsection*{Cumulative}
	For the cumulative ratio we have for $j=1,\ldots,J-1$
\[ \pi_j = r_j - r_{j-1}, \]
with the convention $r_0=0$. Hence we obtain directly
\[ \frac{\partial \pi}{\partial r} = \begin{pmatrix}
1 & -1 & &   \\
  & 1  & \ddots &   \\
 &  & \ddots & -1 \\
 & &  & 1
\end{pmatrix}.
\]

	\subsection*{Sequential}
	For the sequential ratio we have for $j=1,\ldots,J-1$
\[ \pi_j = r_j \prod_{k=1}^{j-1} (1-r_k), \]
with the convention $\prod_{k=1}^{0} (1-r_k) =1$. Hence we obtain directly
\[\frac{\partial \pi_j}{\partial r_i} = \left\{
  \begin{array}{ll}
      \displaystyle  \prod_{k=1}^{j-1} \lbrace 1- F(\eta_{k}) \rbrace & \mbox{if } i=j, \\
      \displaystyle  -  F(\eta_j) \prod_{k=1, k\neq i}^{j-1} \lbrace 1- F(\eta_{k}) \rbrace & \mbox{if } i<j,\\   
        0 & \mbox{otherwise},
    \end{array}
\right. \]		
for row $i$ and column $j$ of the Jacobian matrix.	

\end{document}